\documentclass{lmcs}
\pdfoutput=1
\usepackage[utf8]{inputenc}

\usepackage{lastpage}
\lmcsdoi{17}{1}{4}
\lmcsheading{}{\pageref{LastPage}}{}{}%
{Sep.~10,~2019}{Jan.~22,~2021}{}

\usepackage{graphicx} 
\usepackage[linesnumbered,ruled,vlined]{algorithm2e}
\usepackage{amssymb}
\usepackage{amsmath}
\usepackage{threeparttable}


\newcommand{\lpmln}{LP\textsuperscript{MLN}{ }}
\newcommand{\lpmlnend}{LP\textsuperscript{MLN}}
\newcommand{\aspwc}{ASP\textsuperscript{wc}{ }}
\newcommand{\aspwcend}{ASP\textsuperscript{wc}}

\newcommand{\lglred}[2]{\left(\overline{{#1}_{#2}}\right)^{#2}}

\newcommand{\ase}[1]{SE^A(#1)}
\newcommand{\lse}[1]{SE^L(#1)}
\newcommand{\sse}[1]{SE^S(#1)}
\newcommand{\lue}[1]{UE^L(#1)}
\newcommand{\cse}[1]{SE^\times(#1)}

\newcommand{\assm}[1]{SM^A(#1)}
\newcommand{\lsm}[1]{SM^L(#1)}
\newcommand{\ssm}[1]{SM^S(#1)}
\newcommand{\psm}[1]{SM^P(#1)}
\newcommand{\csm}[1]{SM^\times(#1)}

\newcommand{\lpodtrans}[1]{\tau^\times(#1)}
\newcommand{\lpodtranssoft}[1]{\tau^\times_2(#1)}
\newcommand{\aspwctrans}[1]{\tau^c(#1)}
\newcommand{\aspwctranshard}[1]{\tau^c_h(#1)}
\newcommand{\aspwctranssoft}[1]{\tau^c_s(#1)}

\newcommand{\olit}[1]{lit\left(\overline{#1}\right)}
\newcommand{\oat}[1]{at\left(\overline{#1}\right)}

\newcommand{\ptrans}[2]{\Gamma(#1) \wedge \Delta(#2)}
\newcommand{\deltar}[1]{\delta_2(#1) \rightarrow \delta_1(#1)}

\begin{document}

\keywords{Computing methodologies - Logic programming and answer set programming, \texorpdfstring{\lpmlnend}{LPMLN}, Strong Equivalence, Computational Complexity}

\title[On the Strong Equivalences for \texorpdfstring{\lpmln}{LPMLN }Programs]{On the Strong Equivalences for \texorpdfstring{\lpmln}{LPMLN }Programs}
\titlecomment{
{\lsuper*}This paper is a thoughtful extension of \cite{Wang2019LPMLNSE}. 
Besides full proofs, this paper also adds three new parts: 
(1) investigating two different notions of strong equivalences for \lpmlnend; (2) analyzing the computational complexities of deciding strong equivalences; 
and (3) studying the relationships among the notions of strong equivalences for \lpmln and two logic formalisms: ASP with weak constraints and ordered disjunctions. 
}

\author[B. Wang]{Bin Wang}

\author[J. Shen]{Jun Shen\texorpdfstring{\textsuperscript{\textdagger}}{}}
\address{School of Computer Science and Engineering, Southeast University, Nanjing 211189, China}
\email{\{kse.wang,junshen,shutao\_zhang,seu\_zzz\}@seu.edu.cn}

\author[S. Zhang]{Shutao Zhang}

\author[Z. Zhang]{Zhizheng Zhang\texorpdfstring{\textsuperscript{\textdagger}}{}}
\thanks{\textsuperscript{\textdagger}{} Corresponding Author}

\begin{abstract}
\lpmln is a powerful knowledge representation and reasoning tool that combines the non-monotonic reasoning ability of Answer Set Programming (ASP) and the probabilistic reasoning ability of Markov Logic Networks (MLN). 
In this paper, we study the strong equivalence for \lpmln programs, 
which is an important tool for program rewriting and theoretical investigations in the field of logic programming.
First of all, we present the notion of p-strong equivalence for \lpmln and present a model-theoretical characterization for the notion. 
Then, we investigate several properties of the p-strong equivalence from the following four aspects. 
Firstly, we investigate two relaxed notions of the p-strong equivalence according to practical scenarios of program rewriting, 
and present corresponding characterizations for the notions. 
Secondly, we analyze the computational complexities of deciding strong equivalences for \lpmln programs. 
Thirdly, we investigate the relationships among the strong equivalences of \lpmln and two extensions of ASP: ASP with weak constraints and ordered disjunctions. 
Finally, we investigate \lpmln program simplification via the p-strong equivalence and present some syntactic conditions that decide the p-strong equivalence between a single \lpmln rule and the empty program. 
The contributions of the paper are as follows. 
Firstly, all the results presented in this paper provide a better understanding of \lpmln programming, which helps us further explore the properties of \lpmlnend. 
Secondly, the relationships among the strong equivalences open a way to study the strong equivalences for some logic formalisms by translating into \lpmlnend. 
Thirdly, the program simplification can be used to enhance the implementations of the \lpmln solvers, which is expected to facilitate the applications of \lpmlnend. 
\end{abstract}
\maketitle

\section{Introduction}
\lpmln \cite{Lee2016Weighted}, a new knowledge representation and reasoning language, 
is designed to handle non-monotonic, uncertain, and inconsistent knowledge by combining the logic programming methods of Answer Set Programming (ASP) \cite{Gelfond1988theSM,Brewka2011ASP} and Markov Logic Networks (MLN) \cite{Richardson2006mln}. 
Specifically, an \lpmln program can be viewed as a weighted ASP program, 
where each ASP rule is assigned a weight denoting its certainty degree,  
and each weighted rule is allowed to be violated by a set of beliefs associated with the program. 
For example, ASP rule ``$\leftarrow a, b. $'' is a constraint denoting the facts $a$ and $b$ are contrary, 
therefore, a belief set $X=\{a, b\}$ is invalid in the context of ASP programs containing the constraint. 
By contrast, in the context of \lpmlnend, above constraint becomes a weighted rule ``$w ~:~ \leftarrow a, b. $'', where $w$ denotes the certainty degree of the constraint. 
And the set $X = \{a, b\}$ becomes a valid belief set with loss of the certainty degree $w$. 
Usually, two kinds of certainty degrees are introduced to evaluate a belief set w.r.t. an \lpmln program. 
The weight degree is evaluated by the weights of rules satisfied by a belief set, and the probability degree can be viewed as a normalized weight degree. 
Based on the probabilistic programming of \lpmlnend, several inference tasks are introduced such as computing the marginal probability distribution of beliefs and computing the most probable belief sets. 

Due to its powerful expressivity, \lpmln has been broadly studied. 
On the practical side, 
\lpmln is suitable for applications that contain uncertain and inconsistent data. 
For example, in the tasks of classifying visual objects, \lpmln rules can be used to encode soft constraints among unlabeled objects such as ``objects equipped with wheels are usually cars'' \cite{Eiter2016Exploiting}. 
On the theoretical side, recent work on \lpmln  aims at establishing the relationships among \lpmln and other logic formalisms \cite{Lee2016Weighted,Balai2016realtionship,Lee2017lpmln}, 
developing \lpmln solvers \cite{Lee2017ComputingLpmln,Wang2017ParallelLpmln,Wu2018LPMLNModels}, 
acquiring the weights of rules automatically \cite{Lee2018WeightLearning}, 
exploring the properties of \lpmln \cite{Wang2018Splitting} etc.
Although all of the above results lay the foundation for knowledge representation and reasoning via \lpmlnend, 
there are many important theoretical problems have not been investigated for \lpmlnend. 
The investigation of strong equivalences for \lpmln is one of such kinds of problems. 

The notion of strong equivalence for ASP and its several extensions have been studied extensively, 
due to the fact that it is important for program rewriting in the field of logic programming \cite{Eiter2004SimplifyingLP,Puhrer2009casting,Woltran2011Equivalence}. 
Generally speaking, two ASP programs $\Pi_1$ and $\Pi_2$ are strongly equivalent, 
iff for any ASP program $\Pi_3$, the extended programs $\Pi_1 \cup \Pi_3$ and $\Pi_2 \cup \Pi_3$ have the same stable models \cite{Lifschitz2001Strongly,Turner2001SE}. 
Therefore, a logic program $\Pi_1$ can be rewritten as one of its strong equivalents $\Pi_2$ without considering its context. 
If the program $\Pi_2$ is easier to solve, it can be used to simplify the program $\Pi_1$, 
which is useful in implementations of solvers. 
For example, an ASP rule is called \textit{redundant} if it is strongly equivalent to the empty program. 
And if the positive and negative bodies of an ASP rule have common atoms, it is a redundant rule  \cite{Osorio2001Equivalence,Inoue2004EqUpdate,Lin2005Discover}. 
Obviously, eliminating redundant rules is an effective approach to enhancing ASP solvers. 
Inspired by the works of ASP, we believe studying strong equivalences for \lpmln programs will provide a theoretical tool for further investigations of \lpmlnend. 

In this paper, we study the strong equivalences for \lpmln programs. 
Usually, the notion of strong equivalence is based on the identity between inference results of logic programs, 
therefore, we firstly define the notion of strong equivalence on stable models and their probability distributions, called p-strong equivalence. 
Specifically, two \lpmln programs $P$ and $Q$ are p-strongly equvialent, 
if for any \lpmln program $R$, the extended program $P \cup R$ and $Q \cup R$ have the same stable models, 
and for any stable models $X$ of the extended programs, 
the probability degrees $Pr(P \cup R, X)$ and $Pr(Q \cup R, X)$ of $X$ w.r.t. the extended programs are the same, i.e. $Pr(P \cup R, X) = Pr(Q \cup R, X)$. 
Moreover, we present a relaxed notion of the p-strong equivalence, called the semi-strong equivalence, 
i.e. \lpmln programs $P$ and $Q$ are semi-strongly equivalent, if for any \lpmln program $R$, the extended programs $P \cup R$ and $Q \cup R$ have the same stable models. 
Then, we present a characterization for the semi-strong and p-strong equivalences by generalizing the strong equivalence models (SE-models) of ASP, 
which serves as a basic framework for the further investigations of the strong equivalence for \lpmlnend. 
Thirdly, we attempt to investigate a kind of intermediate notion of the p-strong and semi-strong equivalence, since both of the p-strong and semi-strong equivalence are not ideal for applications. 
As we know, the p-strong equivalence requires that stable models and their probability distributions of \lpmln programs are the same, while the semi-strong equivalence only requires the same stable models. 
By contrast, the conditions of p-strong equivalence are somewhat strict, 
but the conditions of semi-strong equivalence are too simple. 
Therefore, we consider a slightly relaxed notion of the p-strong equivalence, i.e. the qualitatively strong equivalence (q-strong equivalence). 
For the q-strong equivalence, the probability distributions of stable models of the extended programs are not required to be the same, 
but they are required to be order-preserving. 
Formally speaking, two \lpmln programs $P$ and $Q$ are q-strongly equivalent, if for any \lpmln program $R$, the extended program $P \cup R$ and $Q \cup R$ have the same stable models,  and for any stable models $X$ and $Y$ of the extended programs, $Pr(P \cup R, X) \leq Pr(P \cup R, Y)$ iff $Pr(Q \cup R, X) \leq Pr(Q \cup R, Y)$. 
Unfortunately, our results show that the q-strong equivalence is equivalent to the p-strong equivalence, i.e. two \lpmln programs are q-strongly equivalent iff they are q-strongly equivalent. 
Therefore, how to relax the strict condition of p-strong equivalence is still an open problem. 

Besides the strong equivalences notions presented in this paper, 
it is worth noting that Lee and Luo also investigated the strong equivalence for \lpmln programs, 
and presented an ASP-based implementation to check the strong equivalence  \cite{Lee2019LPMLNSE}.  
Both Lee and Luo's work and the early work of this paper \cite{Wang2019LPMLNSE} are presented at the  ICLP 2019. 
Although the definitions and characterizations of strong equivalences in these two works are similar, 
Lee and Luo also present four different kinds of characterizations for the semi-strong equivalence (i.e. the structural equivalence in \cite{Lee2019LPMLNSE}), 
which provides some different understandings of the strong equivalence for \lpmln programming. 
In this paper, we present a  formal comparison between the notions and characterizations of these different strong equivalences.

Next, we investigate several properties of the p-strong equivalence from four aspects. 
Firstly, we consider two special scenarios of problems solving. 
Under the scenarios, the p-strong equivalence is not suitable, 
therefore, we introduce two relaxed notions of the p-strong equivalence, i.e. the p-strong equivalence under the soft stable model semantics of \lpmln (sp-strong equivalence) and the p-strong equivalence under factual extensions (p-uniform equivalence). 
For the newly introduced notions, we investigate their characterizations by presenting corresponding SE-models. 
Secondly, we analyze the computational complexities of deciding strong equivalences for \lpmln programs. 
It shows that deciding all of the semi-strong, p-strong, and sp-strong equivalences are co-NP-complete, 
and deciding the uniform equivalence on stable models (semi-uniform equivalence) is in $\Pi_2^p$. 
To prove the co-NP-membership of deciding the p-strong equivalence, we present a translation from \lpmln programs to propositional formulas and show that two \lpmln programs are semi-strongly equivalent iff a related propositional formula is a tautology. 
According to Lin and Chen's work on the strong equivalence of ASP \cite{Lin2005Discover}, the translation can be used to discover syntactic conditions that decide the semi-strong equivalence of some classes of \lpmln programs. 
Thirdly, we investigate the relationships among the strong equivalences of \lpmln and two important extensions of ASP, 
i.e. ASP with weak constraints \cite{Calimeri2012ASPcore} and ASP with ordered disjunction \cite{Brewka2002LPOD}. 
The relationships show that the strong equivalence of some logic formalisms can be studied by translating them into \lpmln programs and using the results of strong equivalences for \lpmlnend. 
Finally, we use the notion of p-strong equivalence to simplify \lpmln programs and enhance \lpmln solvers. 
To decide the p-strong equivalence efficiently, we present a sufficient and necessary condition to characterize the strong equivalence between a single \lpmln rule and the empty program, i.e. the redundant \lpmln rules.

\section{Preliminaries}
\label{sec:preliminaries}
In this section, we firstly review the syntax and semantics of ASP and \lpmlnend. 
Then, we review the strong equivalence for ASP programs. 

\subsection{Syntax}
An ASP program is a finite set of rules of the form 
\begin{equation}
\label{eq:asp-rule-form}
l_1 ~ \vee ~ ... ~\vee ~ l_k ~\leftarrow~ l_{k+1}, ..., ~l_m, ~not~ l_{m+1}, ...,~not ~ l_n.
\end{equation}
where $l_i$s $(1 \leq i \leq n)$ are literals, 
$\vee$ is epistemic disjunction, 
and $not$ is default negation. 
A literal is either an atom $a$ or its negation $\neg a$, where $\neg$ is classical negation. 
For an ASP rule $r$ of the form \eqref{eq:asp-rule-form}, 
the sets of literals occurred in head, positive body, and negative body of $r$ are denoted by $h(r) = \{l_i ~|~ 1 \leq i \leq k\}$, 
$b^+(r)=\{l_i ~|~ k+1 \leq i \leq m\}$, 
and $b^-(r)=\{l_i ~|~ m+1 \leq i \leq n\}$, respectively. 
By $lit(r) = h(r) \cup b^+(r) \cup b^-(r)$, we denote the set of literals occurred in a rule $r$, 
by $at(r)$, we denote the set of atoms occurred in a rule $r$, i.e. $at(r) = \{a ~|~ a \in lit(r) \text{ or } \neg a \in lit(r)\}$, 
and by $lit(\Pi) = \bigcup_{r \in \Pi} lit(r)$ and $at(\Pi) = \bigcup_{r \in \Pi} at(r)$, we denote the sets of literals and atoms occurred in an ASP program $\Pi$ respectively. 
Therefore, an ASP rule $r$ of the form (\ref{eq:asp-rule-form}) can also be abbreviated as 
\begin{equation}
\label{eq:asp-rule-form-abbr}
h(r) \leftarrow b^+(r), ~not ~ b^-(r).
\end{equation}
An ASP rule is called a \textit{fact}, if both of its positive and negative bodies are empty, and it is called a \textit{constraint}, if its head is empty. 
An ASP program is called \textit{ground}, if it contains no variables. 

\begin{exa}
	\label{ex:asp-rule}
	Consider an ASP program $\Pi$
	\begin{eqnarray}
		& a \vee b. & \label{ex:asp-rule-fact}\\
		& \leftarrow a, b.  & \label{ex:asp-rule-constr} \\
		& a \leftarrow not ~ b. & \label{ex:asp-rule-normal} 
	\end{eqnarray}
	where rule \eqref{ex:asp-rule-fact} is a fact, rule \eqref{ex:asp-rule-constr} is a constraint, and rule \eqref{ex:asp-rule-normal} is a general ASP rule. 
	Since there is no variable in the program $\Pi$, it is a ground ASP program. 
\end{exa}

An \lpmln program is a finite set of weighted ASP rules $w:r$, 
where $w$ is the weight of rule $r$, and $r$ is an ASP rule of the form \eqref{eq:asp-rule-form}. 
The weight $w$ of an \lpmln rule is either a real number or a symbol ``$\alpha$'' denoting ``infinite weight'', 
and if $w$ is a real number, the rule is called \textit{soft}, otherwise, it is called \textit{hard}. 
Note that the weight of a soft rule could be any real number including positive number, negative number, and zero, 
which will be discussed in the semantics of \lpmlnend. 
An \lpmln rule $w:r$ is also called a \textit{weighted fact} or \textit{weighted constraint}, 
if $r$ is an ASP fact or constraint, respectively. 
For an \lpmln program $P$, 
we use $\overline{P}$ to denote the set of unweighted ASP counterpart of $P$, 
i.e. $\overline{P} = \{r ~|~ w:r \in P \}$. 
By $P^s$ and $P^h$, we denote the sets of all soft rules and hard rules in $P$, respectively. 
An \lpmln program $P$ is called ground, if its unweighted ASP counterpart $\overline{P}$ is ground. 
Usually, a non-ground logic program is considered as a shorthand for the corresponding ground program, 
therefore, we only consider ground logic programs in this paper. 

\begin{exa}
	\label{ex:lpmln-rule}
	Following \lpmln program $P$ is obtained from the program $\Pi$ in Example \ref{ex:asp-rule} by assigning weights. 
	\begin{eqnarray}
		\alpha &~:~& a \vee b. \label{ex:lpmln-rule-fact}\\
		2 &~:~& \leftarrow a, b. \label{ex:lpmln-rule-constr} \\
		-1 &~:~& a \leftarrow not ~ b. \label{ex:lpmln-rule-normal} 
	\end{eqnarray}
	Similarly, rule \eqref{ex:lpmln-rule-fact} is a weighted fact, 
	rule \eqref{ex:lpmln-rule-constr} is a weighted constraint, 
	rule \eqref{ex:lpmln-rule-normal} is a general \lpmln rule, 
	and the program $P$ is a ground \lpmln program. 
	In particular, rule \eqref{ex:lpmln-rule-fact} is a hard rule, 
	and other rules of $P$ are soft rules. 
	Moreover, it is obvious that the unweighted ASP counterpart of the program $P$ is the program $\Pi$ in Example \ref{ex:asp-rule}, i.e. $\overline{P} = \Pi$.
\end{exa}

\subsection{Semantics}
A ground set $I$ of literals is called \textit{consistent}, if there are no contrary literals occurred in $I$, i.e. for any literal $l$ of $I$, $\neg l$ does not occur in $I$. 
A consistent set $I$ of literals is usually called an \textit{interpretation}. 
For an ASP rule $r$ and an interpretation $I$, the satisfaction relation is defined as follows
\begin{itemize}
	\item $I$ satisfies the positive body of $r$, denoted by $I \models b^+(r)$, if $b^+(r) \subseteq I$;
	\item $I$ satisfies the negative body of $r$, denoted by $I \models b^-(r)$, if $b^-(r) \cap I = \emptyset$;
	\item $I$ satisfies the body of $r$, denoted by $I \models b(r)$, if $I \models b^+(r)$ and $I \models b^-(r)$;
	\item $I$ satisfies the head of $r$, denoted by $I \models h(r)$, if $h(r) \cap I \neq \emptyset$; and
	\item $I$ satisfies the rule $r$, denoted by $I \models r$, if $I \models b(r)$ implies $I \models h(r)$.
\end{itemize}
For an ASP program $\Pi$, an interpretation $I$ satisfies $\Pi$, denoted by $I \models \Pi$, if $I$ satisfies all rules of $\Pi$. 
For an ASP program $\Pi$ containing no default negations, an interpretation $I$ is a \textit{stable model} of $\Pi$, 
if $I \models \Pi$ and there does not exist a proper subset $I'$ of $I$ such that $I' \models \Pi$. 
For an arbitrary ASP program $\Pi$, the \textit{Gelfond-Lifschitz reduct} (GL-reduct) $\Pi^I$ of $\Pi$ w.r.t. an interpretation $I$ is defined as 
\begin{equation}
	\Pi^I = \{ h(r) \leftarrow b^+(r).  ~|~  r \in \Pi \text{ and } b^-(r) \cap I = \emptyset  \}
\end{equation}
An interpretation $I$ is a stable model of an arbitrary ASP program $\Pi$ if $I$ is a stable model of $\Pi^I$. 
By $\assm{\Pi}$, we denote the set of all stable models of an ASP program $\Pi$. 
\begin{exa}
Continue Example \ref{ex:asp-rule}, we consider four different interpretations, 
i.e. $I_1 = \emptyset$, $I_2 = \{a\}$, $I_3 = \{b\}$, and $I_4 = \{a, b \}$. 
For the interpretations $I_1$ and $I_4$, they do not satisfy all rules of the program $\Pi$, 
i.e. $I_4$ does not satisfy rule \eqref{ex:asp-rule-constr}, 
and $I_1$ does not satisfy the other two rules of the program. 
And for the interpretation $I_2$, we have $\Pi^{I_2} = \{ a \vee b. ~ \leftarrow a, b. ~ a.  \}$. 
It is easy to check that $I_2$ is a stable model of $\Pi$. 
Similarly, one can check that $I_3$ is also a stable model of the program $\Pi$, 
therefore, we have  $\assm{\Pi} = \{I_2, I_3 \}$. 
\end{exa}

An \lpmln  rule $w:r$ is satisfied by an interpretation $I$, 
denoted by $I \models w:r$, if $I\models r$. 
An \lpmln program $P$ is satisfied by $I$, denoted by $I \models P$, if $I$ satisfies all rules in $P$.
By $P_I$, we denote the set of rules of an \lpmln program $P$ that can be satisfied by an interpretation $I$, 
called the \textit{\lpmln reduct} of  $P$ w.r.t. $I$, i.e. $P_I=\{w:r \in P ~|~ I \models w:r\}$. 
An interpretation $I$ is a stable model of an \lpmln program $P$ if $I$ is a stable model of the ASP program $\overline{P_I}$. 
By $\lsm{P}$, we denote the set of all stable models of an \lpmln program $P$. 
For an \lpmln program $P$, the weight degree $W(P)$ of $P$ is defined as
\begin{equation}
W(P) = exp\left(\sum_{w:r \in P } w\right)
\end{equation}
Note that the weight $\alpha$ of a hard rule is a symbol, therefore, the weight degree $W(P)$ should be understood as a symbolic expression. 
For example, suppose $P$ is an \lpmln program such that $P = \{\alpha : r_1; ~ \alpha : r_2; ~ 3 : r_3\}$, 
the weight degree of $P$ is $W(P) = exp(2\alpha + 3)$.
For an \lpmln program $P$ and an interpretation $I$, 
by $h(P, I)$, we denote the number of hard rules of $P$ satisfied by $I$, 
i.e. $h(P, I) = |P^h_I|$, 
the \textit{weight degree} $W(P,I)$ of $I$ w.r.t. $P$ is defined as 
\begin{equation}
\label{eq:weight-sm}
W(P,I) = W(P_I) = exp\left(\sum_{w:r \in P_I } w\right)
\end{equation}
and the \textit{probability degree} $Pr(P,I)$ of $X$ w.r.t. $P$ is defined as
\begin{equation}
\label{eq:probability-sm}
Pr(P,I) = \begin{cases}
\lim\limits_{\alpha \rightarrow \infty} \frac{W(P,I)}{\Sigma_{I'\in \lsm{P}}W(P,I')} & \text{ if } I \in \lsm{P};\\
0 & \text{ otherwise. }
\end{cases}
\end{equation}
For a literal $l$, the \textit{probability degree} $Pr(P,l)$ of $l$ w.r.t. $P$ is defined as 
\begin{equation}
\label{eq:probability-lit}
Pr(P,l) = \sum_{l \in I \text{ and }  I \in \lsm{P}} Pr(P,I)
\end{equation}
A stable model $I$ of an \lpmln program $Q$ is called a \textit{probabilistic stable model} of $Q$ if $Pr(Q,I) \neq 0$.
By $\psm{Q}$, we denote the set of all probabilistic stable models of $Q$. 
It is easy to check that $I$ is a probabilistic stable model of $Q$, iff $I$ is a stable model of $Q$ that satisfies the most number of hard rules. 
Thus, for a stable model $I$ of $Q$, the probability degree of $I$ can be reformulated as follows
\begin{equation}
\label{eq:probability-psm}
Pr(Q,I) = \begin{cases}
\frac{W(Q^s,I)}{\Sigma_{I'\in \psm{Q}}W(Q^s,I')} & \text{ if } I \in \psm{Q};\\
0 & \text{ otherwise. }
\end{cases}
\end{equation}
Based on the above definitions, there are two kinds of main inference tasks for an \lpmln program $P$ \cite{Lee2017ComputingLpmln}:  
\begin{itemize}
	\item[-] Maximum A Posteriori (MAP) inference: compute the stable models with the highest weight or probability degree of the program $P$, i.e. the most probable stable models;
	\item[-] Marginal Probability Distribution (MPD) inference: compute the probability degrees of a set of literals w.r.t. the program $P$.
\end{itemize}

\begin{exa}
\label{ex:lpmln-inference}
Continue Example \ref{ex:lpmln-rule}, it is easy to check that the program $P$ has three stable models, i.e. $\emptyset$, $\{a\}$, and $\{b\}$, which is shown in Table \ref{tab:ex-lpmln-rule-sm}. 
For the interpretation $I = \{a, b\}$, we have $\overline{P_I} = \{a \vee b. ~ a \leftarrow not ~ b. \}$ and $\lglred{P}{I} = \{a \vee b. \}$. 
Since both of $\{a\}$ and $\{b\}$ can satisfy $\lglred{P}{I}$, 
$I$ is not a stable model of the \lpmln program $P$. 
Besides, the MAP and MPD inference results can be obtained by checking Table \ref{tab:ex-lpmln-rule-sm}, i.e. $\{a\}$ and $\{b\}$ are the most probable stable models of $P$, and we have $Pr(P, a) = Pr(P, b) = 0.5$.
	\begin{table}
		\caption{Stable Models of the Program $P$ in Example \ref{ex:lpmln-rule}}
		\label{tab:ex-lpmln-rule-sm}
    \def\arraystretch{1.2} 
	\begin{tabular}{c c c c c }
		\hline
		stable model $I$ & $P_I$ & $W(P,I)$ & $h(P, I)$ & $Pr(P,I)$ \\ 
		\hline 
		$\emptyset$ & $\{2 ~:~  \leftarrow a, b. \}$  & $e^2$ & $0$ & $0$  \\ 
		$\{a\}$ & $P$ & $e^{\alpha + 1}$ & $1$ &  $0.5$  \\ 
		$\{b\}$ & $P$ & $e^{\alpha + 1}$ & $1$  & $0.5$  \\ 
		\hline 
	\end{tabular} 
	\end{table}
\end{exa}

According to the syntax of \lpmlnend, the weight of an \lpmln rule could be a symbol $\alpha$, a positive number, a negative number, or zero. 
Here, we present a brief discussion of the different kinds of weights in \lpmlnend. 
Roughly speaking, a stable model $I$ of an \lpmln program should satisfy hard rules as many as possible. 
If not, it is a non-probabilistic stable model, which means the knowledge provided by the stable model is not credible. 
For soft rules satisfied by the stable model $I$, a rule with positive weight increases the certainty degree of $I$, 
a rule with negative weight decreases the certainty degree of $I$, 
and a rule with zero weight does not affect the certainty degree of $I$, 
which can be observed from Example \ref{ex:lpmln-inference}.

\subsection{Strong Equivalence for ASP}
Two ASP programs $\Pi_1$ and $\Pi_2$ are strongly equivalent, denoted by $\Pi_1 \equiv_s \Pi_2$, 
if for any ASP program $\Pi_3$, the extended programs $\Pi_1 \cup \Pi_3$ and $\Pi_2 \cup \Pi_3$ have the same stable models \cite{Lifschitz2001Strongly}. 
As presented in \cite{Turner2001SE}, the notion of \textit{strong equivalence models} (SE-models) can be used to characterize the strong equivalence for ASP programs, 
which is defined as follows.  
\begin{defi}[SE-interpretation]
A strong equivalence interpretation (SE-interpretation) is a pair $(X, Y)$ of interpretations such that $X \subseteq Y$. 
An SE-interpretation $(X,Y)$ is called \textit{total} if $X = Y$, and \textit{non-total} if $X \subset Y$.
\end{defi}

\begin{defi}[SE-model for ASP]
\label{def:se-models}
For an ASP program $\Pi$, an SE-interpretation $(X,Y)$ is an SE-model of $\Pi$ if $X \models \Pi^Y$ and $Y \models \Pi$. 
\end{defi} 

By $\ase{\Pi}$, we denote the sets of all SE-models of an ASP program $\Pi$.
Theorem \ref{thm:se-asp} provides a characterization for strong equivalence between ASP programs, that is, the SE-model approach in ASP. 

\begin{thmC}[{\cite[Theorem 1]{Turner2001SE}}]
\label{thm:se-asp}
Two ASP programs $\Pi_1$ and $\Pi_2$ are strongly equivalent iff they have the same SE-models, i.e. $\ase{\Pi_1} = \ase{\Pi_2}$.
\end{thmC}

\begin{exa}
\label{ex:asp-se}
Consider following ASP programs $\Pi_1$
\begin{eqnarray}
	& a \vee b. & \\
	& \leftarrow a, b. &
	\end{eqnarray}
	and $\Pi_2$
	\begin{eqnarray}
	& a \leftarrow  not ~b. & \\
	& b \leftarrow not ~ a. & \\
	& \leftarrow a, b. &
	\end{eqnarray}
It is well-known that $\Pi_1$ and $\Pi_2$ are strongly equivalent ASP programs. 
By Definition \ref{def:se-models}, one can check that $\Pi_1$ and $\Pi_2$ have the same SE-models, i.e. $(\{a\}, \{a\})$ and $(\{b\}, \{b\})$.
\end{exa}

\section{Probabilistic Strong Equivalence}
\label{sec:p-se}
In this section, we investigate probabilistic strong equivalence (p-strong equivalence) between \lpmln programs. 
Firstly, we present several main concepts of equivalences for \lpmln programs including p-ordinary, p-strong, and semi-strong equivalences.
Secondly, we present a model-theoretical characterization for the semi-strong equivalence. 
Thirdly, we present a characterization for the p-strong equivalence based on the characterization of semi-strong equivalence.
Fourthly, we attempt to relax the strict conditions of p-strong equivalence by discussing the notion of q-strong equivalence. 
Finally, we present a formal comparison between the notions and characterizations of strong equivalences presented in this section and Lee and Luo's work \cite{Lee2019LPMLNSE}.

\subsection{Concepts}
Usually, the notion of strong equivalence is built on the notion of ordinary equivalence. 
Intuitively, two \lpmln programs are ordinarily equivalent if their inference results coincide on the MAP and MPD inference tasks, 
which are the most frequently used inference tasks for \lpmlnend.
Therefore, we define the notion of p-ordinary equivalence as follows. 

\begin{defi}[p-ordinary equivalence]
\label{def:lpmln-ordinary-equivalence-p}
Two \lpmln programs $P$ and $Q$ are p-ordinarily equivalent, denoted by $P \equiv_p Q$, 
if $\lsm{P} = \lsm{Q}$ and for any stable model $X \in \lsm{P}$,  we have $Pr(P, X) = Pr(Q, X)$. 
\end{defi}

By the definitions of MAP and MPD inference tasks, it is clear that two p-ordinarily equivalent \lpmln programs have the same MAP and MPD inference results.
Based on the notion of p-ordinary equivalence, we define the notion of p-strong equivalence as follows.  

\begin{defi}[p-strong equivalence]
\label{def:lpmln-strong-equivalence-p}
Two \lpmln programs $P$ and $Q$ are p-strongly equivalent, denoted by $P \equiv_{s,p} Q$, 
if for any \lpmln program $R$, we have $P \cup R \equiv_{p} Q \cup R$.
\end{defi}

According to the above definitions, the p-strong equivalence implies the p-ordinary equivalence, but the inverse does not hold in general, 
which can be observed from the following example.

\begin{exa}
	\label{ex:inference-task-case}
	Consider two \lpmln programs $P$ 
	\begin{eqnarray}
	\alpha &:& a \vee b. \\
	2 &:& \leftarrow a, b. \label{con:1}
	\end{eqnarray}
	and $Q$
	\begin{eqnarray}
	\alpha &:& a \leftarrow  not ~b. \\
	\alpha &:& b \leftarrow not ~ a. \\
	2 &:& \leftarrow a, b. \label{con:2}
	\end{eqnarray}	
	It is easy to check that $P$ and $Q$ have the same stable models and the same probability distribution of stable models, 
	which is shown in Table \ref{tab:computing-results-of-ex1}. 
	Therefore, $P$ and $Q$ are p-ordinarily equivalent, which means they have the same MAP and MPD inference results. 
	Specifically, both of interpretations $\{a\}$ and $\{b\}$ are the most probable stable models of $P$ and $Q$, 
	and for the literals $a$ and $b$, we have $Pr(P, a) = Pr(Q, a) =0.5$ and $ Pr(P, b) = Pr(Q, b) =0.5$.
	But the programs $P$ and $Q$ are not p-strongly equivalent. 
	For example, consider an \lpmln program $R = \{1 : a \leftarrow b.  ~1 : b  \leftarrow a. \}$, 
	the interpretation $S = \{a, b\}$ is a stable model of $P \cup R$, but $S$ is  not a stable model of $Q \cup R$. 
	Since the reduct $\lglred{(Q \cup R)}{S} = \{a \leftarrow b. ~ b \leftarrow a. \}$, obviously, 
	the interpretation $S$ is not a stable model of $\lglred{(Q \cup R)}{S}$, 
	which means $S \not\in \lsm{Q \cup R}$. 
	Therefore, the extended programs $P \cup R$ and $Q \cup R$ do not have the same stable models, i.e. they are not p-strongly equivalent. 

	\begin{table}
		\caption{Stable Models of the Programs $P$ and $Q$ in Example \ref{ex:inference-task-case}}
		\label{tab:computing-results-of-ex1}
    \def\arraystretch{1.3} 
	\begin{tabular}{c c c c c}
		\hline
		Stable Model $S$ & $W(P,S)$ & $Pr(P,S)$   & $W(Q,S)$  & $Pr(Q,S)$ \\ 
		\hline 
		$\{a\}$ & $e^{\alpha + 2}$ &  $0.5$ & $e^{2\alpha + 2}$ & $0.5$ \\ 
		$\{b\}$ & $e^{\alpha + 2}$  & $0.5$ & $e^{2\alpha + 2}$ & $0.5$ \\ 
		$\emptyset$ & $e^2$ & $0$ & $e^2$ &  $0$ \\ 
		\hline 
	\end{tabular} 
	\end{table}
\end{exa}

Now, we introduce the notion of semi-strong equivalence, 
which helps us describe the characterization of the p-strong equivalence more conveniently. 

\begin{defi}[semi-strong equivalence]
	\label{def:lse-semi}
	Two \lpmln programs $P$ and $Q$ are semi-strongly equivalent, denoted by $P \equiv_{s,s} Q$, 
	if for any \lpmln program $R$, we have $\lsm{P \cup R} = \lsm{Q \cup R}$.
\end{defi}

Recall Example \ref{ex:inference-task-case}, it is easy to observe that the programs $P$ and $Q$ in Example \ref{ex:inference-task-case} are not semi-strongly equivalent. 
Definition \ref{def:lse-semi} relaxes the p-strong equivalence by ignoring the probability distribution of stable models. 
Therefore, for \lpmln programs $P$ and $Q$, $P \equiv_{s, p} Q$ implies $P \equiv_{s, s} Q$, but the inverse does not hold in general. 
Based on the relationship between two notions of strong equivalences for \lpmlnend, 
the characterization of the p-strong equivalence can be divided into two parts: 
(1) characterizing the semi-strong equivalence by introducing the notion of SE-models for \lpmlnend; 
(2) characterizing the p-strong equivalence by introducing uncertainty measurement conditions on the basis of the semi-strong equivalence, 
which are shown in the following subsections. 

\subsection{Characterizing Semi-Strong Equivalence}
Similar to the SE-model approach for characterizing the strong equivalence of ASP, we introduce the SE-model for \lpmlnend, 
which can be used to characterize the semi-strong equivalence between \lpmln programs. 

\begin{defi}[SE-model for \lpmlnend]
	\label{def:lse-model}
	For an \lpmln program $P$, an SE-interpretation $(X,Y)$ is an SE-model of $P$, if $X \models \lglred{P}{Y}$.
\end{defi}

In Definition \ref{def:lse-model}, $\lglred{P}{Y}$ is an ASP program obtained from $P$ by a three-step transformation of the program $P$. 
In the first step, $P_Y$ is the \lpmln reduct of $P$ w.r.t. $Y$. 
In the second step, $\overline{P_Y}$ is the unweighted ASP counterpart of $P_Y$.
In the third step, $(\overline{P_Y})^Y$ is  the GL-reduct of $\overline{P_Y}$ w.r.t. $Y$. 
Through the transformation, an SE-model for \lpmln is reduced to an SE-model for ASP, which shows the relationship between \lpmln and ASP. 
By $\lse{P}$, we denote the set of all SE-models of an \lpmln program $P$. 

\begin{defi}
	\label{def:lse-weight-degree}
	For an \lpmln program $P$ and an SE-model $(X,Y)$ of $P$, the weight degree $W(P,(X,Y))$ of $(X,Y)$ w.r.t. the program $P$ is defined as 
	\begin{equation}
	\label{eq:lse-model-weight-degree}
	W(P,(X,Y)) = W(P_Y) = exp \left( \sum_{w:r \in P_Y} w \right)
	\end{equation}
\end{defi}

\begin{exa}
	\label{ex:se-model}
	Continue Example \ref{ex:inference-task-case}, for the \lpmln programs $P$ and $Q$ in the example, 
	consider an interpretation $U = \{a, b\}$, 
	we have $\overline{P_U} = \{a \vee b. \}$ and $\overline{Q_U} = \{a\leftarrow not ~b. ~ b\leftarrow not ~a.\}$. 
	For the GL-reduct, we have $\lglred{P}{U} = \{a \vee b. \}$  and  $\lglred{Q}{U} = \emptyset$. 
	Obviously, for any subset $U'$ of $U$, SE-interpretation $(U', U)$ is an SE-model of $Q$, but $(\emptyset, U)$ is not an SE-model of $P$. 
	All SE-models of $P$ and $Q$ are shown in Table \ref{tab:se-models}. 
	
	\begin{table}
		\begin{threeparttable}
			\centering
			\caption{SE-models and Their Weights of Programs in Example \ref{ex:se-model}}
			\label{tab:se-models}
    \def\arraystretch{1.3} 
			\begin{tabular}{cccccccc}
				\hline
				SE-model $S$ & $(\emptyset, \emptyset)$ & $(\{a\}, \{a\})$ & $(\{b\}, \{b\})$ & $(\emptyset, U)$ & $(\{a\}, U)$ & $(\{b\}, U)$ & $(U, U)$ \\ 
				\hline
				$W(P, S)$ & $e^{2}$ & $e^{\alpha + 2}$ & $e^{\alpha + 2}$ & $-$ & $e^{\alpha}$ & $e^{\alpha}$ & $e^{\alpha}$ \\ 
				$W(Q, S)$ & $e^{2}$ & $e^{2\alpha + 2}$ & $e^{2\alpha + 2}$ & $e^{2\alpha}$ & $e^{2\alpha}$ & $e^{2\alpha}$ & $e^{2\alpha}$ \\ 
				\hline
			\end{tabular}
			\begin{tablenotes}
				\item[*] $U = \{a, b\}$, and ``$-$'' means ``not an SE-model''.
			\end{tablenotes}
		\end{threeparttable}
	\end{table}
\end{exa}

Now, we show some properties of the SE-models for \lpmlnend, which will be used to characterize the semi-strong equivalence for \lpmln programs. 
Lemma \ref{lem:XX-lse-model-LM} \textemdash \ref{lem:x-equilibrium-sm} show some immediate results derived from the definition of SE-models for \lpmlnend, 
which can also be observed from Example \ref{ex:se-model}. 
\begin{lem}
\label{lem:XX-lse-model-LM}
For an \lpmln program $P$, an arbitrary total SE-interpretation $(X, X)$ is an SE-model of $P$, i.e. $(X, X) \in \lse{P}$.
\end{lem}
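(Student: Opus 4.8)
The plan is to unfold Definition~\ref{def:lse-model} in the total case and reduce the claim to a well-known preservation property of Gelfond--Lifschitz reducts. Fix an interpretation $X$. By Definition~\ref{def:lse-model}, proving $(X,X) \in \lse{P}$ amounts to proving $X \models \lglred{P}{X}$, i.e.\ that $X$ satisfies the ASP program $(\overline{P_X})^X$ obtained from $P$ by the three steps described right after Definition~\ref{def:lse-model}.

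First I would observe that, by the definition of the \lpmln reduct, $P_X = \{w:r \in P \mid X \models r\}$, so every (unweighted) rule of $\overline{P_X}$ is by construction satisfied by $X$; that is, $X \models \overline{P_X}$. It then remains to pass from $\overline{P_X}$ to its GL-reduct with respect to $X$. For this I would invoke the standard fact that for any ASP program $\Pi$ and any interpretation $I$, $I \models \Pi$ implies $I \models \Pi^I$: a rule of $\Pi^I$ has the form $h(r) \leftarrow b^+(r)$ for some $r \in \Pi$ with $b^-(r) \cap I = \emptyset$; if $I \models b^+(r)$ then, since also $I \models b^-(r)$, we get $I \models b(r)$, hence $I \models h(r)$ because $I \models r$; and if $I \not\models b^+(r)$ the reduct rule is trivially satisfied. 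Applying this with $\Pi = \overline{P_X}$ and $I = X$ gives $X \models (\overline{P_X})^X = \lglred{P}{X}$, which is exactly the condition required by Definition~\ref{def:lse-model}.

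I do not expect any genuine obstacle: the argument is a direct chase through the definitions, and the only point worth an explicit sentence is the auxiliary fact that satisfaction is preserved by the GL-reduct, which is folklore in ASP and is proved in the one line given above. Hence the lemma follows.
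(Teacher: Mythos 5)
Your argument is correct and is exactly the routine definition-chase the paper has in mind: the paper offers no explicit proof, stating that Lemma~\ref{lem:XX-lse-model-LM} is an immediate consequence of Definition~\ref{def:lse-model} (since $\overline{P_X}$ consists only of rules satisfied by $X$, and satisfaction is preserved under the GL-reduct). Your write-up simply spells out that folklore step, so there is nothing to add.
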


\begin{lem}
\label{lem:XY-not-lse-model}
For an \lpmln program $P$,  an SE-interpretation $(X, Y)$ is not an SE-model of $P$ iff $X \not\models \lglred{P}{Y}$.
\end{lem}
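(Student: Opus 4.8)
The plan is to obtain the statement directly from Definition~\ref{def:lse-model}. As usual for a definition, that clause is to be read as a biconditional: for a pair $(X,Y)$ already known to be an SE-interpretation, $(X,Y) \in \lse{P}$ holds exactly when $X \models \lglred{P}{Y}$. Lemma~\ref{lem:XY-not-lse-model} is then nothing more than the contrapositive packaging of this equivalence; it is worth stating on its own because the negative form is the one actually invoked later when one wants to exhibit an SE-interpretation distinguishing two programs.

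Concretely, I would first record that the hypothesis ``$(X,Y)$ is an SE-interpretation'' already supplies $X \subseteq Y$ together with the consistency of $X$ and $Y$, so that ``being an SE-model of $P$'' carries no hidden side condition beyond $X \models \lglred{P}{Y}$: the two statements $(X,Y) \in \lse{P}$ and $X \models \lglred{P}{Y}$ are equivalent for such a pair. Negating both sides then yields $(X,Y) \notin \lse{P}$ iff $X \not\models \lglred{P}{Y}$, which is exactly the claim. No case analysis or appeal to earlier results is needed; Lemma~\ref{lem:XX-lse-model-LM} is the analogous ``immediate consequence of the definition'' for total pairs, and this lemma is of the same flavour.

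There is accordingly no real obstacle. The single point worth making explicit is that $\lglred{P}{Y}$ is a genuine ground ASP program for every interpretation $Y$: the \lpmln reduct $P_Y$ is defined for any $Y$, its unweighted counterpart $\overline{P_Y}$ is then an ordinary ground ASP program, and the GL-reduct $(\overline{P_Y})^Y$ of that program w.r.t. $Y$ is always defined. Hence the satisfaction test $X \models \lglred{P}{Y}$ makes sense unconditionally and its negation is unambiguous, so the biconditional above — and therefore the lemma — follows at once.
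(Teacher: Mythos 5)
Your proposal is correct and matches the paper's treatment: the paper states Lemma \ref{lem:XY-not-lse-model} as an immediate consequence of Definition \ref{def:lse-model} (reading the defining clause as a biconditional and negating both sides), offering no further argument, which is exactly what you do. Your added remark that $\lglred{P}{Y}$ is always a well-defined ground ASP program is a harmless and reasonable point of care.
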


\begin{lem}
\label{lem:x-equilibrium-sm}
For an \lpmln program $P$ and an interpretation $X$, 
$X$ is a stable model of $P$, iff  $(X',X)$ is not an SE-model of $P$  for any proper subset $X'$ of $X$.
\end{lem}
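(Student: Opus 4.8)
The plan is to unfold the definitions on both sides of the biconditional and reduce everything to the standard behaviour of the Gelfond--Lifschitz reduct, using Lemma~\ref{lem:XX-lse-model-LM} to dispose of one of the two conditions that defines a stable model.

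First I would spell out what ``$X$ is a stable model of $P$'' means by chasing through the definitions. By the definition of \lpmln stable model, $X$ is a stable model of $P$ iff $X$ is a stable model of the ASP program $\overline{P_X}$; by the definition of ASP stable model for programs possibly containing default negation, this holds iff $X$ is a stable model of the GL-reduct $(\overline{P_X})^X$. Now $(\overline{P_X})^X = \lglred{P}{X}$ is negation-free, so ``$X$ is a stable model of $\lglred{P}{X}$'' unfolds to exactly two requirements: (i) $X \models \lglred{P}{X}$, and (ii) there is no proper subset $X' \subset X$ with $X' \models \lglred{P}{X}$.

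The key simplification is that requirement (i) is automatic. By Lemma~\ref{lem:XX-lse-model-LM}, the total SE-interpretation $(X,X)$ is an SE-model of $P$, which by Definition~\ref{def:lse-model} is precisely the assertion $X \models \lglred{P}{X}$. (If one prefers a direct argument: $\overline{P_X}$ consists only of rules that $X$ satisfies, hence $X \models \overline{P_X}$, and forming the GL-reduct with respect to $X$ preserves satisfaction by $X$, so $X \models (\overline{P_X})^X$.) Consequently $X$ is a stable model of $P$ iff requirement (ii) alone holds, i.e.\ iff no proper subset $X' \subset X$ satisfies $\lglred{P}{X}$. Finally, by Definition~\ref{def:lse-model} together with Lemma~\ref{lem:XY-not-lse-model}, ``$X' \not\models \lglred{P}{X}$'' is exactly ``$(X',X)$ is not an SE-model of $P$'', so requirement (ii) reads: $(X',X) \notin \lse{P}$ for every proper subset $X' \subset X$. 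This is the claimed equivalence, and both directions follow at once from this chain of iff's.

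I do not anticipate a genuine obstacle: the argument is a definition chase. The only point that needs a little care is keeping the three-layer transformation straight --- first the \lpmln reduct $P_X$, then the unweighted counterpart $\overline{P_X}$, then the GL-reduct $(\overline{P_X})^X = \lglred{P}{X}$ --- and invoking the correct characterisation of ASP stable models for the negation-free program $\lglred{P}{X}$ rather than for $\overline{P_X}$ directly.
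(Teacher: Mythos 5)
Your proof is correct and matches the paper's treatment: the paper states this lemma as an immediate consequence of the definition of SE-models (offering no separate argument), and your definition chase — using Lemma~\ref{lem:XX-lse-model-LM} to discharge the satisfaction requirement $X \models \lglred{P}{X}$ and Lemma~\ref{lem:XY-not-lse-model} to translate the minimality requirement into the non-existence of non-total SE-models $(X',X)$ — is exactly the intended justification.
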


Based on above properties of SE-models, Lemma \ref{lem:lpmln-sm-strong-equiv} provides a model-theoretical characterization for the semi-strong equivalence. 

\begin{lem}
	\label{lem:lpmln-sm-strong-equiv}
	Two \lpmln programs $P$ and $Q$  are semi-strongly equivalent, 
	iff they have the same SE-models, i.e. $\lse{P} = \lse{Q}$.
\end{lem}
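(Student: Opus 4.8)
The plan is to follow the structure of Turner's proof of Theorem~\ref{thm:se-asp}, transported to the \lpmln reduct $\lglred{\cdot}{\cdot}$. Two preliminary observations do most of the work. \emph{(i)~SE-models are additive under union}: for any \lpmln programs $P$ and $R$, $\lse{P \cup R} = \lse{P} \cap \lse{R}$, because $(P \cup R)_Y = P_Y \cup R_Y$ and both taking the unweighted counterpart and taking the GL-reduct distribute over union, so $\lglred{(P\cup R)}{Y} = \lglred{P}{Y} \cup \lglred{R}{Y}$; hence $X \models \lglred{(P\cup R)}{Y}$ iff $X$ satisfies both $\lglred{P}{Y}$ and $\lglred{R}{Y}$. \emph{(ii)~Stable models are read off from SE-models}: by Lemma~\ref{lem:x-equilibrium-sm}, $I \in \lsm{\Pi}$ iff $(I',I) \notin \lse{\Pi}$ for every proper subset $I'$ of $I$.

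Granting these, the ``$\Leftarrow$'' direction is short: if $\lse{P} = \lse{Q}$, then for every \lpmln program $R$ observation~(i) gives $\lse{P\cup R} = \lse{P} \cap \lse{R} = \lse{Q} \cap \lse{R} = \lse{Q\cup R}$, and then observation~(ii) applied to $P\cup R$ and $Q\cup R$ yields $\lsm{P\cup R} = \lsm{Q\cup R}$, i.e. $P \equiv_{s,s} Q$.

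For ``$\Rightarrow$'' I argue by contraposition: assuming $\lse{P} \neq \lse{Q}$, I build a context $R$ with $\lsm{P \cup R} \neq \lsm{Q \cup R}$. Up to swapping $P$ and $Q$, pick $(X,Y) \in \lse{P} \setminus \lse{Q}$; by Lemma~\ref{lem:XX-lse-model-LM} this pair is non-total, so $X \subsetneq Y$ (and, restricting as we may to the finitely many atoms occurring in $P$ or $Q$, we take $X,Y$ finite so that $R$ below is a finite program). Let $R$ be the \lpmln program whose unweighted counterpart is the positive program $\overline{R} = \{a \leftarrow ~|~ a \in X\} \cup \{c \leftarrow c' ~|~ c, c' \in Y \setminus X\}$, with all rules given an arbitrary weight — the weights are immaterial for semi-strong equivalence, which depends only on stable models and hence only on the reducts $\overline{\cdot_I}$. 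Every rule of $R$ is satisfied by $Y$, so $R_Y = R$, and since $\overline{R}$ is negation-free, $\lglred{R}{Y} = \overline{R}$. One checks that the subsets of $Y$ satisfying $\overline{R}$ are exactly $X$ and $Y$: the facts pin the lower bound to $X$, and the rules over $Y\setminus X$ force the $(Y\setminus X)$-part of any model to be all-or-nothing. Thus, among pairs $(X',Y)$ with $X'\subsetneq Y$, the only one in $\lse{R}$ is $(X,Y)$. Now use observation~(i): on the $P$-side $(X,Y) \in \lse{P} \cap \lse{R} = \lse{P \cup R}$ with $X\subsetneq Y$, so by observation~(ii) $Y \notin \lsm{P\cup R}$; on the $Q$-side, for $X'=X$ we have $(X,Y)\notin\lse{Q}$ hence $(X,Y)\notin\lse{Q\cup R}$, while for $X'\subsetneq Y$ with $X'\neq X$ we have $(X',Y)\notin\lse{R}$ hence $(X',Y)\notin\lse{Q\cup R}$, so no proper subset of $Y$ is an SE-model of $Q\cup R$ and therefore $Y \in \lsm{Q\cup R}$. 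Hence $\lsm{P\cup R}\neq\lsm{Q\cup R}$ and $P \not\equiv_{s,s} Q$.

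The additivity and reduct observations are routine; the crux is the design of $R$, in particular choosing the clique of rules over $Y\setminus X$ so that all intermediate candidates collapse and $(X,Y)$ becomes the \emph{unique} non-total SE-interpretation of $R$ extending to $Y$ — which is exactly what turns $Y$ into a stable model of $Q\cup R$ but not of $P\cup R$. A minor but convenient point, absent in the ASP case, is that Definition~\ref{def:lse-model} carries no ``$Y\models P$'' side condition, since violated rules are simply discarded by the \lpmln reduct; consequently there is no separate subcase to treat when $Y$ fails to satisfy the program.
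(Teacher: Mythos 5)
Your proof is correct and follows essentially the same route as the paper's: the ``if'' direction via the fact that $X \models \lglred{(P\cup R)}{Y}$ decomposes over the union together with Lemma~\ref{lem:x-equilibrium-sm}, and the ``only if'' direction via exactly the same context $R$ (facts for $X$ plus the pairwise rules over $Y\setminus X$), showing $Y \in \lsm{Q\cup R}$ but $Y \notin \lsm{P\cup R}$. Your explicit additivity observation $\lse{P\cup R}=\lse{P}\cap\lse{R}$ and the contrapositive phrasing are only cosmetic repackagings of the paper's argument.
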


\begin{proof}
For the if direction, 
suppose $\lse{P} = \lse{Q}$, we need to prove that for any \lpmln program $R$, the programs $P \cup R$ and $Q \cup R$ have the same stable models. 
We use proof by contradiction. 
For an interpretation $Y$, assume that $Y \in \lsm{P \cup R}$ and  $Y \not\in \lsm{Q \cup R}$.
By the definition of stable model, we have $Y \models \lglred{(Q \cup R)}{Y}$, and there is a proper subset $X$ of $Y$ such that $X \models \lglred{(Q \cup R)}{Y}$, which means $X \models \lglred{Q}{Y}$ and $X \models \lglred{R}{Y}$.
By the definition of SE-model, we have $(X, Y)$ is an SE-model of $Q$. 
Since $\lse{P} = \lse{Q}$, $(X,Y)$ is also an SE-model of $P$, which means $X \models \lglred{P}{Y}$. 
Combining the above results, we have $X \models \lglred{(P \cup R)}{Y}$, 
which contradicts with $Y \in \lsm{P \cup R}$ by Lemma \ref{lem:x-equilibrium-sm}. 
Therefore, the programs $P \cup R$ and $Q \cup R$ have the same stable models, and the if direction of Lemma \ref{lem:lpmln-sm-strong-equiv} is proven.

For the only-if direction, suppose $\lsm{P \cup R} = \lsm{Q \cup R}$, we need to prove that $\lse{P} = \lse{Q}$. 
We use proof by contradiction. 
For an SE-interpretation $(X,Y)$, assume that $(X,Y) \in \lse{P}$ and $(X, Y) \not\in \lse{Q}$, 
we have $X \not\models \lglred{Q}{Y}$ by Lemma \ref{lem:XY-not-lse-model}. 
Now we show that there is an \lpmln program $R$ such that $\lsm{P \cup R} \neq \lsm{Q \cup R}$ under the assumption. 
Let $R = \{1 : a. ~|~ a \in X\} \cup \{1 : a \leftarrow b. ~|~ a, b \in Y-X \}$,  
we have $\lglred{(Q \cup R)}{Y} = \lglred{Q}{Y} \cup \overline{R}$ and $X \models R$. 
Let $X'$ be a set of literals such that $X' \subseteq Y$ and $X' \models \lglred{Q}{Y} \cup \overline{R}$. 
By the construction of $R$, we have $X \subseteq X'$. 
Since $X \not\models \lglred{Q}{Y}$, we have $X \neq X'$, 
which means there at least exists a literal $l \in Y-X$ such that $l \in X'$.
By the construction of $R$, we have $X' \models 
\overline{R}$ iff $(Y-X) \subseteq X'$, which means $X' = Y$.
By the definition of stable models, $Y$ is a stable model of $Q \cup R$. 
Since $(X, Y) \in \lse{P}$, we have $X \models \lglred{P}{Y}$ and $X \models \lglred{(P \cup R)}{Y}$, which means $Y \not\in \lsm{P \cup R}$. 
Therefore, $P$ and $Q$ have the same SE-models, and the only-if direction of Lemma \ref{lem:lpmln-sm-strong-equiv} is proven. 
\end{proof}
\begin{exa}
	\label{ex:semi-strong-equivalence}
	Continue Example \ref{ex:se-model}, it is easy to check that programs $P$ and $Q$ in Example \ref{ex:se-model} are not semi-strongly equivalent by Lemma \ref{lem:lpmln-sm-strong-equiv}. 
	Next, consider new \lpmln programs $P'$
	\begin{eqnarray}
	w_1 &:& a \vee b. \\
	w_2 &:& b \leftarrow a. 
	\end{eqnarray}
	and $Q'$
	\begin{eqnarray}
	w_3 &:& b. \\
	w_4 &:& a  \leftarrow not ~b. 
	\end{eqnarray}
	where $w_i ~(1 \leq i \leq 4)$ is a variable denoting the weight of corresponding rule. 
	All SE-models of the new \lpmln programs and their weight degrees are shown in Table \ref{tab:semi-strong-equivalence}.
	It is easy to check that $(\{b\}, \{a, b\})$ is the unique non-total SE-model of $P'$ and $Q'$, therefore, the \lpmln programs $P'$ and $Q'$ are semi-strongly equivalent.

	\begin{table}
		\caption{Computing Results in Example \ref{ex:semi-strong-equivalence}}
		\label{tab:semi-strong-equivalence}
    \def\arraystretch{1.3} 
		\begin{tabular}{cccccc}
			\hline
			SE-model $S$ & $(\emptyset, \emptyset)$ & $(\{a\}, \{a\})$ & $(\{b\}, \{b\})$ & $(\{b\}, \{a, b\})$ & $(\{a,b\}, \{a, b\})$ \\ 
			\hline
			$W(P', S)$ & $e^{w_2}$ & $e^{w_1}$ & $e^{w_1 + w_2}$ & $e^{w_1 + w_2}$ & $e^{w_1 + w_2}$ \\ 
			$W(Q', S)$ & $e^{0}$ & $e^{w_4}$ & $e^{w_3 + w_4}$ & $e^{w_3 + w_4}$ & $e^{w_3 + w_4}$ \\ 
			\hline
		\end{tabular}
	\end{table}
\end{exa}

\subsection{Characterizing P-Strong Equivalence}
Now, we investigate the characterization of the p-strong equivalence for \lpmln programs. 
Due to the hard rules of \lpmlnend, the \textit{lim} operation is used in computing the probability degree of a stable model, 
which makes the characterization of p-strong equivalence complicated. 
Therefore, we firstly present a sufficient condition for characterizing the notion. Then we investigate whether the condition is necessary.

Lemma \ref{lem:lpmln-strong-equivalence-suf} provides a sufficient condition for characterizing the p-strong equivalence between \lpmln programs, called \textit{PSE-condition}, 
which adds new conditions w.r.t. the weights of SE-models on the basis of semi-strong equivalence.

\begin{lem}
	\label{lem:lpmln-strong-equivalence-suf}
	Two \lpmln programs $P$ and $Q$ are p-strongly equivalent, if they are semi-strongly equivalent, 
	and there exist two constants $c$ and $k$ such that 
	for each SE-model $(X,Y) \in \lse{P}$, we have $W(P,(X,Y)) = exp(c + k*\alpha) * W(Q,(X,Y))$. 
\end{lem}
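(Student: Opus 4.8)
The plan is to show that the PSE-condition guarantees $P \cup R \equiv_p Q \cup R$ for every \lpmln program $R$. Fix an arbitrary $R$. Since $P$ and $Q$ are semi-strongly equivalent, by Lemma~\ref{lem:lpmln-sm-strong-equiv} we have $\lse{P} = \lse{Q}$, and hence (again by Lemma~\ref{lem:lpmln-sm-strong-equiv}, or directly) $\lsm{P \cup R} = \lsm{Q \cup R}$; call this common set of stable models $\mathcal{M}$. So the first requirement of p-ordinary equivalence is already in hand, and it remains to prove $Pr(P \cup R, X) = Pr(Q \cup R, X)$ for every $X \in \mathcal{M}$.

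The key computational step is to track how the weight degrees transform. For any interpretation $Y$, the \lpmln reduct decomposes as $(P \cup R)_Y = P_Y \cup R_Y$ (a rule of the union is satisfied by $Y$ iff it comes from $P$ and is satisfied, or comes from $R$ and is satisfied), so $W(P \cup R, (X,Y)) = W(P,(X,Y)) \cdot W(R,(X,Y))$, and likewise for $Q$. Now I would apply the PSE-condition: for each SE-model $(X,Y)$ of $P$ (equivalently of $Q$),
\begin{equation}
W(P \cup R,(X,Y)) = exp(c + k*\alpha) \cdot W(Q,(X,Y)) \cdot W(R,(X,Y)) = exp(c + k*\alpha) \cdot W(Q \cup R,(X,Y)).
\end{equation}
Specializing to a total SE-interpretation $(X,X)$ (which is always an SE-model by Lemma~\ref{lem:XX-lse-model-LM}), and noting $W(P \cup R, X) = W(P \cup R, (P\cup R)_X) = W(P \cup R, (X,X))$, we get $W(P \cup R, X) = exp(c + k*\alpha) \cdot W(Q \cup R, X)$ for every interpretation $X$, in particular for every $X \in \mathcal{M}$.

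Finally I would plug this into the definition of the probability degree, equation~\eqref{eq:probability-sm}. For $X \in \lsm{P \cup R} = \mathcal{M}$,
\begin{equation}
Pr(P \cup R, X) = \lim_{\alpha \to \infty} \frac{W(P \cup R, X)}{\sum_{X' \in \mathcal{M}} W(P \cup R, X')} = \lim_{\alpha \to \infty} \frac{exp(c + k*\alpha)\,W(Q \cup R, X)}{exp(c + k*\alpha)\sum_{X' \in \mathcal{M}} W(Q \cup R, X')},
\end{equation}
and the common factor $exp(c + k*\alpha)$ cancels before the limit is taken, leaving exactly $Pr(Q \cup R, X)$. For $X \notin \mathcal{M}$ both probabilities are $0$. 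Hence $P \cup R \equiv_p Q \cup R$, and since $R$ was arbitrary, $P \equiv_{s,p} Q$.

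The main obstacle I anticipate is the bookkeeping around the symbolic weight $\alpha$ and the \textit{lim} operation: one must be careful that $exp(c + k*\alpha)$ is a genuine common factor of numerator and denominator (so it cancels as a symbolic expression, independent of which stable models attain the maximal count of satisfied hard rules) rather than something that could interact with the limit — but since the PSE-condition forces the \emph{same} multiplicative factor on every SE-model, this is clean. A secondary point to state carefully is the identity $W(P\cup R,X) = W(P,(X,X))\cdot W(R,(X,X))$, i.e. relating the "interpretation" weight degree of \eqref{eq:weight-sm} to the "SE-model" weight degree of \eqref{eq:lse-model-weight-degree} via the total SE-interpretation $(X,X)$; this is immediate from the definitions but worth making explicit.
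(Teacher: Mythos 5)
Your proposal is correct, and its core idea — the uniform factor $exp(c+k\alpha)$ cancels in the probability ratio — is the same as the paper's, but you take a noticeably leaner route through the middle of the argument. The paper splits the probabilistic part into two steps: it first proves $\psm{P \cup R} = \psm{Q \cup R}$ by a contradiction argument on hard-rule counts (reading $k$ as the offset $h(P \cup R,X) - h(Q \cup R,X)$), and only then computes the probabilities using the reformulated, limit-free expression that sums over probabilistic stable models. You instead work directly with the original definition \eqref{eq:probability-sm}: since the total SE-interpretations $(X,X)$ are always SE-models (Lemma \ref{lem:XX-lse-model-LM}), the PSE-condition yields $W(P\cup R,X)=exp(c+k\alpha)\,W(Q\cup R,X)$ for \emph{every} interpretation, so the factor cancels as a functional identity in $\alpha$ before the limit is taken, and the equality of the limits is immediate; equality of probabilistic stable models then falls out as a corollary (a stable model is probabilistic iff its probability is nonzero) rather than being needed as a lemma. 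What the paper's longer route buys is the explicit hard-rule-count picture ($k$ as an integer offset, $\psm{P\cup R}=\psm{Q\cup R}$), which feeds the intuition and machinery used later for the necessity direction (Lemma \ref{lem:pse-onlyif-part1} and the flattening extensions); what your route buys is brevity and a cleaner treatment of the \textit{lim} operation, avoiding any reliance on the reformulation \eqref{eq:probability-psm}. Your two flagged caveats are handled the same way the paper handles them: the multiplicative decomposition $W(P\cup R,X)=W(P,X)\cdot W(R,X)$ is used there as well (with the usual implicit convention about shared rules in the union), and the identification of the interpretation weight with the weight of the total SE-model $(X,X)$ is immediate from Definition \ref{def:lse-weight-degree}.
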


\begin{proof}
For \lpmln programs $P$ and $Q$, to show the p-strong equivalence between them, 
we need to show that, for any \lpmln program $R$, 
\begin{enumerate}
	\item $P \cup R$ and $Q \cup R$ have the same stable models, i.e. $\lsm{P \cup R} = \lsm{Q \cup R}$; 
	\item $P \cup R$ and $Q \cup R$ have the same probabilistic stable models, i.e. $\psm{P \cup R} = \psm{Q \cup R}$; and 
	\item  $P \cup R$ and $Q \cup R$ have the same probability distribution of their stable models, i.e. for any stable model $X \in \psm{P \cup R}$, $Pr(P \cup R, X) = Pr(Q \cup R, X)$.
\end{enumerate}

\textbf{Part 1.} Since the programs $P$ and $Q$ are semi-strongly equivalent,
by Lemma \ref{lem:lpmln-sm-strong-equiv}, we have $\lsm{P \cup R} = \lsm{Q \cup R}$ and $\lse{P} = \lse{Q}$. 

\textbf{Part 2.} We prove $\psm{P \cup R} = \psm{Q \cup R}$ by contradiction. Without loss of generality, assume $X$ is an interpretation such that $X \in \psm{P \cup R}$ and $X \not\in \psm{Q \cup R}$, 
by the PSE-condition, we have 
\begin{equation}
	h(P \cup R, X) = h(P, X) + h(R, X) = h(Q, X) + h(R, X) + k = h(Q \cup R, X) + k
\end{equation}
And for any stable model $X' \in \lsm{P \cup R}$, we have $h(P \cup R, X') \leq h(P \cup R, X)$. 
Suppose $Y$ is a probabilistic stable model of $Q \cup R$. 
Since $X \not\in \psm{Q \cup R}$, we have $h(Q \cup R, Y) > h(Q \cup R, X)$,  
which means $h(P \cup R, Y) > h(P \cup R, X)$. 
It contradicts with the assumption, 
therefore, $\psm{P \cup R} = \psm{Q \cup R}$, which means for any non-probabilistic stable model $X \in \lsm{P \cup R} - \psm{P \cup R}$, $Pr(P \cup R, X) = Pr(Q \cup R, X) = 0$. 

\textbf{Part 3.} For a stable model $X \in \psm{P \cup R}$, the probability degree of $X$ can be reformulated as 
\begin{equation}
\label{eq:p-se-probability-equal}
\begin{split}
Pr(P \cup R, X)
& = \frac{W(P \cup R,X)}{\Sigma_{X'\in \psm{P \cup R}}W(P \cup R,X')} \\
& = \frac{W(P,X) * W(R, X)}{\Sigma_{X'\in \psm{P \cup R}}W(P,X') * W(R, X')} \\
& = \frac{ exp(c + k*\alpha) * W(Q,X) * W(R, X)}{exp(c + k*\alpha) * \Sigma_{X'\in \psm{Q \cup R}}  W(Q, X') * W(R, X')} \\
& = \frac{W(Q \cup R,X)}{\Sigma_{X'\in \psm{Q \cup R}} W(Q \cup R,X')} 
= Pr(Q \cup R, X)
\end{split} 
\end{equation}

Combining above results, $P$ and $Q$ are p-strongly equivalent, Lemma \ref{lem:lpmln-strong-equivalence-suf} is proven. 
\end{proof}
Lemma \ref{lem:lpmln-strong-equivalence-suf} shows that the PSE-condition is a sufficient condition for characterizing the p-strong equivalence.  
One may ask whether the PSE-condition is also necessary. 
Fortunately, the answer is yes, but it is not easy to prove due to the hard rules in \lpmlnend. 
Recall p-ordinarily equivalent programs $P$ and $Q$ in Example \ref{ex:inference-task-case}, 
it is easy to check that $e^\alpha * W(P, \{a\}) = W(Q, \{a\})$ and $e^\alpha * W(P, \{b\}) = W(Q, \{b\})$, 
while $e^\alpha * W(P, \emptyset) \neq W(Q, \emptyset)$. 
Although the example is not for p-strongly equivalent programs, 
it still shows how hard rules affect the characterization of p-strong equivalence. 
That is, for p-strongly equivalent \lpmln programs $P$ and  $Q$, if there is an interpretation $X$ such that for any \lpmln program $R$, $X \not\in \psm{P \cup R}$, 
the condition on the weight of SE-models $(X', X)$ is not necessary for characterizing the p-strong equivalence between $P$ and $Q$. 
Since if $X$ cannot be a probabilistic stable model, 
its probability degree is zero, 
which means $Pr(P \cup R, X)$ is always equal to $Pr(Q \cup R, X)$. 
Above intuition is formally described by following lemmas, 
which serve as some preconditions of proving the necessity of the PSE-condition. 

First of all, we introduce some notations. 
For a set $U$ of literals, we use $2^U$ to denote the power set of $U$, 
and use $2^{U^+}$ to denote the set of interpretations in $2^U$, which is 
\begin{equation}
2^{U^+} = \{X \in 2^U~|~ X \text{ is consistent }\}
\end{equation}
For an \lpmln program $P$, a set $E$ of \lpmln programs is called a set of necessary extensions w.r.t. $P$, if for any interpretations $X$ and $Y$, 
there exists a program $P' \in E$ such that $P \subseteq P'$, and both of $X$ and $Y$ are probabilistic stable models of $P'$. 

Lemma \ref{lem:pse-onlyif-part1} shows that for any p-strongly equivalent \lpmln programs $P$ and $Q$, 
if there exists a set of necessary extensions w.r.t. $P$ and $Q$, then the PSE-condition w.r.t. $P$ and $Q$ is necessary, 
which means to prove the necessity of the PSE-condition, we need to construct a set of necessary extensions.

\begin{lem}
	\label{lem:pse-onlyif-part1}
	For p-strongly equivalent \lpmln programs $P$ and $Q$, let $R_1$ and $R_2$ be arbitrary \lpmln programs such that $\psm{Q \cup R_1} \cap \psm{Q \cup R_2} \neq \emptyset$. 
	There exist two constants $c$ and $k$ such that for any SE-models $(X,Y)$ of $Q$, 
	if $Y \in \psm{Q \cup R_1} \cup \psm{Q \cup R_2}$, then $W(P,(X,Y)) = exp(c + k*\alpha) * W(Q,(X,Y))$.
\end{lem}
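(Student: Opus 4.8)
The plan is to exploit the probability agreement that p-strong equivalence forces at the two witness programs $R_1$ and $R_2$, and to split the weight ratio $W(P,(X,Y))/W(Q,(X,Y))$ into a ``soft part'', which will be a genuine real constant $exp(c)$, and a ``hard part'', which will be a constant integer multiple $k$ of $\alpha$. Two preliminary observations frame everything. First, $W(P,(X,Y)) = W(P_Y)$ depends only on $Y$, never on $X$, and likewise $W(Q,(X,Y))=W(Q_Y)$; second, since $P \equiv_{s,p} Q$ implies $P \equiv_{s,s} Q$, Lemma~\ref{lem:lpmln-sm-strong-equiv} gives $\lse{P}=\lse{Q}$, so whenever $(X,Y)$ is an SE-model of $Q$ the quantity $W(P,(X,Y))$ is well defined. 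Hence it suffices to produce one pair of constants $c,k$ such that $W(P_Y)/W(Q_Y) = exp(c + k*\alpha)$ for every $Y \in \psm{Q \cup R_1} \cup \psm{Q \cup R_2}$.

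First I would fix some $Z \in \psm{Q \cup R_1} \cap \psm{Q \cup R_2}$, which exists by hypothesis, and analyse $R_1$ (the analysis of $R_2$ being symmetric). Because $P \equiv_{s,p} Q$, the programs $P \cup R_1$ and $Q \cup R_1$ are p-ordinarily equivalent, so $\lsm{P\cup R_1}=\lsm{Q\cup R_1}$, $Pr(P\cup R_1,\cdot)=Pr(Q\cup R_1,\cdot)$ on stable models, and therefore $\psm{P\cup R_1}=\psm{Q\cup R_1}$. For the soft part, take any $Y \in \psm{Q\cup R_1}$, write its probability degree via \eqref{eq:probability-psm}, and use $W((P\cup R_1)^s,Y) = W(P^s,Y)*W(R_1^s,Y)$ together with the corresponding identity for $Q$; the common positive factor $W(R_1^s,Y)$ cancels, and cross-multiplying the equality $Pr(P\cup R_1,Y)=Pr(Q\cup R_1,Y)$ (whose denominators are honest positive reals, not limits) shows $W(P^s,Y)/W(Q^s,Y)$ equals the same ratio at $Z$ -- a real constant I name $exp(c_1)$. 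For the hard part, I would invoke the characterization that the probabilistic stable models of a program are exactly the stable models satisfying the maximal number of hard rules (the remark around \eqref{eq:probability-psm}): writing that maximum as $M_P$ for $P\cup R_1$ and $M_Q$ for $Q\cup R_1$, every $Y \in \psm{P\cup R_1}=\psm{Q\cup R_1}$ satisfies $h(P,Y)+h(R_1,Y)=M_P$ and $h(Q,Y)+h(R_1,Y)=M_Q$, so $h(P,Y)-h(Q,Y)=M_P-M_Q =: k_1$ is independent of $Y$. Combining the two parts, $W(P_Y)/W(Q_Y) = exp(c_1 + k_1*\alpha)$ for every $Y \in \psm{Q\cup R_1}$.

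Running the identical argument for $R_2$ produces constants $c_2,k_2$ with $W(P_Y)/W(Q_Y) = exp(c_2 + k_2*\alpha)$ for every $Y \in \psm{Q\cup R_2}$. Evaluating both identities at the common witness $Z$ yields $exp(c_1 + k_1*\alpha) = exp(c_2 + k_2*\alpha)$ as symbolic expressions in $\alpha$, which forces $c_1=c_2$ and $k_1=k_2$. Setting $c = c_1$ and $k = k_1$ then gives, for every SE-model $(X,Y)$ of $Q$ with $Y$ in the required union, $W(P,(X,Y)) = W(P_Y) = exp(c+k*\alpha)\,W(Q_Y) = exp(c+k*\alpha)\,W(Q,(X,Y))$, completing the proof.

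The step I expect to be the main obstacle is the treatment of hard rules, which is exactly what the text flags: the probability formula \eqref{eq:probability-sm} is a limit in $\alpha$, so one cannot just divide weight degrees. The two devices that defuse this are (i) passing to the reformulation \eqref{eq:probability-psm} over probabilistic stable models, where the normalizing sums are genuine positive reals, so the soft weights can be isolated; and (ii) the ``maximal number of hard rules'' characterization, which pins the $\alpha$-exponent of the ratio to an integer constant. A secondary subtlety is the gluing step: the constants extracted from $R_1$ and from $R_2$ are a priori unrelated, and it is precisely the non-emptiness of $\psm{Q\cup R_1}\cap\psm{Q\cup R_2}$ (via the shared model $Z$) that identifies them -- which is why the lemma is stated with that intersection hypothesis.
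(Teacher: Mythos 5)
Your proposal is correct and follows essentially the same route as the paper: both use $Pr(P\cup R_i,\cdot)=Pr(Q\cup R_i,\cdot)$ on probabilistic stable models to cancel the $R_i$-contribution and conclude that the ratio $W(P,\cdot)/W(Q,\cdot)$ is a single constant of the form $exp(c_i+k_i\alpha)$ on each $\psm{Q\cup R_i}$, and then glue the two constants via the shared element of $\psm{Q\cup R_1}\cap\psm{Q\cup R_2}$. Your explicit soft/hard split (real exponent from \eqref{eq:probability-psm}, $\alpha$-coefficient from the maximal-hard-rule characterization) is just a more careful rendering of the paper's direct symbolic division $W(P,X)=\bigl(SW(P\cup R_1)/SW(Q\cup R_1)\bigr)\,W(Q,X)$, so no substantive difference.
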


\begin{proof}
Suppose $I$ is a probabilistic stable model such that $I \in \psm{Q \cup R_1} \cap \psm{Q \cup R_2}$, 
and there are two constants $c$ and $k$ such that $W(P, I) = exp(c + k*\alpha) * W(Q, I)$. 
Since the programs $P$ and $Q$ are p-strongly equivalent, for any probabilistic stable model $X \in \psm{Q \cup R_1}$, 
we have $Pr(P \cup R_1, X) = Pr(Q \cup R_1, X)$. 
By $SW(Q)$, we denote the sum of weight degrees of probabilistic stable models of an \lpmln program $Q$, i.e. $SW(Q) = \sum_{X \in \psm{Q}} W(Q, X)$. 
By the definition of probability degree, for any probabilistic stable model $X \in \psm{Q \cup R_1}$,  
we have 
\begin{equation}
    Pr(P \cup R_1, X) = Pr(Q \cup R_1, X) = \frac{W(P, X) \times W(R_1, X)}{SW(P \cup R_1)}  =\frac{ W(Q, X)  \times W(R_1, X) }{SW(Q \cup R_1)}
\end{equation}
therefore, we have $W(P, X) = (SW(P \cup R_1) / SW(Q \cup R_1)) * W(Q, X)$ for any $X \in \psm{Q \cup R_1}$. 
Since $I \in \psm{Q \cup R_1}$, we have $SW(P \cup R_1) / SW(Q \cup R_1) = exp(c + k * \alpha)$. 
Similarly, for any probabilistic stable model $X' \in \psm{Q \cup R_2}$, we can derive that  $W(P, X') = exp(c + k * \alpha) * W(Q, X')$. 
Therefore, for any interpretation $Y \in \psm{Q \cup R_1} \cup \psm{Q \cup R_2}$, 
we have shown that $W(P, Y) = exp(c + k * \alpha) * W(Q, Y)$. 
By the definition of the weight degree of SE-model, Lemma \ref{lem:pse-onlyif-part1}  is proven. 
\end{proof}

Proposition \ref{prop:se-model-universe} shows that to construct the necessary extensions w.r.t. an \lpmln program $P$, we only need to consider the interpretations consisting of literals occurred in $P$.

\begin{prop}
\label{prop:se-model-universe}
An SE-interpretation $(X, Y)$ is an SE-model of an \lpmln program $P$ iff $(X \cap \olit{P}, Y \cap \olit{P})$ is an SE-model of $P$, and $W(P, Y) = W(P, Y \cap \olit{P})$.
\end{prop}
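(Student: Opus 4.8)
The plan is to prove the two assertions of the proposition separately, both resting on a single elementary observation: for any ASP rule $r$ with $lit(r) \subseteq \olit{P}$ and any interpretation $Z$, we have $Z \models r$ iff $Z \cap \olit{P} \models r$, since whether $Z$ satisfies $r$ is determined entirely by $Z \cap lit(r)$, and $lit(r) \subseteq \olit{P}$ forces $Z \cap lit(r) = (Z \cap \olit{P}) \cap lit(r)$. I would also record the trivialities that $Z \cap \olit{P}$ inherits consistency from $Z$, so it is again an interpretation, and that $X \subseteq Y$ implies $X \cap \olit{P} \subseteq Y \cap \olit{P}$, so $(X \cap \olit{P}, Y \cap \olit{P})$ is a legitimate SE-interpretation. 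This covers both readings of the statement, since the weight equality turns out to hold unconditionally.

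First I would handle the weight equality $W(P, Y) = W(P, Y \cap \olit{P})$. Applying the observation above to every rule $r$ with $w:r \in P$ (each of which satisfies $lit(r) \subseteq \olit{P}$) gives $Y \models r$ iff $Y \cap \olit{P} \models r$, hence $P_Y = P_{Y \cap \olit{P}}$ as sets of weighted rules. Then $W(P, Y) = W(P_Y) = W(P_{Y \cap \olit{P}}) = W(P, Y \cap \olit{P})$ straight from the definition of the weight degree. This settles one half of the statement and, more importantly, yields the identity $\overline{P_Y} = \overline{P_{Y \cap \olit{P}}}$ that is reused below.

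Next I would establish the SE-model equivalence, whose crux is the claim that $\lglred{P}{Y}$ and $\lglred{P}{Y \cap \olit{P}}$ are literally the same ASP program. The inner \lpmln reducts already agree, so $\overline{P_Y} = \overline{P_{Y \cap \olit{P}}}$; what remains is that the outer GL-reducts taken w.r.t.\ $Y$ and w.r.t.\ $Y \cap \olit{P}$ coincide. But a rule $h(r) \leftarrow b^+(r)$ survives the GL-reduct iff $b^-(r) \cap Y = \emptyset$, and since $b^-(r) \subseteq \olit{P}$ for every $r \in \overline{P}$ we have $b^-(r) \cap Y = b^-(r) \cap (Y \cap \olit{P})$, so exactly the same rules survive. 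Writing $\Pi := \lglred{P}{Y} = \lglred{P}{Y \cap \olit{P}}$ and noting $lit(\Pi) \subseteq \olit{P}$ (neither reduct introduces new literals), the locality observation applied rule-by-rule to $\Pi$ gives $X \models \Pi$ iff $X \cap \olit{P} \models \Pi$. Chaining the equivalences: $(X,Y) \in \lse{P}$ iff $X \models \lglred{P}{Y}$ iff $X \cap \olit{P} \models \lglred{P}{Y \cap \olit{P}}$ iff $(X \cap \olit{P}, Y \cap \olit{P}) \in \lse{P}$.

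None of the steps is genuinely hard; the effort is entirely bookkeeping. The point that needs the most care is the one flagged above — the GL-reduct inside $\lglred{P}{Y}$ is taken with respect to $Y$ itself, not its restriction, so one must separately verify that restricting $Y$ to $\olit{P}$ deletes precisely the same rules. Everything else follows from the fact that all the relevant unweighted, reduced programs have their literals inside $\olit{P}$, together with the locality of rule satisfaction.
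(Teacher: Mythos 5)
Your proof is correct, and it is exactly the argument the paper has in mind: the paper omits the proof as ``straightforward by the definition of SE-models,'' and your locality observation (rule satisfaction depends only on $Z \cap lit(r)$, so the \lpmln reduct, the GL-reduct, and satisfaction of the reduct are unchanged when $X$ and $Y$ are intersected with $\olit{P}$) is precisely that direct unwinding of the definitions. Your explicit care with the GL-reduct being taken w.r.t.\ $Y$ versus $Y \cap \olit{P}$ is a worthwhile detail, but it does not constitute a different route.
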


The proof of Proposition \ref{prop:se-model-universe} is straightforward by the definition of SE-models, 
therefore, we omit the details for brevity. 
By Proposition \ref{prop:se-model-universe}, a set $E$ of \lpmln programs is a set of necessary extensions w.r.t. an \lpmln programs $P$,  
if for any interpretations $X$ and $Y$ in $2^{U^+}$, there exists a program $P'$ of $E$ such that $P \subseteq P'$ and  both of $X$ and $Y$ are probabilistic stable models of $P'$, 
where $\olit{P} \subseteq U$. 
In what follows, we present a method to construct the necessary extensions w.r.t. an \lpmln program. 
Definition \ref{def:r-mxa} provides a basic unit to build a necessary extension, i.e. the flattening rules, 
and Lemma \ref{lem:flattening-rules} shows some important properties of flattening rules. 

\begin{defi}[flattening rules]
	\label{def:r-mxa}
	For two interpretations $X$ and $Y$ such that $X \cap Y = \emptyset$, and an atom $a$ such that neither $a$ or its negation $\neg a$ does not occur in $X \cup Y$, 
	by $R(X,Y,a)$ we denote an \lpmln program as follows
	\begin{eqnarray}
	\alpha &:& \leftarrow X, ~not ~ Y, ~a. \label{def:mxa-rule-1}\\
	\alpha &:& a \leftarrow X, ~not ~ Y. \label{def:mxa-rule-2}
	\end{eqnarray}
\end{defi}

\begin{lem}
\label{lem:flattening-rules}
Let $U$ be a set of literals, $X$ an interpretation of $2^{U^+}$, and  $R(X, U-X, a')$ the flattening rules w.r.t. $X$, $U-X$ and $a'$. 
For any interpretation $I$, 
\begin{itemize}
	\item if $I \cap U = X$, $I$ only satisfies one of the rules in $R(X, U-X, a')$; 
	\item if $I \cap U \neq X$, $I$ satisfies all rules in $R(X, U-X, a')$. 
\end{itemize}
\end{lem}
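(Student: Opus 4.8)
The plan is a direct case analysis on how $I \cap U$ compares with $X$, unwinding only the definition of the flattening rules and the ASP satisfaction relation. Write $Y = U - X$, so that $X \cap Y = \emptyset$, $X \cup Y = U$, and by hypothesis $a'$ occurs in neither $X$ nor $Y$. The constraint \eqref{def:mxa-rule-1} has positive body $X \cup \{a'\}$ and negative body $Y$, and the rule \eqref{def:mxa-rule-2} has head $\{a'\}$, positive body $X$, and negative body $Y$; crucially, both rules share the body fragment ``$X$, $not\ Y$''.

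First I would handle the case $I \cap U = X$. Here two observations do all the work: $X \subseteq I$ (every literal of $X$ lies in $U$, hence in $I \cap U = X$) and $Y \cap I = \emptyset$ (since $I \cap Y = (I \cap U) \cap Y = X \cap Y = \emptyset$), so $I$ satisfies the shared fragment ``$X$, $not\ Y$'' of both bodies. Then I would split on whether $a' \in I$. If $a' \in I$, the full body of \eqref{def:mxa-rule-1} holds and, as that rule has empty head, $I$ does not satisfy \eqref{def:mxa-rule-1}; meanwhile the body of \eqref{def:mxa-rule-2} holds and its head $a'$ is in $I$, so $I$ satisfies \eqref{def:mxa-rule-2}. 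If $a' \notin I$, the roles swap: the body of \eqref{def:mxa-rule-1} fails (because $a'\notin I$), so $I$ satisfies it, while the body of \eqref{def:mxa-rule-2} holds but its head is not in $I$, so $I$ fails it. Either way $I$ satisfies exactly one of the two rules, giving the first bullet.

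Next I would handle the case $I \cap U \neq X$. Since $X \subseteq U$, this inequality splits cleanly into two subcases: either $X \not\subseteq I$, in which case the positive body of each of the two rules fails on $I$; or $X \subseteq I$ but $I \cap U$ strictly contains $X$, so $Y \cap I \neq \emptyset$ and the negative body ``$not\ Y$'' of each rule fails on $I$. In both subcases $I$ fails to satisfy the body of each flattening rule, hence satisfies both rules vacuously, which is the second bullet, and the proof is complete.

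There is no real obstacle here: the statement is a mechanical consequence of the definitions of $R(X,Y,a)$ and of rule satisfaction. The only mildly delicate points are, in the first case, recognizing that membership of $a'$ in $I$ is precisely the toggle that decides which of the two rules is satisfied, and, in the second case, the bookkeeping showing that the single inequality $I \cap U \neq X$ already forces the body of both flattening rules to fail on $I$.
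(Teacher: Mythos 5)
Your proof is correct and follows essentially the same route as the paper's: the first case is settled by noting $X \subseteq I$, $(U-X) \cap I = \emptyset$ and toggling on $a' \in I$, and the second case splits into the positive body failing or the negative body failing (your split on $X \subseteq I$ is just a trivial rephrasing of the paper's split on $I \cap U \subset X$ versus $I \cap U \not\subset X$). No gaps.
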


\begin{proof}
\textbf{Part 1. }
For the case $I \cap U = X$, we have $X \subseteq I$ and $(U-X) \cap I = \emptyset$. 
For flattening rules $R(X, U-X, a')$, if $a' \in I$, $I$ only satisfies the rule of the form \eqref{def:mxa-rule-2}; and if  $a' \not\in I$, $I$ only satisfies the rule of the form \eqref{def:mxa-rule-1}. 
Therefore, $I$ only satisfies one of the rules in $R(X, U-X, a')$. 

\textbf{Part 2. }
For the case $I \cap U \neq X$, there are two cases: $I \cap U \subset X$ and $I \cap U \not\subset X$. 
If $I \cap U \subset X$, it is easy to check the positive bodies of two rules in $R(X, U-X, a')$ cannot be satisfied by $I$, which means $I \models R(X, U-X, a')$. 
If $I \cap U \not\subset X$, it is easy to check the negative bodies of two rules in $R(X, U-X, a')$ cannot be satisfied by $I$, which means $I \models R(X, U-X, a')$. 
\end{proof}
\begin{exa}
Consider a set $U = \{a, b, c\}$ of literals and an interpretation $X=\{a, b\}$, 
by Definition \ref{def:r-mxa}, the flattening rules $R(X, U-X, a')$ are as follows
\begin{eqnarray}
\alpha &:& \leftarrow a, b, not ~c, a'. \label{ex:mxa-rule-1} \\
\alpha &:& a' \leftarrow a, b, not ~c. \label{ex:mxa-rule-2}
\end{eqnarray}
It is easy to check that $X$ only satisfies rule \eqref{ex:mxa-rule-1}, 
and other consistent subsets of $U$ satisfy total $R(X, U-X, a')$. 
Therefore, the flattening rules $R(X, U-X, a')$ can be used to relatively decrease the number of hard rules satisfied by $X$. 
\end{exa}

Lemma \ref{lem:flattening-rules} shows that for a set $U$ of literals and an interpretation $X \in 2^{U^+}$, the flattening rules $R(X, U-X, a')$ can be used to narrow the difference between the numbers of hard rules satisfied by $X$ and other interpretations. 
Moreover, for other interpretations, $R(X, U-X, a')$ does not change the difference among the numbers of hard rules satisfied by them. 
Therefore, the flattening rules can be used to adjust the probabilistic stable models of \lpmln programs, 
which is shown in Definition \ref{def:flattening-extension} and Lemma \ref{lem:flattening-extension-prop}.

\begin{defi}[flattening extension]
	\label{def:flattening-extension}
	For an \lpmln program $P$ and a set $U$ of literals such that $lit(\overline{P}) \subseteq U$, 
	a flattening extension $E^k(P, U)$ of $P$ w.r.t. $U$ is defined as 
	\begin{itemize}
		\item $E^0(P, U) = P \cup R_0$, where $R_0$ is a set of weighted facts constructed from $U$, which is 
		\begin{equation}
			R_0 = \{\alpha : a_k. ~|~ a_k \in U\}
		\end{equation}
		\item $E^{i+1}(P, U) = E^i(P, U) \cup R(X \cap U, U-X,c_{i+1})$, where $X \in \psm{E^i(P, U)}$ and $c_{i+1}$ is an atom such that $c_{i+1} \not\in  \oat{E^i(P, U)}$. 
	\end{itemize}
\end{defi}

Note that in the second step of Definition \ref{def:flattening-extension}, $X$ could be an arbitrary probabilistic stable model of the current program. 
Since the flattening rules are used to narrow the difference between the numbers of hard rules satisfied by probabilistic and non-probabilistic stable models, 
and all of the probabilistic stable models satisfy the same numbers of hard rules. 
But if a specific interpretation $Y$ is required to be a probabilistic stable model of a flattening extension, we should avoid picking $Y$ to construct new flattening extensions. 
Since $Y$ may not be a probabilistic stable model of a flattening extension w.r.t. $Y$, which can be observed from the following results. 

\begin{lem}
	\label{lem:flattening-extension-prop}
	For an \lpmln program $P$ and a set $U$ of literals such that $lit(\overline{P}) \subseteq U$, 
	$E^{k+1}(P, U)$ is a flattening extension of $P$ in Equation \eqref{eq:ex-flattening-extension-program}, 
	\begin{equation}
		\label{eq:ex-flattening-extension-program}
		E^{k+1}(P, U) = E^k(P, U) \cup R(X \cap U, U-X,c_{k+1})
	\end{equation}
	we have following results 
	\begin{itemize}
		\item $\lsm{E^0(P, U)} = 2^{U^+}$; 
		\item $\lsm{E^{k+1}(P, U)} = \lsm{E^k(P, U)} \cup  
		\{ Y \cup \{c_{k+1}\} ~|~ Y \in \lsm{E^k(P, U)} \text{ and } Y \cap U = X \cap U\} \}$; and 
		\item the weight degrees of stable models have following relationships
		\begin{equation}
		W(E^{k+1}(P, U), Y) = \begin{cases}
		W(E^{k}(P, U), Y) * e^{2\alpha} & \text{ if } Y \cap U \neq X \cap U, \\
		W(E^{k}(P, U), Y) * e^{\alpha} & \text{ otherwise. }
		\end{cases}
		\end{equation}
		and for two stable models $Y$ and $Z$ of $E^i(P, U)$ $(i > 0)$, if $Y \cap U = Z \cap U$, then $W(E^i(P, U), Y) = W(E^i(P, U), Z)$.
	\end{itemize}
\end{lem}

\begin{exa}
	\label{ex:flattening-extension}
	Recall the \lpmln program $P$ in Example \ref{ex:se-model}, let $U = \{a, b\}$ be a set of literals, 
	it is clear that $\olit{P} = U$. 
	By Definition \ref{def:flattening-extension}, $E^0(P, U) = P \cup \{\alpha : a. ~ \alpha : b.  \}$, 
	it is easy to check that all subsets of $U$ are the stable models of $E^0(P, U)$, and $U$ is the unique probabilistic stable model. 
	By Definition \ref{def:r-mxa}, the flattening rules $R(U, \emptyset, c_{1})$ are as follows 
	\begin{eqnarray}
	\alpha &:& \leftarrow a, ~b, ~c_1.\\
	\alpha &:& c_1 \leftarrow a, ~b.
	\end{eqnarray}
	and we have  $E^1(P, U) = E^0(P, U) \cup R(U, \emptyset, c_1)$. 
	The stable models and their weight degrees of $P$, $E^0(P, U)$, and $E^1(P, U)$ are shown in Table \ref{tab:flattening-extension}.
	From the table, we can observe that $U$ is a probabilistic stable model of $E^0(P, U)$.  
	After adding rules $R(U, \emptyset, c_{1})$,  the number of hard rules satisfied by $U$ decreases relatively. 
	Although $U$ is still a probabilistic stable model of $E^1(P, U)$, non-probabilistic stable models $\{a\}$ and $\{b\}$ of $E^0(P, U)$ become probabilistic stable models of  $E^1(P, U)$. 
	Let $E^2(P, U) = E^1(P, U) \cup R(U, \emptyset, c_2)$, it is easy to check that the interpretation $U$ becomes a non-probabilistic stable model of $E^2(P, U)$.
	\begin{table}
		\begin{threeparttable}
			\centering
			\caption{Computing Results in Example \ref{ex:flattening-extension}}
			\label{tab:flattening-extension}
      \def\arraystretch{1.3} 
			\begin{tabular}{cccccccc}
				\hline
				Stable Model $S$ & $\emptyset$ & $\{a\}$  & $\{b\}$ & $\{a, b\}$ & $\{a, b, c_1\}$ & $\{a, b, c_2\}$ & $\{a, b, c_1, c_2\}$ \\ 
				\hline
				$W(P, S)$ & $e^2$ & $e^{\alpha + 2}$ & $e^{\alpha + 2}$ & $-$ & $-$  & $-$ & $-$ \\ 
				$W(E^0(P, U), S)$ & $e^2$ & $e^{2\alpha + 2}$ & $e^{2\alpha + 2}$ & $e^{3\alpha}$ & $-$  & $-$ & $-$ \\ 
				$W(E^1(P, U), S)$ & $e^{2\alpha + 2}$ & $e^{4\alpha + 2}$ & $e^{4\alpha + 2}$ & $e^{4\alpha}$ & $e^{4\alpha}$  & $-$ & $-$ \\ 
				$W(E^2(P, U), S)$ & $e^{4\alpha + 2}$ & $e^{6\alpha + 2}$ & $e^{6\alpha + 2}$ & $e^{5\alpha}$ & $e^{5\alpha}$ & $e^{5\alpha}$ & $e^{5\alpha}$ \\ 
				\hline
			\end{tabular}
			\begin{tablenotes}
				\item[*] ``$-$'' means ``not a stable model''.
			\end{tablenotes}
		\end{threeparttable}
	\end{table}
\end{exa}

By Lemma \ref{lem:flattening-extension} and Example \ref{ex:flattening-extension}, 
it has shown that how flattening extensions adjust the numbers of hard rules satisfied by a stable model. 
The following results show how to construct a necessary extension of two p-strongly equivalent programs by using flattening extensions. 

\begin{lem}
	\label{lem:flattening-extension}
	Let $P$ and $Q$ be p-strongly equivalent \lpmln programs, and $U = lit(\overline{P \cup Q})$. 
	For any interpretations $X$ and $Y$ of $2^{U^+}$, there exists a flattening extension $E^k(P, U)$ such that both $X$ and $Y$ are probabilistic stable models of $E^k(P, U)$.
\end{lem}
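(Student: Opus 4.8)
The plan is to construct the required extension by repeatedly adjoining flattening rules that ``pull down'' every probabilistic stable model other than $X$ and $Y$, while carefully tracking how many hard rules each interpretation satisfies.

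First I would set up bookkeeping. By Lemma~\ref{lem:flattening-extension-prop}, $\lsm{E^0(P,U)}=2^{U^+}$, the family $\lsm{E^i(P,U)}$ is increasing in $i$, every stable model of $E^i(P,U)$ has its $U$-projection $S\cap U$ in $2^{U^+}$, and the weight degree of a stable model---hence the number of hard rules it satisfies---depends only on its $U$-projection. So one can define $f_i\colon 2^{U^+}\to\mathbb{N}$ by letting $f_i(Z)$ be the number of hard rules of $E^i(P,U)$ satisfied by any stable model whose $U$-projection is $Z$ (such a stable model exists, since $Z\in\lsm{E^0(P,U)}\subseteq\lsm{E^i(P,U)}$); then for $V\in 2^{U^+}$ one has $V\in\psm{E^i(P,U)}$ iff $f_i(V)=\max_{Z\in 2^{U^+}}f_i(Z)$. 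By Definition~\ref{def:flattening-extension} and the weight clause of Lemma~\ref{lem:flattening-extension-prop} (both flattening rules are hard), extending by $R(W, U-W, c)$ with $c$ fresh, where $W\in 2^{U^+}$ is a maximizer of $f_i$ (equivalently $W\in\psm{E^i(P,U)}$), produces $E^{i+1}(P,U)$ with $f_{i+1}(W)=f_i(W)+1$ and $f_{i+1}(Z)=f_i(Z)+2$ for all $Z\neq W$. I will call this a \emph{move at $W$}, and note that Definition~\ref{def:flattening-extension} permits moves only at current maximizers of $f_i$.

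Next, since the statement is symmetric in $X$ and $Y$, I would assume without loss of generality that $f_0(X)\le f_0(Y)$ and work with $g_i(Z)=f_i(Z)-f_i(X)$, so that $g_i(X)=0$ always and $g_0(Y)\ge 0$. The key observation is that whenever $g_i$ has a positive value, any maximizer $W$ of $g_i$ differs from $X$, and a move at $W$ then merely replaces $g_i(W)$ by $g_i(W)-1$, leaving all other values of $g_i$ fixed---in particular it changes $g_i(Y)$ only if $W=Y$, which requires $g_i(Y)=\max_Z g_i(Z)$. The construction is then the process: while $\max_Z g_i(Z)>0$, choose any maximizer $W$ of $g_i$ (necessarily $W\neq X$) and perform a move at $W$; otherwise stop. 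It terminates because $\Phi_i=\sum_{Z\in 2^{U^+}}\max(g_i(Z),0)$ is a nonnegative integer dropping by exactly $1$ at each step (a move lowers the single value $g_i(W)=\max_Z g_i(Z)\ge 1$ by one and nothing else), so after $k:=\Phi_0$ steps one reaches $E^k(P,U)$ with $\max_Z g_k(Z)=0$, i.e.\ $\max_Z f_k(Z)=f_k(X)$, hence $X\in\psm{E^k(P,U)}$. Finally, $g_i(Y)$ is only ever lowered by a move at $Y$, which is taken only while $g_i(Y)=\max_Z g_i(Z)\ge 1$, so the invariant $g_i(Y)\ge 0$ is maintained; thus at termination $0\le g_k(Y)\le\max_Z g_k(Z)=0$, giving $f_k(Y)=\max_Z f_k(Z)$ and $Y\in\psm{E^k(P,U)}$ as well.

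The main obstacle will be the restriction built into Definition~\ref{def:flattening-extension}: one may not pull down an arbitrary interpretation, only a current probabilistic stable model, that is, a current maximizer of $f_i$. So $X$ cannot be ``raised'' to the top directly; one can only keep lowering whatever is on top. The argument above is designed to show this is enough, and the delicate point is to avoid overshooting $Y$, i.e.\ to guarantee that $Y$ never falls strictly below the eventual maximum; this is exactly what the maximizer restriction buys, since a move at $Y$ is taken only while $g_i(Y)$ is (tied for) the current maximum and hence at least $1$, so $g_i(Y)$ can never become negative and the potential argument then pins it to exactly $0$ at the end. (The hypothesis $P\equiv_{s,p}Q$ is not actually used in this lemma; it is recorded because these flattening extensions will later serve, via Proposition~\ref{prop:se-model-universe}, as necessary extensions of both $P$ and $Q$.)
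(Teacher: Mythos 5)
Your proposal is correct and follows essentially the same route as the paper's proof: iteratively extend by flattening rules chosen at current probabilistic stable models (using Lemma~\ref{lem:flattening-extension-prop} for the $+1$/$+2$ hard-rule bookkeeping) until both $X$ and $Y$ sit at the maximum, arguing that neither target can be flattened past the other because a move is only permitted at a current maximizer. Your potential-function and invariant formulation is just a cleaner packaging of the paper's two-step argument (first drive the top down to the higher of the two interpretations, then to the lower one, with a minimality-based contradiction keeping the first on top); the construction and the key supporting facts are the same.
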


\begin{proof}
By Lemma \ref{lem:flattening-extension-prop}, we have both of $X$ and $Y$ are stable models of the program $E^0(P, U)$.
Without loss of generality, we assume $h(E^0(P, U), X) \geq h(E^0(Q, U), Y)$. 
We prove Lemma \ref{lem:flattening-extension} by showing a two-step method to construct a flattening extension $E^k(P, U)$ such that $X \in \psm{E^k(P, U)}$ and $Y \in \psm{E^k(P, U)}$. 
In Step 1, we show there exists a minimal number $k_1$ such that $X \in \psm{E^{k_1}(P, U)}$; 
in Step 2, we show that there exists a minimal number $k_2 \geq k_1$ such that $X \in \psm{E^{k_2}(P, U)}$ and $Y \in \psm{E^{k_2}(P, U)}$. 
In the proof, by $d(X, i)$, we denote the difference between numbers of hard rules of $E^i(P, U)$ satisfied by a probabilistic stable model and $X$, 
i.e. $d(X, i) = h(E^i(P, U), X') - h(E^i(P, U), X)$, where $X' \in \psm{E^i(P, U)}$. 
According to the definition, it is easy to check that the minimum value of $d(X, i)$ is zero. 

\textbf{Step 1.}
If $d(X, 0) = 0$, then $E^0(P, U)$ is a minimal flattening extension such  that $X$ becomes a probabilistic stable model. 
If $d(X, 0) > 0$ and $n$ is an integer such that $d(X, i) > 0 ~ (0 \leq i \leq n)$, there are two possible cases for each number $i ~(0 \leq i \leq n)$: 
(1) $E^i(P, U)$ has exactly one probabilistic stable model, i.e. $|\psm{E^i(P, U)}| = 1$, it is easy to check that $d(X, i+1) = d(X, i) -1$; 
(2) $E^i(P, U)$ has multiple probabilistic stable models, i.e. $|\psm{E^i(P, U)}| > 1$, it is easy to check that $d(X, i+1) = d(X, i)$ and $|\psm{E^{i+1}(P, U)}| = |\psm{E^i(P, U)}| - 1$. 
Therefore, $d(X, i)$ is a monotonically decreasing function over the interval $(0, n)$, and there is always an integer $j > i$ such that $d(X, j) < d(X, i)$, 
which means there exists a minimal number $k_1 > n$ such that $d(X, k_1) = 0$. 
That is, $X$ becomes a probabilistic stable model of $E^{k_1}(P, U)$.

\textbf{Step 2.} 
Since $h(E^0(P, U), Y) \leq h(E^0(P, U), X)$, 
for the flattening extension $E^{k_1}(P, U)$, we have $h(E^{k_1}(P, U), Y) = h(E^{0}(P, U), Y) + 2  k_1 \leq h(E^{0}(P, U), X) + 2  k_1 = h(E^{k_1}(P, U), X)$. 
If $h(E^{k_1}(P, U), Y) = h(E^{k_1}(P, U), X)$, $E^{k_1}(P, U)$ is the flattening extension such that both of $X$ and $Y$ become probabilistic stable models.
Otherwise, we have $X \in \psm{E^{k_1}(P, U)}$, while $Y \not\in \psm{E^{k_1}(P, U)}$, 
therefore, we need to further extend $E^{k_1}(P, U)$. 
As we discussed in Step 1, there is a minimal number $k_2 \geq k_1$ such that $Y$ becomes a probabilistic stable model of $E^{k_2}(P, U)$.
We need to show that $X$ is also a probabilistic stable model of $E^{k_2}(P, U)$. 
We use proof by contradiction. 
Assume $X \not\in \psm{E^{k_2}(P, U)}$, there exists an integer $k'$ such that $k_1 < k' < k_2$, $h(E^{k'}(P, U), X) = h(E^{k'}(P, U), Y)$ and $E^{k'+1}(P, U) = E^{k'}(P, U) \cup R(X \cap U, U-X, c')$, 
which means both of $X$ and $Y$ are probabilistic stable models of $E^{k'}(P, U)$. 
It contradicts with the premise that $k_2$ is the minimal integer such that $Y \in 
\psm{E^{k_2}(P, U)}$.
Therefore, both $X$ and $Y$ are probabilistic stable models of $E^{k_2}(P, U)$,  Lemma \ref{lem:flattening-extension} is proven.
\end{proof}
Lemma \ref{lem:flattening-extension} shows that one can construct a set of necessary extensions of two p-strongly equivalent \lpmln programs by constructing a set of flattening extensions. 
Combining Lemma \ref{lem:lpmln-strong-equivalence-suf} and  Lemma \ref{lem:pse-onlyif-part1}, we have found a sufficient and necessary condition to characterize the p-strong equivalence for \lpmln programs, which is shown in Theorem \ref{thm:lpmln-strong-equivalence-w}.

\begin{thm}
	\label{thm:lpmln-strong-equivalence-w}
	Two \lpmln programs $P$ and $Q$ are p-strongly equivalent iff they are semi-strongly equivalent, 
	and there exist two constants $c$ and $k$ such that 
	for each SE-model $(X,Y) \in \lse{P}$, $W(P,(X,Y)) = exp(c + k*\alpha) * W(Q,(X,Y))$.
\end{thm}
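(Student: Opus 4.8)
The plan is to prove the two directions separately, relying on the machinery assembled above. The ``if'' direction needs no new work: it is verbatim the statement of Lemma~\ref{lem:lpmln-strong-equivalence-suf}.

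For the ``only if'' direction, assume $P \equiv_{s,p} Q$. Since p-strong equivalence entails semi-strong equivalence directly from the definitions (p-ordinary equivalence already demands the same stable models), Lemma~\ref{lem:lpmln-sm-strong-equiv} gives $\lse{P} = \lse{Q}$. I would then perform two reductions. First, by Definition~\ref{def:lse-weight-degree} the weight $W(P,(X,Y))$ equals $W(P_Y)$, hence depends only on $Y$; combined with Lemma~\ref{lem:XX-lse-model-LM} (every total pair $(Y,Y)$ is an SE-model), the desired conclusion is equivalent to: there exist constants $c,k$ with $W(P_Y) = exp(c + k*\alpha) * W(Q_Y)$ for every interpretation $Y$. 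Second, setting $U := lit(\overline{P \cup Q})$ and invoking Proposition~\ref{prop:se-model-universe} together with the fact that satisfaction of a rule of $P$ or $Q$ depends only on the literals in $U$, it suffices to prove this identity for the (finitely many) $Y \in 2^{U^+}$.

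The heart of the argument is to fix the scaling factor once and for all using an anchor interpretation $Y_0 := \emptyset \in 2^{U^+}$, and to show that the factor obtained for any other $Y \in 2^{U^+}$ must agree with the one obtained at $Y_0$. Fix $Y \in 2^{U^+}$. By Lemma~\ref{lem:flattening-extension} there is a flattening extension $E^m(P,U) = P \cup R$ such that $Y_0, Y \in \psm{P \cup R}$. Because $P \equiv_{s,p} Q$ yields $P \cup R \equiv_p Q \cup R$, the two extended programs share the same probability distribution, hence the same probabilistic stable models, so $Y_0, Y \in \psm{Q \cup R}$ as well. Applying Lemma~\ref{lem:pse-onlyif-part1} with $R_1 = R_2 = R$ (their probabilistic stable models trivially intersect) produces constants $c_Y, k_Y$ with $W(P,(X',Y')) = exp(c_Y + k_Y*\alpha) * W(Q,(X',Y'))$ for every SE-model $(X',Y')$ of $Q$ satisfying $Y' \in \psm{Q \cup R}$. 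Instantiating at the total SE-models $(Y_0,Y_0)$ and $(Y,Y)$ gives $W(P_Y) = exp(c_Y + k_Y*\alpha) * W(Q_Y)$ and $exp(c_Y + k_Y*\alpha) = W(P_{Y_0}) / W(Q_{Y_0})$; the latter quantity does not depend on $Y$, and by the definition of the weight degree it has the form $exp(c + k*\alpha)$ for a fixed real $c$ and a fixed integer $k$ (the real part of the resulting exponent being $c$, the coefficient of $\alpha$ being $k$). Hence $W(P_Y) = exp(c + k*\alpha) * W(Q_Y)$ for all $Y \in 2^{U^+}$, and by the reduction above for all interpretations $Y$, which is exactly the PSE-condition.

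I expect the main difficulty to be bookkeeping rather than conceptual. The three points requiring care are: (i) extracting a single pair of constants $c,k$ that works uniformly over all SE-models, handled by the anchoring at $Y_0$ and the observation that $W(P_{Y_0})/W(Q_{Y_0})$ is the same symbolic quantity regardless of $Y$; (ii) justifying that equality of probability distributions forces $\psm{P \cup R} = \psm{Q \cup R}$, so that a flattening extension built from $P$ is simultaneously a legitimate extension of $Q$ with the prescribed probabilistic stable models; and (iii) checking that the quotient of symbolic weight degrees $W(P_{Y_0})/W(Q_{Y_0})$ can genuinely be written as $exp(c + k*\alpha)$. Once Lemmas~\ref{lem:pse-onlyif-part1} and~\ref{lem:flattening-extension} and Proposition~\ref{prop:se-model-universe} are in hand, what remains is precisely this assembly.
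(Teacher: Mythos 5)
Your proof is correct and follows essentially the same route as the paper: Lemma~\ref{lem:lpmln-strong-equivalence-suf} for sufficiency, and for necessity the combination of Lemma~\ref{lem:pse-onlyif-part1} with the flattening extensions of Lemma~\ref{lem:flattening-extension} and Proposition~\ref{prop:se-model-universe}. The only difference is that you spell out the assembly the paper leaves implicit, anchoring at $Y_0 = \emptyset$ to pin down a single pair of constants $c,k$ uniformly over all interpretations.
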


\begin{exa}
	\label{ex:p-strong-equivalence}
	Recall Example \ref{ex:semi-strong-equivalence}, it has been shown that the programs $P'$ and $Q'$ are semi-strongly equivalent. 
	If the programs are also p-strongly equivalent, we have following system of linear equations, where $\mathcal{C} =  exp(k*\alpha + c)$ and $U = \{a, b\}$.
	
	\begin{equation}
	\left\{
	\begin{array}{l}
	W(P', (\emptyset, \emptyset)) = W(Q', (\emptyset, \emptyset)) * \mathcal{C} \\
	W(P', (\{a\}, \{a\})) = W(Q', (\{a\}, \{a\})) * \mathcal{C} \\
	W(P', (\{b\}, \{b\})) = W(Q', (\{b\}, \{b\})) * \mathcal{C} \\
	W(P', (U, U)) = W(Q', (U, U)) * \mathcal{C}
	\end{array}
	\right.
	\Rightarrow
	\left\{
	\begin{array}{l}
	w_2 = c + k*\alpha \\
	w_1 = w_4 + c + k * \alpha \\
	w_1 + w_2 = w_3 + w_4 + c + k * \alpha
	\end{array}
	\right.
	\end{equation}
	Solve the system of equations, 
	we have $P'$ and $Q'$ are p-strongly equivalent iff $w_2 = w_3 = c + k*\alpha$ and $w_1 = w_4 + c + k*\alpha$. 
	According to the syntax of \lpmln rules, the value of $w_2$ is either a real number or ``$\alpha$'', which means $w_2 = w_3 = c$ or $w_2 = w_3 = \alpha$.
	Therefore, there are two kinds of solutions of the systems of equations: (1) 
	$w_2 = w_3 = c$ and $w_1 = w_4 + c$; (2) $w_1 = w_2 = w_3 = \alpha$ and $w_4=0$. 
\end{exa}

\subsection{Discussion on Approximate Strong Equivalence}
From Example \ref{ex:p-strong-equivalence}, we can observe that for two semi-strongly equivalent \lpmln programs, 
there are usually some very strict conditions to make the programs p-strongly equivalent, 
which is hard to achieve in many cases. 
Therefore, we need to consider a kind of relaxed notion of the p-strong equivalence. 
 
Recall the notions of strong equivalences introduced in this section,  
the p-strong equivalence can be viewed as a kind of exact strong equivalence, 
and the semi-strong equivalence is a kind of approximate strong equivalence.
By contrast, all probabilistic inference results of two p-strongly equivalent \lpmln programs coincide, 
while none of the probabilistic inference results of two semi-strongly equivalent \lpmln  programs coincide in general. 
A possible way to investigate approximate strong equivalence is to find a kind of intermediate notion between the p-strong and semi-strong equivalences, 
i.e. we only consider the equivalence w.r.t. part of probabilistic inference tasks. 
For example, the following definition of q-strong equivalence seems a kind of approximate strong equivalence.

\begin{defi}[q-strong equivalence]
	\label{def:q-strong-equivalence}
	Two \lpmln programs $P$ and $Q$ are q-strongly equivalent, denoted by $P \equiv_{s,q} Q$, 
	if for any \lpmln programs $R$, $\lsm{P \cup R} = \lsm{Q \cup R}$, 
	and for any stable models $X, Y \in \lsm{P \cup R}$,  $W(P \cup R, X) \leq W(P \cup R, Y)$ iff $W(Q \cup R, X) \leq W(Q \cup R, Y)$. 
\end{defi} 

Apparently, the MPD inference results of two q-strongly equivalent \lpmln do not coincide in general, only the MAP inference results coincide. 
The notion of q-strong equivalence is useful for the programs rewriting of many applications. 
On the one hand, in many problems such as qualitative decision-making, we are interested in the optional solutions or the solutions sorted by optimum degrees, which means the exact optimum degree of a solution is not needed. 
On the other hand, in many scenarios, the probability distribution of data is not easy to get, such as personal preferences for something that are used in the recommender systems. 
However, the following result shows that the q-strong equivalence is just a reformulation of the p-strong equivalence.

\begin{thm}
\label{thm:q-se}
Two \lpmln programs are q-strongly equivalent iff they are p-strongly equivalent. 
\end{thm}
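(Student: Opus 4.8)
The plan is to reduce both implications to the weight-characterization of p-strong equivalence in Theorem~\ref{thm:lpmln-strong-equivalence-w}. Two preliminary observations streamline this. First, $W(P,(X,Y))=W(P_Y)$ depends only on the second component $Y$, and by Lemma~\ref{lem:XX-lse-model-LM} every total SE-interpretation $(Y,Y)$ is an SE-model; hence the weight clause of Theorem~\ref{thm:lpmln-strong-equivalence-w} is equivalent to: there exist constants $c,k$ with $W(P,Y)=exp(c+k\alpha)\cdot W(Q,Y)$ for \emph{every} interpretation $Y$. Second, for any \lpmln program $R$ one has $W(P\cup R,Y)=W(P,Y)\cdot W(R,Y)$, and each weight degree is a symbolic term $exp(a+b\alpha)$ with $a$ real and $b$ an integer (nonnegative for a genuine weight degree, possibly negative for a quotient of two of them), which I compare in the natural ``$\alpha\to\infty$'' order: $exp(a+b\alpha)\le exp(a'+b'\alpha)$ iff $b<b'$, or $b=b'$ and $a\le a'$.

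For ``p-strongly equivalent $\Rightarrow$ q-strongly equivalent'', Theorem~\ref{thm:lpmln-strong-equivalence-w} supplies $P\equiv_{s,s}Q$ and constants $c,k$ as above, so for every $R$ and $Y$, $W(P\cup R,Y)=exp(c+k\alpha)\cdot W(Q\cup R,Y)$. Since $exp(c+k\alpha)>0$, multiplying by it preserves the order, giving $W(P\cup R,X)\le W(P\cup R,Y)$ iff $W(Q\cup R,X)\le W(Q\cup R,Y)$; together with $\lsm{P\cup R}=\lsm{Q\cup R}$ this is q-strong equivalence.

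The substantial direction is ``q-strongly equivalent $\Rightarrow$ p-strongly equivalent''. The definition of q-strong equivalence already includes $\lsm{P\cup R}=\lsm{Q\cup R}$ for all $R$, i.e. $P\equiv_{s,s}Q$, so $\lse{P}=\lse{Q}$ by Lemma~\ref{lem:lpmln-sm-strong-equiv}; it remains to produce $c,k$ with $W(P,Y)=exp(c+k\alpha)W(Q,Y)$ for all $Y$. Write $W(P,Y)/W(Q,Y)=exp(\rho_Y+\sigma_Y\alpha)$; it is enough to show $\sigma_{Y_1}=\sigma_{Y_2}$ and $\rho_{Y_1}=\rho_{Y_2}$ for an arbitrary pair $Y_1\ne Y_2$. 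Fix such a pair, put $U=lit(\overline{P})\cup lit(\overline{Q})\cup Y_1\cup Y_2$, and use as base context the hard facts $R_U=\{\alpha:a.\mid a\in U\}$, for which $\lsm{P\cup R_U}=\lsm{Q\cup R_U}=2^{U^+}$ by Lemma~\ref{lem:flattening-extension-prop}; thus $Y_1,Y_2$ are stable models of both extended programs. Since $Y_1\ne Y_2$, fix a literal $\ell^\ast$ lying in exactly one of them. The key device is a stock of ``harmless'' weighted constraints built from $\ell^\ast$ --- constraints of the form $\leftarrow not~\ell^\ast$ and $\leftarrow\ell^\ast$, each satisfied by exactly one of $Y_1,Y_2$, with a fresh atom optionally appended to the body so that arbitrarily many non-identical copies may be added, and with weights we choose freely (real, or $\alpha$). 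A direct check from the semantics shows that adding such a constraint to $P\cup R_U$ or to $Q\cup R_U$ does not disturb the stable-model status of $Y_1$ or of $Y_2$ --- if the interpretation satisfies the constraint, the constraint is deleted in the GL-reduct; if not, it is absent from the \lpmln reduct --- while it multiplies the weight degree of exactly one of $Y_1,Y_2$ by $exp(w)$, $w$ being its weight. Hence, starting from $R_U$ and adding such constraints, we can realize for $W(R,Y_1)/W(R,Y_2)$ any value $exp(\delta+m\alpha)$ with $m$ an arbitrary integer and $\delta$ an arbitrary real, and the effect on the $P$- and $Q$-extensions is identical. Writing $W(P,Y_1)/W(P,Y_2)=exp(\gamma_P+\beta_P\alpha)$ and $W(Q,Y_1)/W(Q,Y_2)=exp(\gamma_Q+\beta_Q\alpha)$ --- so $\beta_P-\beta_Q=\sigma_{Y_1}-\sigma_{Y_2}$ and $\gamma_P-\gamma_Q=\rho_{Y_1}-\rho_{Y_2}$ --- the q-strong relation ``$W(P\cup R,Y_1)\le W(P\cup R,Y_2)$ iff $W(Q\cup R,Y_1)\le W(Q\cup R,Y_2)$'' reads, for all integers $m$ and reals $\delta$: the condition ($\beta_P+m<0$, or $\beta_P+m=0$ and $\gamma_P+\delta\le0$) holds iff the condition ($\beta_Q+m<0$, or $\beta_Q+m=0$ and $\gamma_Q+\delta\le0$) holds. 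Taking $m=-\beta_Q$ makes the second condition equivalent to $\gamma_Q+\delta\le0$, while the first is governed by $\beta_P+m=\sigma_{Y_1}-\sigma_{Y_2}$; were this nonzero the first condition would be constantly true or constantly false while $\delta$ still toggles the second, a contradiction, so $\sigma_{Y_1}=\sigma_{Y_2}$. Then $m=-\beta_Q=-\beta_P$ reduces both conditions to comparisons of $\gamma_P+\delta$ and $\gamma_Q+\delta$ with $0$, and letting $\delta$ vary forces $\gamma_P=\gamma_Q$, i.e. $\rho_{Y_1}=\rho_{Y_2}$. The (now pair-independent) common values of $\sigma$ and $\rho$ are the sought $k$ and $c$, and Theorem~\ref{thm:lpmln-strong-equivalence-w} concludes.

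I expect the main obstacle to be this last construction. The q-strong condition exposes only the lexicographic ``limit order'' of the symbolic weight degrees, so to recover equality of the full quotient $W(P,Y)/W(Q,Y)$ one must engineer, uniformly over every pair $Y_1,Y_2$, contexts in which one integer quantity (a difference of satisfied-hard-rule counts) and one real quantity (a difference of real exponents) can be tuned independently while the set of stable models is provably left untouched; verifying that the $\ell^\ast$-constraints are genuinely harmless and that their effects compose as stated is the delicate part.
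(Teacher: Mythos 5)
Your proof is correct, and its skeleton matches the paper's: both directions ultimately rest on the weight characterization of p-strong equivalence, and the hard direction works by engineering context programs that leave the stable models untouched while tuning the ratio $W(R,Y_1)/W(R,Y_2)$ arbitrarily, so that unequal ratios $W(P,Y)/W(Q,Y)$ would let one flip the weight order for $P\cup R$ but not for $Q\cup R$. The differences are in the execution, and they are worth noting. The paper argues the only-if direction by contradiction, introducing ratios $C_1,\dots,C_4$ and appealing to the flattening rules of Definition~\ref{def:r-mxa}/Lemma~\ref{lem:flattening-extension} ("we can obtain arbitrary values of $C_4$ by using proper flattening rules"), without spelling out how the symbolic weights $exp(a+b\alpha)$ are compared. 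You instead give a direct argument: you take the base context $R_U=\{\alpha:a.\mid a\in U\}$ (which is exactly $E^0(P,U)$, so Lemma~\ref{lem:flattening-extension-prop} gives $\lsm{P\cup R_U}=2^{U^+}$), add weighted constraints on a literal $\ell^\ast$ distinguishing $Y_1$ from $Y_2$ (harmless by the same observation that later yields CONSTR1 in Theorem~\ref{thm:semi-valid}), and make the lexicographic ``$\alpha\to\infty$'' order explicit, pinning down the integer $\alpha$-coefficient and the real part separately before invoking Theorem~\ref{thm:lpmln-strong-equivalence-w}; this is more self-contained and more explicit than the paper's sketch. Your easy direction, derived from the PSE-condition rather than declared obvious, is also the more careful route, since equality of probabilities alone says nothing about the weight order among non-probabilistic stable models. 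One small point to tighten: the fresh atoms used to create non-identical copies of the constraints must be appended \emph{negatively} (e.g.\ $\alpha:\;\leftarrow not~\ell^\ast, not~c_i$); appended positively, the copy would be satisfied by both $Y_1$ and $Y_2$ and would not tune the ratio — your stated requirement that each copy be satisfied by exactly one of $Y_1,Y_2$ forces this choice, but the text leaves it implicit.
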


\begin{proof}
By the definition of q-strong and p-strong equivalence, the if direction of Theorem \ref{thm:q-se} is obvious. 
For the only-if direction, we use proof by contradiction. 
By $C_i = exp(c_i + k_i * \alpha)$, we denote a weight expression. 
Assume \lpmln programs $P$ and $Q$ are q-strongly equivalent, but they are not p-strongly equivalent, 
we have $\lse{P} = \lse{Q}$ and there exist interpretations $X$ and $Y$ such that $W(P, X) / W(Q, X) \neq W(P, Y) / W(Q, Y)$,  
suppose $W(Q, X) = C_1 * W(P, X)$ and $W(Q, Y) = C_2 * W(P, Y)$.
Let $R$ be an \lpmln program such that $X, Y \in \lsm{P \cup R}$ and $W(P \cup R, X) \leq W(P \cup R, Y)$. 
Let $W(P, X) = C_3 * W(P, Y)$ and $W(R, X) = C_4 * W(R, Y)$, 
we can derive that $C_3 * C_4 \leq 1$. 
Since $P$ and $Q$ are q-strongly equivalent, we have $W(Q \cup R, X) \leq W(Q \cup R, Y)$, which can be reformulated as follows
\begin{eqnarray}
W(Q \cup R, X) &=& C_1 * C_3 * C_4 * W(P \cup R, Y) \\
W(Q \cup R, Y) &=& C_2 * W(P \cup R, Y)
\end{eqnarray}
Therefore, we have $C_1 * C_3 * C_4 / C_2 \leq 1$. 
Since $C_3 * C_4 \leq 1$ and $C_1 \neq C_2$, 
if $C_1 < C_2$, it is obvious $W(Q \cup R, X) \leq W(Q \cup R, Y)$. 
But if $C_1 > C_2$, we need to consider different \lpmln programs $R$, i.e. different values of $C_4$. 
Recall the flattening rules in Definition \ref{def:r-mxa}, 
the weights of the rules can be an arbitrary real number or the symbol ``$\alpha$'' actually.  
Similar to the proof of Lemma \ref{lem:flattening-extension}, we can obtain arbitrary values of $C_4$ by using proper flattening rules. 
In other words, we can construct an \lpmln program $R$ such that $C_3 * C_4 \leq 1$ and $C_1 * C_3 * C_4 / C_2 > 1$, which means the programs $P$ and $Q$ are not q-strongly equivalent.  
It contradicts with the assumption, therefore, the only-if direction is proven. 
\end{proof}
Theorem \ref{thm:q-se} shows that the q-strong equivalence is not an available notion of approximate strong equivalence for \lpmlnend, 
therefore, the investigation of approximate strong equivalence is still an open problem. 
But it can be observed that results presented in this paper such as the flattening rules and extensions would be a foundation of future works. 
In addition, it is worth noting that Lee and Luo pointed out that 
allowing the probabilistic inference results to be slightly different with some error bounds would be a possible way to study the approximate strong equivalence for \lpmln \cite{Lee2019LPMLNSE}.

\subsection{Relating to Lee and Luo's Results}
Besides the semi-strong equivalence and p-strong equivalence presented in this section, 
Lee and Luo also present two kinds of notions of strong equivalences for \lpmln programs, 
i.e. the structural equivalence and Lee and Luo's strong equivalence (LL-strong equivalence for short) \cite{Lee2019LPMLNSE}. 
Here, we present a formal comparison between these strong equivalences notions by their definitions and characterizations, 
the results are shown in Proposition \ref{prop:se-comparison}.
\begin{prop}
\label{prop:se-comparison}
For the existing strong equivalences notions of \lpmlnend, we have 
\begin{itemize}
    \item the structural equivalence and semi-strong equivalence are equivalent to each other, which can be observed from their definitions and characterizations; and 
    \item the LL-strong equivalence and p-strong equivalence are also equivalent to each other, 
    which can be observed from their characterizations. 
\end{itemize}
\end{prop}

In what follows, we present a detailed discussion for the results in Proposition \ref{prop:se-comparison}. 
Firstly, we introduce the notions of structural strong equivalence and LL-strong equivalence. 
For simplicity, Lee and Luo's original results are reformulated by using the terms and notations of this paper, 
which does not affect the following discussion. 
For example, the notion of soft stable models in \cite{Lee2019LPMLNSE} is exactly the notion of stable models for \lpmln presented in Section \ref{sec:preliminaries}. 
\begin{defiC}[{\cite[Definition 4 and 5]{Lee2019LPMLNSE}}]
\label{def:strong-equivalence-ll19}
For \lpmln programs $P$ and $Q$, 
\begin{itemize}
    \item they are structurally equivalent, denoted by $P \equiv_{s, st} Q$, 
    if $\lsm{P \cup R} = \lsm{Q \cup R}$ for any \lpmln program $R$; 
    \item they are LL-strongly equivalent, denoted by $P  \equiv_{s, pr} Q$, 
    if $Pr(P \cup R, X) = Pr(Q \cup R, X)$ for any \lpmln program $R$ and any interpretation $X$. 
\end{itemize}
\end{defiC}

Secondly, we compare the notions of semi-strong equivalence and structural equivalence. 
By the definitions of strong equivalences in Definition \ref{def:lse-semi} and \ref{def:strong-equivalence-ll19}, 
it is easy to observe that $P \equiv_{s, s} Q$ iff $P \equiv_{s, st} Q$ for any \lpmln programs $P$ and $Q$. 
Therefore, the notions of semi-strong equivalence and structural equivalence are essentially the same. 
To characterize the structural equivalence, Lee and Luo present four different kinds of approaches, which are equivalent to each other.   
We introduce the soft logic of Here-and-There (soft HT-logic) approach, which is shown in Definition \ref{def:soft-ht-model-ll19} and Theorem \ref{thm:soft-ht-model-semi-se}.
\begin{defiC}[{\cite[Definition 6]{Lee2019LPMLNSE}}]
\label{def:soft-ht-model-ll19}
An SE-interpretation $(X, Y)$ is a soft HT-model of an \lpmln program $P$, if $(X, Y)$ is an HT-model of the ASP program $\overline{P_Y}$.
\end{defiC}

\begin{thm}[Theorems on Soft Stable Models from {\cite[Page 203]{Lee2019LPMLNSE}}]
\label{thm:soft-ht-model-semi-se}
\lpmln programs $P$ and $Q$ are structurally equivalent iff they have the same set of soft HT-models. 
\end{thm}

For an SE-interpretation $(X, Y)$, 
it has been shown that $(X, Y)$ is an SE-model of an ASP program iff $(X, Y)$ is an HT-model of the program \cite{Turner2001SE}. 
Therefore, it is easy to observe that $(X, Y)$ is a soft HT-model of an \lpmln program iff $(X, Y)$ is an SE-model of the program, 
which means Theorem \ref{thm:soft-ht-model-semi-se}  and Lemma \ref{lem:lpmln-sm-strong-equiv} are equivalent to each other. 
Since the soft HT-logic characterization for structural equivalence is equivalent to other characterizations presented in \cite{Lee2019LPMLNSE}, 
the SE-model approach presented in this paper is also equivalent to those characterizations.

Finally, we compare the notions of p-strong equivalence and LL-strong equivalence. 
For \lpmln programs $P$ and $Q$, 
by Definition \ref{def:lpmln-strong-equivalence-p} and \ref{def:strong-equivalence-ll19}, 
it is easy to check that $P \equiv_{s, p} Q$ implies $P \equiv_{s, s} Q$, 
while $P \equiv_{s, pr} Q$ may not imply $P \equiv_{s, st} Q$, 
which is due to the complexity of definition of probability degree of an interpretation in \lpmlnend. 
By Equation \eqref{eq:probability-sm}, there are three kinds of interpretations for an \lpmln program $Q$:
\begin{itemize}
    \item the probabilistic stable models in $\psm{Q}$, i.e., the stable models of $Q$ that satisfy the most number of hard rules; 
    \item the non-probabilistic stable models in $\lsm{Q} - \psm{Q}$,  i.e., the stable models of $Q$ that do not satisfy the most number of hard rules; 
    \item the other interpretations. 
\end{itemize}
Among these interpretations, only the probabilistic stable models have the non-zero probability degrees w.r.t. an \lpmln program. 
According to the definition of LL-strong equivalence, we only know that two LL-strongly equivalent \lpmln programs have the same probabilistic stable models under any extensions, 
which means the programs may not be structurally equivalent. 
Therefore, by the definitions of p-strong equivalence and LL-strong equivalence, 
it shows that the p-strong equivalence implies the LL-strong equivalence, while the inverse may not hold. 
In \cite{Lee2019LPMLNSE}, Lee and Luo present a characterization for the LL-strong equivalence, 
which is shown in Theorem \ref{thm:p-se-ll19}.
\begin{thmC}[{\cite[Theorem 2]{Lee2019LPMLNSE}}]
\label{thm:p-se-ll19}
\lpmln programs $P$ and $Q$ are  LL-strongly equivalent iff they are  structurally equivalent, 
and there is a w-expression $c = exp(c_1 + c_2 * \alpha)$ such that 
$W(P, X) = c * W(Q, X)$ for any interpretation $X$.
\end{thmC}
Since $W(P, (X, Y)) = W(P, Y)$ for an SE-interpretation $(X, Y)$ and an \lpmln program $P$, 
by Theorem \ref{thm:lpmln-strong-equivalence-w} and \ref{thm:p-se-ll19}, 
it is obvious that the notions of p-strong equivalence and LL-strong equivalence are equivalent to each other. 
But in above discussion, it has been shown that the statement $P \equiv_{s, pr} Q$ implies $P \equiv_{s, st} Q$ is not a straightforward result. 
Unfortunately, in \cite{Lee2019LPMLNSE}, there is not sufficient discussion for the statement.

So far, we have defined the notion of p-strong equivalence and presented an SE-model approach to characterizing the notion. 
And we have investigated the relationships between the results in this section and the results in \cite{Lee2019LPMLNSE}. 
In the following sections, we will investigate properties of the p-strong equivalence from four aspects: 
(1) we present two relaxed notions of the p-strong equivalence and discuss their characterizations; 
(2) we analyze the computational complexities of deciding strong equivalences for \lpmln programs; 
(3) we investigate the relationships among the p-strong equivalence and the strong equivalences for ASP with weak constraints and ordered disjunctions; 
(4) we show the use of the p-strong equivalence in simplifying \lpmln programs. 

\section{Two Relaxed Notions of P-Strong Equivalence}
\label{sec:pse-varints}
The notion of p-strong equivalence requires that two \lpmln programs are p-ordinarily equivalent under any extension. 
But for program rewriting in many scenarios, the p-strong equivalence is somewhat strict. 
In this section, we present two relaxed notions of the p-strong equivalence and discuss their characterizations.

\subsection{P-Strong Equivalence under Soft Stable Model Semantics of \texorpdfstring{\lpmlnend}{LPMLN}}
By the \lpmln semantics, a stable model is allowed to violate any rules including hard rules. 
But for knowledge modeling of some problems, 
hard rules are usually used to encode definite knowledge, 
which means a solution to a problem should satisfy all hard rules. 
\begin{exa}
\label{ex:schedule-app}
Recall Example \ref{ex:inference-task-case}, \lpmln programs $P'$ and $Q'$ are obtained from $P$ and $Q$ in the example by replacing the constraints \eqref{con:1} and \eqref{con:2} with following hard constraint
\begin{equation}
\alpha ~:~ \leftarrow a, b. 
\end{equation}
Here, the atom $a$ represents ``play tennis'', and the atom $b$ represents ``play badminton'', 
and the programs $P'$ and $Q'$ can be viewed as a part of a scheduling application. 
Our next activity is one of ``play tennis'' and ``play badminton'', therefore, 
the hard rules of $P'$ and $Q'$ must be satisfied by any valid plan, 
which means $\{a, b\}$ cannot be a valid stable model under the case. 
\end{exa}

For the case, Lee and Wang present an extended semantics of \lpmln \cite{Lee2016Weighted}, called \textit{soft stable model semantics} (SSM semantics), which requires hard rules must be satisfied by stable models. 
In this section, we investigate the p-strong equivalence under the SSM semantics, called the sp-strong equivalence. 
Firstly, we review the SSM semantics of \lpmlnend.  
For an \lpmln program $P$, a \textit{soft stable model} $X$ of $P$ is a stable model of $P$ that satisfies all hard rules in $P$. 
By $\ssm{P}$, we denote the set of all soft stable models of $P$, 
i.e. $\ssm{P} = \{X \in \lsm{P} ~|~ X \models P^h \}$. 
For a soft stable model $X$, the weight degree $W_s(P, X)$ of $X$ w.r.t. $P$ is defined as 
\begin{equation}
\label{eq:soft-weight}
W_s(P, X) = W(P^s, X) = exp\left( \sum_{w:r \in P^s_X} w \right)
\end{equation}
and the probability degree $Pr_s(P, X)$ of $X$ w.r.t. $P$ is defined as 
\begin{equation}
\label{eq:soft-probability}
Pr_s(P, X) = \frac{W_s(P, X)}{\sum_{X' \in \ssm{P}} W_s(P, X')}
\end{equation}
Recall Example \ref{ex:lpmln-inference}, one can check that the empty set $\emptyset$ is not a soft stable model of the program $P$ in Example \ref{ex:lpmln-inference}, since $\emptyset$ does not satisfy the hard rule of $P$. 
It is worth noting that an \lpmln program $P$ always has stable models, while $P$ may not have soft stable models. 
For example, $\emptyset$ is a stable model of any \lpmln program, but if the hard rules of an \lpmln program are inconsistent, the program does not have soft stable model. 
In addition, the notion of soft stable model is different from the notion of probabilistic stable model in general.  
Since the former is required to satisfy all hard rules, 
while the latter is required to satisfy the most number of hard rules, 
which means if an \lpmln program $P$ has soft stable models, then $\psm{P} = \ssm{P}$ and for any soft stable model $X \in \ssm{P}$, $Pr(P, X) = Pr_s(P, X)$.

Secondly, we investigate the sp-strong equivalence by extending the SE-model approach for characterizing the p-strong equivalence under the original semantics. 
The sp-strong equivalence is defined as follows.

\begin{defi}
\label{def:p-se-soft-sm}
Two \lpmln programs $P$ and $Q$ are sp-strongly equivalent, denoted by $P \equiv_{s, sp} Q$, if for any \lpmln program $R$, $\ssm{P \cup R} = \ssm{Q \cup R}$, and for each soft stable model $X \in \ssm{P \cup R}$, we have $Pr_s(P \cup R, X) = Pr_s(Q \cup R, X)$.
\end{defi}

The notion of semi-strong equivalence can be extended to the SSM semantics naturally, 
i.e. $P$ and $Q$ are semi-strongly equivalent under the SSM semantics, denoted by $P \equiv_{s, ss} Q$, 
if for any \lpmln program $R$, $\ssm{P \cup R} = \ssm{Q \cup R}$. 
Accordingly, the notion of SE-model under the SSM semantics is defined as follows. 
\begin{defi}[soft SE-model]
\label{def:soft-se-model}
For an \lpmln program $P$, an SE-interpretation $(X, Y)$ is an SE-model of $P$ under the SSM semantics, called a soft SE-model, if $(X, Y) \in \lse{P}$ and $Y \models P^h$. 
\end{defi}
By $\sse{P}$, we denote the set of all soft SE-models of an \lpmln program $P$. 
The weight of a soft SE-model is defined as $W_s(P, (X, Y)) = W(P^s_Y)$. 
Based on the extended notions, Lemma \ref{lem:semi-pse-ssm} provides a characterization for the semi-strong equivalence under the SSM semantics. 

\begin{lem}
\label{lem:semi-pse-ssm}
Two \lpmln programs $P$ and $Q$ are semi-strongly equivalent under the SSM semantics iff $\sse{P} = \sse{Q}$. 
\end{lem}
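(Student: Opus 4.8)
The plan is to transport the proof of Lemma~\ref{lem:lpmln-sm-strong-equiv} to the soft setting, replacing stable models by soft stable models and $\lse{\cdot}$ by $\sse{\cdot}$ throughout, and to add one bookkeeping observation about hard rules. That observation is: since a total pair $(Y,Y)$ is always in $\lse{P}$ by Lemma~\ref{lem:XX-lse-model-LM}, we have $(Y,Y) \in \sse{P}$ iff $Y \models P^h$, so $\sse{P} = \sse{Q}$ forces $Y \models P^h \Leftrightarrow Y \models Q^h$ for every interpretation $Y$. In other words, under the hypothesis $\sse{P}=\sse{Q}$ the hard parts of $P$ and $Q$ are satisfied by exactly the same interpretations, and this lets the hard-rule side-conditions attached to $P$ and $Q$ travel with each other.

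For the \emph{if} direction, assume $\sse{P} = \sse{Q}$ and, towards a contradiction, pick an \lpmln program $R$ and an interpretation $Y$ with $Y \in \ssm{P \cup R}$ but $Y \notin \ssm{Q \cup R}$. From $Y \in \ssm{P\cup R}$ we get $Y \models P^h$ and $Y \models R^h$, hence $Y \models Q^h$ by the observation, so $Y \models (Q\cup R)^h$; therefore $Y$ must fail to be an (ordinary) stable model of $Q\cup R$. By Lemma~\ref{lem:x-equilibrium-sm} there is a proper subset $X \subsetneq Y$ with $X \models \lglred{(Q\cup R)}{Y}$, i.e. $X \models \lglred{Q}{Y}$ and $X \models \lglred{R}{Y}$; together with $Y \models Q^h$ this gives $(X,Y) \in \sse{Q} = \sse{P}$, so $X \models \lglred{P}{Y}$ and thus $X \models \lglred{(P\cup R)}{Y}$, which by Lemma~\ref{lem:x-equilibrium-sm} contradicts $Y \in \ssm{P \cup R} \subseteq \lsm{P\cup R}$. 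Swapping the roles of $P$ and $Q$ yields $\ssm{P\cup R} = \ssm{Q\cup R}$ for all $R$.

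For the \emph{only-if} direction, assume semi-strong equivalence under the SSM semantics and suppose, for contradiction, $(X,Y) \in \sse{P} \setminus \sse{Q}$; note $Y \models P^h$. If $Y \not\models Q^h$, take $R = \{\alpha : a.\ \mid\ a \in Y\}$: then $\lglred{(P\cup R)}{Y} = \lglred{P}{Y} \cup \overline{R}$ has $Y$ as its unique subset-model, because the hard facts force any model to contain all of $Y$, so $Y \in \lsm{P\cup R}$, and since $Y \models P^h$ and $Y$ satisfies the added facts, $Y \in \ssm{P\cup R}$; but $Y \not\models (Q\cup R)^h$, so $Y \notin \ssm{Q\cup R}$ --- a contradiction. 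Otherwise $Y \models Q^h$, and then $(X,Y) \notin \sse{Q}$ together with $Y\models Q^h$ forces $(X,Y) \notin \lse{Q}$, i.e. $X \not\models \lglred{Q}{Y}$ by Lemma~\ref{lem:XY-not-lse-model}; also $X \subsetneq Y$, since a total pair is always an SE-model. Now reuse the separating program from the proof of Lemma~\ref{lem:lpmln-sm-strong-equiv}, $R = \{1 : a.\ \mid\ a \in X\} \cup \{1 : a \leftarrow b.\ \mid\ a,b \in Y - X\}$: exactly as there, $Y$ is a stable model of $Q \cup R$, and since every rule of $R$ is soft and $Y \models Q^h$, in fact $Y \in \ssm{Q\cup R}$; meanwhile $(X,Y) \in \lse{P}$ gives $X \models \lglred{(P\cup R)}{Y}$ with $X \subsetneq Y$, so $Y \notin \lsm{P\cup R}$ and hence $Y \notin \ssm{P\cup R}$ --- again contradicting the hypothesis. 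Therefore $\sse{P} = \sse{Q}$.

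The main obstacle is the interaction with hard rules in the only-if direction. Unlike in Lemma~\ref{lem:lpmln-sm-strong-equiv}, a single all-soft extension no longer suffices: when the discrepancy between $\sse{P}$ and $\sse{Q}$ comes purely from a hard-rule side-condition (the case $Y \not\models Q^h$, where $(X,Y)$ may even be total), the natural soft extension would make $Y$ fail to be a stable model of \emph{both} $P \cup R$ and $Q \cup R$, producing no separation. One therefore has to peel off that case and witness it with an all-hard-facts extension, while double-checking in the complementary case that the soft extension keeps $Y$ a \emph{soft} --- not merely ordinary --- stable model of $Q \cup R$, which is exactly where $Y \models Q^h$ is used.
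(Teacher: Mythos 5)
Your proof is correct and follows essentially the same route as the paper: the same SE-model contradiction argument for the if direction, and the same two-case split in the only-if direction (hard-fact extension when $Y \not\models Q^h$, the separating program from Lemma~\ref{lem:lpmln-sm-strong-equiv} when $Y \models Q^h$). The only cosmetic difference is that you give the separating rules weight $1$ while the paper uses $\alpha$; either choice works, since what matters there is only stability of $Y$ and satisfaction of the hard parts, which your explicit remark about $Y \models Q^h$ handles correctly.
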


\begin{proof}
For the if direction, we show that two \lpmln programs $P$ and $Q$ are semi-strongly equivalent under the SSM semantics if $\sse{P} = \sse{Q}$. 
We use proof by contradiction. 
Assume there is an \lpmln program $R$ and an interpretation $Y$ such that $Y \in \ssm{P \cup R}$ and $Y \not\in \ssm{Q \cup R}$, 
we have  $(Y, Y) \in \sse{P}$. 
Since $\sse{P} = \sse{Q}$ and $Y \not\in \ssm{Q \cup R}$, we have $(Y, Y) \in \sse{Q}$ and there is a proper subset $X$ of $Y$ such that $X \models \left( \overline{Q^h \cup R^h} \right)^Y \cup \lglred{(Q^s \cup R^s)}{Y}$, 
which means $(X, Y) \in \sse{Q}$. 
Similarly, we have $(X,Y) \in \sse{P}$,  
therefore, $X \models \left( \overline{P^h \cup R^h} \right)^Y \cup \lglred{(P^s \cup R^s)}{Y}$, which contradicts with the assumption.
Therefore, we have proven the if direction of Lemma \ref{lem:semi-pse-ssm}. 

For the only-if direction, we show that $\sse{P} = \sse{Q}$ if \lpmln programs $P \equiv_{s, ss} Q$. 
We use proof by contradiction. 
Assume there is an SE-interpretation $(X, Y)$ such that $(X,Y) \in \sse{P}$ and $(X, Y) \not \in \sse{Q}$. 
By the definition of soft SE-models, there are two cases: (1) $Y \not\models Q^h$; and (2) $X \not\models (\overline{Q^h})^Y \cup \lglred{Q^s}{Y}$. 

\textbf{Case 1.} 
If $Y \not\models Q^h$, let $R$ be an \lpmln program such that $R = \{\alpha : a. ~|~ a \in Y\}$. 
It is easy to check that $Y$ is a soft stable model of $P \cup R$, since $P$ and $Q$ are semi-strongly equivalent, $Y$ is also a soft stable model of $Q \cup R$, which contradicts with $Y \not\models Q^h$. 

\textbf{Case 2.}
If $Y \models Q^h$ and $X \not\models (\overline{Q^h})^Y \cup \lglred{Q^s}{Y}$, we can derive that $X$ is a proper subset of $Y$. 
Now we show that there is an \lpmln program $R$ such that $\ssm{P \cup R} \neq \ssm{Q \cup R}$ under the assumption. 
Let $R$ be an \lpmln program of the form \eqref{eq:sp-se-proof-program-r}. 
\begin{equation}
\label{eq:sp-se-proof-program-r}
R = \{\alpha : a. ~|~ a \in X\} \cup \{\alpha : a \leftarrow b. ~|~ a, b \in Y -X\}
\end{equation}
It is easy to check that $Y \models R$ and  $X \models R$, and for any other proper subset $X'$ of $Y$, $X' \not\models R$. 
Therefore, we can derive that for any proper subset $X'$ of $Y$, $X' \not\models (\overline{Q^h})^Y \cup \lglred{Q^s}{Y} \cup \overline{R}$, which means $Y$ is a soft stable model of $Q \cup R$. 
Since $(X, Y)$ is a soft SE-model of $P$, we have $X \models P^h \cup P^s_Y$. 
Since $X \models R$, we have $X \models P^h \cup P^s_Y \cup R$, which means $Y$ is not a stable model of $P \cup R$. 
It contradicts with the premise that $P \equiv_{s, ss} Q$.

Combining above results, Theorem \ref{thm:pse-soft-sm} is proven.
\end{proof}
Based on Lemma \ref{lem:semi-pse-ssm}, Theorem \ref{thm:pse-soft-sm} provides a characterization for the sp-strong equivalence. 
The proof of the theorem is similar to the proof of Theorem \ref{thm:lpmln-strong-equivalence-w}, 
therefore, we omit the details for brevity.
\begin{thm}
\label{thm:pse-soft-sm}
Two \lpmln programs $P$ and $Q$ are sp-strongly equivalent iff $\sse{P} = \sse{Q}$, 
and there exists a real number $c$ such that for any soft SE-model $(X, Y) \in \sse{P}$, $W_s(P, (X, Y)) = c * W_s(Q, (X, Y))$. 
\end{thm}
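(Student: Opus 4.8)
The plan is to follow the two‑directional template used for Theorem~\ref{thm:lpmln-strong-equivalence-w}, but to exploit the fact that under the SSM semantics the probability degree $Pr_s$ is a plain ratio of soft weight degrees over soft stable models (no $\lim_{\alpha\to\infty}$ is involved), so the scaling factor degenerates from $exp(c+k\alpha)$ to a single positive real $c$, and the auxiliary machinery needed on the only‑if side becomes considerably lighter than the flattening extensions of Section~\ref{sec:p-se}.

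For the \emph{if} direction I would argue as in Lemma~\ref{lem:lpmln-strong-equivalence-suf}, but now only two of its three parts survive. First, from $\sse{P}=\sse{Q}$ and Lemma~\ref{lem:semi-pse-ssm} we get $\ssm{P\cup R}=\ssm{Q\cup R}$ for every $R$; since every soft stable model automatically has positive $Pr_s$, there is no analogue of ``Part~2''. Second, for a soft stable model $X\in\ssm{P\cup R}$ note that $(X,X)\in\sse{P}$, since it lies in $\lse{P}$ by Lemma~\ref{lem:XX-lse-model-LM} and $X\models P^h$ because $X\models (P\cup R)^h$; hence $W_s(P,X)=W_s(P,(X,X))=c * W_s(Q,(X,X))=c * W_s(Q,X)$. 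Using $(P\cup R)^s_X = P^s_X\cup R^s_X$ we obtain $W_s(P\cup R,X)=c * W_s(Q\cup R,X)$, and the factor $c$ cancels in the normalisation of \eqref{eq:soft-probability}, giving $Pr_s(P\cup R,X)=Pr_s(Q\cup R,X)$ (the degenerate case $\ssm{P\cup R}=\ssm{Q\cup R}=\emptyset$ being vacuous).

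For the \emph{only-if} direction, sp‑strong equivalence trivially implies semi‑strong equivalence under the SSM semantics, so Lemma~\ref{lem:semi-pse-ssm} yields $\sse{P}=\sse{Q}$; it remains to produce the real constant $c$. Since $W_s(P,(X,Y))$ depends only on $Y$, and the second components of soft SE‑models of $P$ are precisely the interpretations $Y$ with $Y\models P^h$, it suffices to show that $W_s(P,Y)/W_s(Q,Y)$ is the same for all such $Y$. For this I would prove a soft analogue of Lemma~\ref{lem:pse-onlyif-part1} (if two extensions share a soft stable model, the weight ratios agree on the union of their soft stable models) together with a soft analogue of the ``necessary extensions'': given interpretations $Y_1,Y_2$ with $Y_i\models P^h$ --- and, after restricting to the literals of $\overline{P\cup Q}$ via a soft‑SE‑model version of Proposition~\ref{prop:se-model-universe}, to a fixed finite universe $U$ --- build an extension $R$ consisting only of hard rules that uses fresh atoms, a disjunctive guess $\alpha:d_1\vee d_2$, derivation rules $\alpha: a\leftarrow d_i$ for $a\in Y_i$, and constraints $\alpha:\leftarrow d_1,d_2$ and $\alpha:\leftarrow d_i,a$ for $a\in U\setminus Y_i$, so that the soft stable models of $P\cup R$ are exactly $Y_1$ and $Y_2$, each tagged by its guess atom. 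Since $R$ has no soft rules and the tags are fresh, $W_s(P\cup R,Y_i\cup\{d_i\})=W_s(P,Y_i)$, so equality of $Pr_s$ on $P\cup R$ and $Q\cup R$ forces $W_s(P,Y_1)/W_s(P,Y_2)=W_s(Q,Y_1)/W_s(Q,Y_2)$, i.e. $W_s(P,Y_1)/W_s(Q,Y_1)=W_s(P,Y_2)/W_s(Q,Y_2)$; ranging over all pairs gives the common value $c$, a positive real since it is a ratio of exponentials of finite real sums.

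I expect the main obstacle to be the construction and verification of the extension $R$ on the only‑if side: one has to check that the rules of $P$ (and of $Q$) cannot produce a stable model strictly larger than the intended $Y_i\cup\{d_i\}$, that no spurious smaller model survives the GL‑reduct of the disjunctive guess, and that degenerate cases (e.g.\ $Y_2\subseteq Y_1$, or $P^h$ inconsistent so that $\sse{P}=\emptyset$ and $c$ may be chosen vacuously) are handled; introducing enough fresh atoms and a few extra per‑branch constraints resolves these, exactly as the flattening extensions did in the hard‑rule setting, only more simply, because here we merely need the chosen interpretations to be \emph{soft} stable models rather than probabilistic (hard‑rule‑maximal) ones.
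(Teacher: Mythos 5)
Your proposal is correct, but it does not simply instantiate the paper's (omitted) argument, which is declared to be ``similar to the proof of Theorem~\ref{thm:lpmln-strong-equivalence-w}'', i.e.\ the PSE machinery of Lemma~\ref{lem:lpmln-strong-equivalence-suf}, Lemma~\ref{lem:pse-onlyif-part1} and the flattening extensions. Your \emph{if} direction does follow that template: Parts~1 and~3 of Lemma~\ref{lem:lpmln-strong-equivalence-suf} carry over via Lemma~\ref{lem:semi-pse-ssm} and Lemma~\ref{lem:XX-lse-model-LM}, Part~2 disappears because every soft stable model has nonzero $Pr_s$, and the factor degenerates to a real $c$ since no $\lim_{\alpha\to\infty}$ or hard-rule counting enters \eqref{eq:soft-probability}. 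The genuine difference is on the necessity of the weight condition. The paper's route would adapt the necessary-extension/flattening-extension construction, which in fact does not transfer verbatim to the SSM setting (the base extension $E^0(P,U)$ adds hard facts $\alpha:a$ for all $a\in U$, which under the SSM semantics leaves only $U$ itself as a candidate soft stable model), so an adaptation is needed there in any case. Your replacement is a single purely hard-rule extension with fresh tags $d_1,d_2$: the guess $\alpha:d_1\vee d_2$, rules $\alpha:a\leftarrow d_i$ for $a\in Y_i$, and constraints forbidding $d_1,d_2$ together and forbidding $d_i$ with any literal of $U\setminus Y_i$. This makes $Y_1\cup\{d_1\}$ and $Y_2\cup\{d_2\}$ exactly the soft stable models of both $P\cup R$ and $Q\cup R$ (using $\sse{P}=\sse{Q}$, already obtained from Lemma~\ref{lem:semi-pse-ssm}, to know $Y_i\models P^h$ iff $Y_i\models Q^h$), and since $R$ contributes no soft weight, equality of $Pr_s$ forces $W_s(P,Y_1)/W_s(Q,Y_1)=W_s(P,Y_2)/W_s(Q,Y_2)$; combined with the soft analogue of Proposition~\ref{prop:se-model-universe} to reduce to a fixed finite universe, and the fact that $W_s(P,(X,Y))$ depends only on $Y$ while the second components of soft SE-models are exactly the $Y$ with $Y\models P^h$, this yields the constant $c$ directly, without even needing a soft analogue of Lemma~\ref{lem:pse-onlyif-part1}. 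What your route buys is exactly what you claim: under SSM one only needs the two chosen interpretations to be soft stable models of the extension, not hard-rule-maximal ones, so a single tagged extension suffices and the argument is more elementary and self-contained; the paper's heavier machinery is what the general (non-SSM) p-strong case requires. The minor points you flag (degenerate cases $Y_1=Y_2$ or $\sse{P}=\emptyset$, atoms outside the universe, minimality through the GL-reduct of the disjunctive guess) all check out as you anticipate, and your implicit use of $W_s(P\cup R,X)=W_s(P,X)\cdot W_s(R,X)$ relies on the same union convention the paper itself uses in Equation~\eqref{eq:p-se-probability-equal}.
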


\begin{exa}
\label{ex:pse-soft-sm}
Recall \lpmln programs $P'$ and $Q'$ in Example \ref{ex:schedule-app}, 
we have known that they are neither p-strongly equivalent nor semi-strongly equivalent. 
But for the unweighted programs $\overline{P'} = \{a \vee b. ~ \leftarrow a, b. \}$ and $\overline{Q'} =\{a \leftarrow not ~ b. ~ b \leftarrow not ~a. ~ \leftarrow a, b. \}$, 
it is well-known that they are strongly equivalent under the ASP semantics \cite{Lifschitz2001Strongly}, 
therefore, $P'$ and $Q'$ should be sp-strongly equivalent. 
By Theorem \ref{thm:pse-soft-sm}, one can check that $P'$ and $Q'$ have the same soft SE-models and the weight distribution of soft SE-models coincides. 
Let $S_1 = (\{a\}, \{a\})$ and $S_2 = (\{b\}, \{b\})$, 
we have 
\begin{itemize}
    \item $\sse{P'} = \sse{Q'} = \{S_1, S_2\}$, 
    \item $W_s(P', S_1) = W_s(Q', S_1) = e^0 = 1$, and 
    \item $W_s(P', S_2) = W_s(Q', S_2) = e^0 = 1$. 
\end{itemize}
Therefore, under the SSM semantics, 
the programs $P'$ and $Q'$ are p-strongly equivalent.
\end{exa}

\begin{exa}
\label{ex:pse-soft-sm-2}
Consider following \lpmln programs $P''$
\begin{eqnarray}
2 &:& a \vee b. \\
\alpha &:& \leftarrow a, b. 
\end{eqnarray}
and $Q''$
\begin{eqnarray}
1 &:& a \leftarrow  not ~b. \\
1 &:& b \leftarrow not ~ a. \\
\alpha &:&~ \leftarrow a, b. 
\end{eqnarray}
It is easy to observe that $P''$ and $Q''$ are obtained by changing the weights of rules of programs $P$ and $Q$ in Example \ref{ex:se-model}. 
As we know, $P$ and $Q$ are not semi-strongly equivalent iff $\{\emptyset, \{a, b\}\}$ is not a common SE-model of $P$ and $Q$. 
It can be checked that all SE-interpretations of the form $(X, \{a, b\})$ are not soft SE-models of $P''$ and $Q''$, 
therefore, they are semi-strongly equivalent under the SSM semantics. 
For the soft SE-models, their weights are 
\begin{itemize}
    \item $W_s(P'', (\emptyset, \emptyset)) = W_s(Q'', (\emptyset, \emptyset)) = e^0$, 
    \item $W_s(P'', (\{a\}, \{a\})) = W_s(Q'', (\{a\}, \{a\})) = e^2$, and 
    \item $W_s(P'', (\{b\}, \{b\})) = W_s(Q'', (\{b\}, \{b\})) = e^2$. 
\end{itemize}
Therefore, the programs $P''$ and $Q''$ are sp-strongly equivalent. 
\end{exa}

Example \ref{ex:pse-soft-sm} and Example \ref{ex:pse-soft-sm-2} show that two \lpmln programs that are not p-strongly equivalent under the original \lpmln semantics could be p-strongly equivalent under the SSM semantics. 
In addition, from Definition \ref{def:p-se-soft-sm} and above examples, it can be observed that the notion of sp-strong equivalence can be viewed as a unified framework to investigate the strong equivalences in ASP and \lpmlnend. 
That is, if two \lpmln programs only contain hard rules, 
the sp-strong equivalence is reduced to the strong equivalence under the ASP semantics, 
if two \lpmln programs only contain soft rules, the sp-strong equivalence is reduced to the p-strong equivalence under the original \lpmln semantics. 
In addition, it is clear that some important results on the strong equivalence for ASP can be introduced to the sp-strong equivalence straightforwardly. 
For example, Lin and Chen have found a sufficient and necessary syntactic conditions for characterizing several classes of ASP programs \cite{Lin2005Discover}. 
Under the SSM semantics, these conditions can be directly used to decide the p-strong equivalence. 
Note that in the rest of the paper, unless we specifically point out, the p-strong or semi-strong equivalence means two \lpmln programs are p-strongly or semi-strongly equivalent under the original \lpmln semantics.

\subsection{Probabilistic Uniform Equivalence}
In recent years, knowledge graphs based applications are especially concerned. 
Generally, a knowledge graph is about the entities, their semantic types, 
properties, and relationships between entities \cite{Ehrlinger2016KG}. 
From the view of logic programming, a knowledge graph can be regraded as a set of facts, and these facts usually evolve over time. 
By introducing some rules, a knowledge graph becomes more powerful for modeling complex relations and inferring hidden knowledge, 
and these rules are not updated frequently by contrast. 
For example, in a knowledge graph about family members, 
the fact $child(joe, tom)$ represents ``joe is a child of tom'', 
and the blood relationship $blood$ can be defined as follows.  
\begin{eqnarray}
\alpha &:& blood(X, Y) \leftarrow blood(Y, X). \\
w &:& blood(X, Y) \leftarrow child(X, Y). \label{rule:child-blood}\\
\alpha &:& blood(X, Z) \leftarrow blood(X, Y), blood(Y, Z). 
\end{eqnarray}
where rule \eqref{rule:child-blood} is a soft uncertain rule, since a child may be adopted. 
The facts about family members will change constantly, but the definition of blood relationship is normally invariable.  
For program rewriting in the field of knowledge graph, the notion of p-strong equivalence is too strict, 
since we only concern the equivalence between programs extended by facts, 
which is called uniform equivalence in ASP \cite{Eiter2003UE}.  
In this section, we investigate the p-uniform equivalence for \lpmln programs. 
The notion of p-uniform equivalence is defined as follows. 

\begin{defi}[p-uniform equivalence]
\label{def:uniform-equivalence-p}
Two \lpmln programs $P$ and $Q$ are p-uniformly equivalent, denoted by $P \equiv_{u,p} Q$, if for any set $R$ of weighted facts, 
the programs $P \cup R$ and $Q \cup R$ are p-ordinarily equivalent. 
\end{defi}

The notion of semi-uniform equivalence can be defined naturally, i.e. $P$ and $Q$ are semi-uniformly equivalent, denoted by $P \equiv_{u,s} Q$, if for any  set $R$ of weighted facts, $\lsm{P \cup R} = \lsm{Q \cup R}$.
Now, we investigate the characterization of p-uniform equivalence between two \lpmln programs by introducing the notion of UE-models for \lpmln programs. 
Note that we only consider finite \lpmln programs in this paper. 

\begin{defi}[UE-models for \lpmlnend]
	\label{def:lue-model}
	For an \lpmln program $P$, a UE-model $(X,Y)$ of $P$ is an SE-model of $P$ satisfying
	\begin{itemize}
		\item[-] $X = Y$; or 
		\item[-] $X \subset Y$, and for any interpretation $X'$ satisfying $X \subset X' \subset Y$,  $(X',Y) \not\in \lse{P}$.
	\end{itemize}
\end{defi} 

By $\lue{P}$ we denote the set of all UE-models of an \lpmln program $P$. 
Actually, a UE-model of $P$ is also an SE-model of $P$, 
therefore, the weight degree of a UE-model is defined the same as that of SE-model. 
According to Definition \ref{def:lue-model}, we have following property of UE-models, which is used in the characterization of the p-uniform equivalence for \lpmlnend. 

\begin{lem} 
\label{lem:lue-sm}
For an \lpmln program $P$ and a non-total UE-model $(X, Y)$ of $P$, 
let $Z$ be an interpretation such that $X \subset Z \subset Y$, 
and  $R$ a set of weighted facts such that $\overline{R} = Z$, 
we have $Y \in \lsm{P \cup R}$.
\end{lem}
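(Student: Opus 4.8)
The plan is to reduce the claim to the equilibrium characterisation of stable models given in Lemma~\ref{lem:x-equilibrium-sm}: an interpretation $Y$ is a stable model of $P \cup R$ iff $(Y', Y)$ is not an SE-model of $P \cup R$ for every proper subset $Y'$ of $Y$. So I would fix an arbitrary $Y' \subsetneq Y$ and argue that $(Y', Y) \notin \lse{P \cup R}$, using the minimality clause in the definition of a UE-model (Definition~\ref{def:lue-model}).

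The first step is to rewrite the reduct of $P \cup R$ with respect to $Y$. Since $R$ is a set of weighted facts with $\overline{R} = Z$ and $Z \subset Y$, every fact of $R$ is satisfied by $Y$, so $(P \cup R)_Y = P_Y \cup R$, hence $\overline{(P \cup R)_Y} = \overline{P_Y} \cup \overline{R}$; as facts are invariant under the Gelfond--Lifschitz reduct, this gives $\lglred{(P \cup R)}{Y} = \lglred{P}{Y} \cup \overline{R}$. By Definition~\ref{def:lse-model} it then follows that $(Y', Y) \in \lse{P \cup R}$ iff $Y' \models \lglred{P}{Y}$ and $Y' \models \overline{R}$, i.e.\ iff $(Y', Y) \in \lse{P}$ and $Z \subseteq Y'$.

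Next I would suppose for contradiction that $(Y', Y) \in \lse{P \cup R}$ for some $Y' \subsetneq Y$. Then $Z \subseteq Y' \subsetneq Y$ and $(Y', Y) \in \lse{P}$. Because $X \subset Z$, this gives $X \subset Y' \subset Y$, so $Y'$ is an interpretation lying strictly between $X$ and $Y$ that is an SE-model of $P$ --- contradicting the fact that $(X, Y)$ is a non-total UE-model of $P$. Hence no such $Y'$ exists; together with $(Y, Y) \in \lse{P \cup R}$ (which holds automatically by Lemma~\ref{lem:XX-lse-model-LM}, so the satisfaction condition $Y \models \lglred{(P \cup R)}{Y}$ required of a stable model is in place), Lemma~\ref{lem:x-equilibrium-sm} yields $Y \in \lsm{P \cup R}$.

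The argument is essentially bookkeeping and I do not anticipate a genuine obstacle; the two points needing a little care are (i) checking that extending $P$ by the weighted facts over $Z$ turns the SE-model condition into the extra requirement $Z \subseteq Y'$ --- this uses $Z \subset Y$ so that no fact is dropped from the \lpmln reduct, plus the GL-reduct-invariance of facts --- and (ii) using the \emph{strict} inclusion $X \subset Z$ to ensure $Y'$ lies strictly above $X$, so that the minimality clause of Definition~\ref{def:lue-model} actually applies to it.
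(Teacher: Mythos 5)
Your proposal is correct and follows essentially the same route as the paper's proof: both assume a proper subset $Y'$ of $Y$ witnessing non-stability (equivalently, an SE-model $(Y',Y)$ of $P \cup R$), observe that the facts over $Z$ force $Z \subseteq Y'$ so that $X \subset Y' \subset Y$, and derive a contradiction with the minimality clause in the definition of a UE-model. Your phrasing via Lemma~\ref{lem:x-equilibrium-sm} and the explicit computation $\lglred{(P \cup R)}{Y} = \lglred{P}{Y} \cup \overline{R}$ is just a slightly more detailed bookkeeping of the same argument.
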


\begin{proof}
Since $(X,Y)$ is a non-total UE-model of an \lpmln program $P$, 
we have $X \models \lglred{P}{Y}$, 
and for any set $X'$ such that $X \subset X' \subset Y$, $X' \not\models \lglred{P}{Y}$. 
To check whether $Y$ is a stable model of the \lpmln program $P \cup R$, we use proof by contradiction. 
Assume that $Y$ is not a stable model of $P \cup R$, 
which means there is a proper subset $Y'$ of $Y$ such that $Y' \models \lglred{(P \cup R)}{Y}$. 
By the construction of $R$, it is obvious that $X \subset Z \subseteq Y'$. 
Since $(X, Y)$ is a UE-model of $P$, we have $Y' \not\models \lglred{P}{Y}$, 
which contradicts with $Y' \models \lglred{(P \cup R)}{Y}$. 
Therefore, $Y$ is a stable model of $P \cup R$.
\end{proof}

Following our approach to characterizing other notions of strong equivalences for \lpmlnend, 
we present a characterization of the semi-uniform equivalence between two \lpmln programs firstly.   
Then, we present a characterization of p-uniform equivalence on the basis of semi-uniform equivalence. 
Following lemma provides a characterization for semi-uniform equivalence between \lpmln programs. 

\begin{lem}
\label{lem:lpmln-sm-uniform-equiv-lue}
Two \lpmln programs $P$ and $Q$ are semi-uniformly equivalent, iff $\lue{P} = \lue{Q}$.
\end{lem}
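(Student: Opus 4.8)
The plan is to follow the same two-contradiction template used in the proof of Lemma~\ref{lem:lpmln-sm-strong-equiv}, but with the test programs restricted to sets of weighted facts, and with the UE-model notion replacing the SE-model notion in the non-total case. For the if direction, I would assume $\lue{P} = \lue{Q}$ and show that for any set $R$ of weighted facts, $\lsm{P \cup R} = \lsm{Q \cup R}$. Suppose toward a contradiction that $Y \in \lsm{P \cup R}$ but $Y \notin \lsm{Q \cup R}$. Then $Y \models \lglred{(Q \cup R)}{Y}$, and there is a proper subset $X \subset Y$ with $X \models \lglred{(Q \cup R)}{Y}$, hence $X \models \lglred{Q}{Y}$ and $X \models \overline{R}$. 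Taking $X$ minimal with this property, I would argue that $(X,Y) \in \lue{Q}$: it is an SE-model of $Q$, and if some $X'$ with $X \subset X' \subset Y$ were also an SE-model of $Q$, then since $R$ is a set of facts we still have $X' \models \overline{R}$ (as $\overline{R} \subseteq X \subset X'$... here I must be careful — see below), so $X'$ would witness a smaller reduct model, contradicting minimality. From $\lue{P} = \lue{Q}$ we get $(X,Y) \in \lue{P}$, so $X \models \lglred{P}{Y}$, and combined with $X \models \overline{R}$ this gives $X \models \lglred{(P \cup R)}{Y}$, contradicting $Y \in \lsm{P \cup R}$ via Lemma~\ref{lem:x-equilibrium-sm}. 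The total case $X = Y$ is handled by Lemma~\ref{lem:XX-lse-model-LM} as before.

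For the only-if direction, I would assume $P \equiv_{u,s} Q$ and show $\lue{P} = \lue{Q}$. Suppose $(X,Y) \in \lue{P}$ but $(X,Y) \notin \lue{Q}$. Total models are shared by Lemma~\ref{lem:XX-lse-model-LM}, so take $X \subset Y$. There are two subcases. If $(X,Y) \notin \lse{Q}$, then $X \not\models \lglred{Q}{Y}$ by Lemma~\ref{lem:XY-not-lse-model}; I would take $R$ a set of weighted facts with $\overline{R} = X$ (note $R$ must be \emph{facts only}, so I cannot use the implication rules $1:a\leftarrow b$ that appear in the proof of Lemma~\ref{lem:lpmln-sm-strong-equiv} — this is the crucial restriction). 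Then $Y$ is a stable model of $P \cup R$ by Lemma~\ref{lem:lue-sm} applied with $Z = X$... but wait, Lemma~\ref{lem:lue-sm} requires $X \subset Z \subset Y$ strictly, so I need $\overline{R} = Z$ for some $Z$ strictly between $X$ and $Y$; if no such $Z$ exists (i.e.\ $Y = X \cup \{l\}$), I argue directly that $Y \in \lsm{P \cup R}$ with $\overline{R} = X$ since the only proper subset of $Y$ containing $X$ is $X$ itself and $X \not\models \lglred{Q}{Y}$ would be the wrong program — I mean $Y \in \lsm{P\cup R}$ because $X$ is the unique candidate and $(X,Y)\in\lue{P}$ forces $X\models\lglred{P}{Y}$, so actually $Y\notin\lsm{P\cup R}$; rather I use the $Q$ side: $X \not\models \lglred{Q}{Y}$ so no proper subset of $Y$ containing $X$ satisfies $\lglred{Q}{Y}\cup\overline R$, giving $Y\in\lsm{Q\cup R}$, while $X\models\lglred{(P\cup R)}{Y}$ gives $Y\notin\lsm{P\cup R}$, a contradiction. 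If instead $(X,Y)\in\lse{Q}$ but $(X,Y)\notin\lue{Q}$, then there is $X'$ with $X \subset X' \subset Y$ and $(X',Y)\in\lse{Q}$; now take $R$ with $\overline{R} = X'$ (a set of facts), apply Lemma~\ref{lem:lue-sm} with $Z$ between $X$ and $Y$ — actually set $Z = X'$, and since $X \subset X' \subset Y$ we get, by $(X,Y)$ being a UE-model of $P$, that $Y \in \lsm{P \cup R}$; but on the $Q$ side $X' \models \lglred{Q}{Y} \cup \overline{R}$ shows $Y$ is not minimal-closed, so $Y \notin \lsm{Q \cup R}$, contradicting semi-uniform equivalence.

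The main obstacle, and the place requiring the most care, is exactly the points flagged above: because $R$ is restricted to weighted facts, one cannot reuse the gadget rules of the form $1 : a \leftarrow b$ that made the proof of Lemma~\ref{lem:lpmln-sm-strong-equiv} go through, so the argument must be rerouted through the ``gap'' structure of UE-models and Lemma~\ref{lem:lue-sm}, and the degenerate case where $Y$ covers $X$ (no intermediate $Z$) has to be treated separately by a direct minimality argument. I would also need to double-check the ``minimality of $X$'' step in the if direction — establishing that the chosen $X$ genuinely yields a UE-model of $Q$ and not merely an SE-model — since $\overline{R}$ being a subset of $X$ is what guarantees every intermediate $X'$ still satisfies $\overline{R}$; this monotonicity of satisfaction of a set of facts under taking supersets is the one place where the restriction to facts is used positively rather than as a limitation.
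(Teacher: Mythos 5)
Your overall template (facts-only test programs, Lemma~\ref{lem:lue-sm}, contradiction in both directions) is the right one, and your subcase with $(X,Y)\in\lse{Q}\setminus\lue{Q}$ is handled correctly — indeed your dichotomy there sidesteps the incomparability complications. But two steps fail as written. In the if direction you take the witness $X$ \emph{minimal} among proper subsets of $Y$ satisfying $\lglred{(Q\cup R)}{Y}$ and claim $(X,Y)\in\lue{Q}$; the justification is backwards. A set $X'$ with $X\subset X'\subset Y$ and $(X',Y)\in\lse{Q}$ is a \emph{larger} model of the reduct (and it still satisfies the facts of $R$, by exactly the monotonicity you point out), so it does not contradict minimality of $X$, and a minimal witness need not be a UE-model of $Q$. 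What you need is a \emph{maximal} proper subset of $Y$ satisfying $\lglred{(Q\cup R)}{Y}$ — equivalently, extend the witness to a UE-model of $Q$, which is what the paper does; then an intermediate SE-model of $Q$ would contradict maximality, and the rest of your argument goes through verbatim.

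The more serious gap is in the only-if direction, in the subcase $(X,Y)\notin\lse{Q}$. Your plan for the non-degenerate situation is to pick an arbitrary $Z$ with $X\subset Z\subset Y$, set $\overline{R}=Z$, and invoke Lemma~\ref{lem:lue-sm} to get $Y\in\lsm{P\cup R}$ — but you give no reason why $Y\notin\lsm{Q\cup R}$, and in general there is none: if no proper superset of $X$ below $Y$ models $\lglred{Q}{Y}$ (for instance, if $Q$ has no non-total SE-model with second component $Y$ at all), then $Y$ is stable in $Q\cup R$ as well, and no contradiction arises. The right dichotomy is not whether $Y\setminus X$ is a singleton, but whether some proper subset $W$ of $Y$ with $X\subseteq W$ models $\lglred{Q}{Y}$. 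If none does, take $\overline{R}=X$: every candidate counter-witness for $Q\cup R$ must contain $X$ and none models $\lglred{Q}{Y}$, so $Y\in\lsm{Q\cup R}$, while $X\models\lglred{(P\cup R)}{Y}$ gives $Y\notin\lsm{P\cup R}$ (this also absorbs the configuration the paper treats separately via $Y\in\lsm{Q}$). If some $W$ does, take $\overline{R}=W$ and argue exactly as in your other subcase. With these two repairs your proof becomes essentially the paper's argument, differing only in how the case analysis is organized.
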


\begin{proof}
By Lemma \ref{lem:XX-lse-model-LM} and the definition of UE-models, 
we only need to prove the case that $(X, Y)$ is a non-total UE-model, i.e. $X \subset Y$.

For the if direction, suppose $\lue{P} = \lue{Q}$, we need to prove that for any set $R$ of weighted facts, $\lsm{P \cup R} = \lsm{Q \cup R}$. 
We use proof by contradiction. 
For an interpretation $X$, without loss of generality, assume that $X \in \lsm{P \cup R}$ and $X \not\in \lsm{Q \cup R}$, 
we have $X \models \lglred{(P \cup R)}{X}$, $X \models \lglred{(Q \cup R)}{X}$, 
and there is a proper subset $X'$ of $X$ such that $X' \models \lglred{(Q \cup R)}{X}$. 
Hence, we have $(X',X) \in \lse{Q}$, the rest of the proof of if direction is divided into two cases.

\textbf{Case 1.}
If $(X',X) \in \lue{Q}$, we have $(X',X) \in \lue{P}$, hence, $X' \models \lglred{P}{X}$, which means $X$ cannot be a stable model of $P \cup R$.  
It contradicts with the assumption.

\textbf{Case 2.}
If $(X',X) \not\in \lue{Q}$, by the definition, there must be a subset $Z$ of $X$ such that $X' \subset Z$ and $(Z,X) \in \lue{Q}$. 
Since $\lue{P} = \lue{Q}$,  we have $(Z,X) \in \lue{P}$.
Since $X' \subset Z$, we have $Z \models \lglred{R}{X}$ and $Z \models \lglred{(P \cup R)}{X}$, which contradicts with $X \in \lsm{P \cup R}$.

Combining above results, the if direction of Lemma \ref{lem:lpmln-sm-uniform-equiv-lue} is proven.

For the only-if direction, suppose for any set $R$ of weighted facts, $\lsm{P \cup R} = \lsm{Q \cup R}$, 
we need to prove that $\lue{P} = \lue{Q}$.
We use proof by contradiction.
Without loss of generality, assume that there exists a non-total SE-model $(X,Y)$ of $P$ such that $(X,Y) \in \lue{P}$ and $(X, Y) \not\in \lue{Q}$. 
The rest of the proof of the only-if direction is divided into three cases. 

\textbf{Case 1.}
If for any interpretation $Z \subset Y$, $(Z,Y) \not\in \lse{Q}$, by Lemma \ref{lem:x-equilibrium-sm}, we have $Y \in \lsm{Q}$. 
Hence, we have $Y \in \lsm{P}$, which means there does not exist a proper subset $X'$ of $Y$ such that $X' \models \lglred{P}{Y}$. 
It contradicts with $(X,Y) \in \lue{P}$.

\textbf{Case 2.}
If $(X, Y) \in \lse{Q}$, 
by the definition of UE-model, there is an interpretation $Z$ such that $X \subset Z \subset Y$ and $(Z, Y) \in \lue{Q}$. 
Let $R'$ be an \lpmln program such that $\overline{R'} = Z$, by Lemma \ref{lem:lue-sm}, we have $Y \in \lsm{P \cup R'}$, 
therefore, $Y \in \lsm{Q \cup R'}$. 
Since $(Z,Y) \in \lue{Q}$, we have $Z \models \lglred{Q}{Y}$ and $Z \models \lglred{(Q \cup R')}{Y}$, which contradicts with $Y \in \lsm{Q \cup R'}$.

\textbf{Case 3.}
If $(X, Y) \not\in \lse{Q}$, 
we show there is an interpretation $Z$ such that $X \subset Z \subset Y$, $(Z, Y) \in \lse{Q}$.
We use proof by contradiction. 
Assume for any interpretation $Z$ such that $X \subset Z \subset Y$, $(Z, Y) \not\in \lse{Q}$, 
for any interpretation $X'$ such that $X \subseteq X' \subset Y$, we have $X' \not\models \lglred{Q}{Y}$. 
Let $R''$ be an \lpmln program such that $\overline{R''} = X$.
It is easy to check that for any proper subset $X''$  of $X$, $X'' \not\models \lglred{(Q \cup R'')}{Y}$. 
Combining the above results, we have $Y \in \lsm{Q \cup R''}$, which means $Y \in \lsm{P \cup R''}$. 
Since $(X, Y) \in \lue{P}$, we have $X \models \lglred{(P \cup R'')}{Y}$, 
which contradicts with  $Y \in \lsm{P \cup R''}$. 
Therefore, there is an interpretation $Z$ such that $X \subset Z \subset Y$, $(Z, Y) \in \lse{Q}$, 
and the rest of the proof of this case is the same as the proof of Case 2. 

Combining the above results, the only-if direction of Lemma \ref{lem:lpmln-sm-uniform-equiv-lue} is proven. 
\end{proof}

\begin{exa}
\label{ex:semi-ue}
Recall Example \ref{ex:se-model}, it is easy to check that the SE-interpretation $(\emptyset, U)$ is not a UE-model of the programs $P$ and $Q$, 
therefore, $P$ and $Q$ have the same UE-models.  
By Lemma \ref{lem:lpmln-sm-uniform-equiv-lue}, the programs $P$ and $Q$  are semi-uniformly equivalent.  
\end{exa}

Now, we investigate the characterization of p-uniform equivalence for \lpmln programs. 
Firstly, we present a sufficient condition for the characterization in Lemma \ref{lem:finite-lpmln-uniform-equivalence-lue}, then, we check whether the condition is necessary. 

\begin{lem}
\label{lem:finite-lpmln-uniform-equivalence-lue}
Two \lpmln programs $P$ and $Q$ are p-uniformly equivalent, if  $\lue{P} = \lue{Q}$, 
and there exist two constants $c$ and $k$ such that 
for each UE-model $(X,Y) \in \lue{P}$, $W(P,(X,Y)) = exp(c + k*\alpha) * W(Q,(X,Y))$.
\end{lem}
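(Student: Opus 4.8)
The plan is to mirror the three–part structure of the proof of Lemma~\ref{lem:lpmln-strong-equivalence-suf}, replacing strong extensions by factual ones and SE-models by UE-models. Assuming $\lue{P} = \lue{Q}$ together with the stated weight condition, I would show that for every set $R$ of weighted facts we have (1) $\lsm{P \cup R} = \lsm{Q \cup R}$, (2) $\psm{P \cup R} = \psm{Q \cup R}$, and (3) $Pr(P \cup R, X) = Pr(Q \cup R, X)$ for every $X \in \lsm{P \cup R}$; taken together these say $P \cup R \equiv_p Q \cup R$, i.e.\ $P \equiv_{u,p} Q$.

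Part~1 is immediate: from $\lue{P} = \lue{Q}$, Lemma~\ref{lem:lpmln-sm-uniform-equiv-lue} gives $\lsm{P \cup R} = \lsm{Q \cup R}$ for every set $R$ of weighted facts. The key preparatory observation — the one point that is worth spelling out — is that the weight hypothesis, although phrased over UE-models, in fact constrains $W(P_Y)$ for \emph{every} interpretation $Y$. Indeed, by Lemma~\ref{lem:XX-lse-model-LM} every total SE-interpretation $(Y,Y)$ is an SE-model, hence a UE-model by Definition~\ref{def:lue-model}; and since $W(P,(X,Y)) = W(P_Y) = W(P,Y)$ depends only on $Y$, the hypothesis applied to $(Y,Y)$ yields $W(P,Y) = exp(c + k*\alpha) * W(Q,Y)$ for all $Y$, and comparing the coefficients of $\alpha$ also $h(P,Y) = h(Q,Y) + k$. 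Since for \emph{any} $R$ the weight degree factorises as $W(P \cup R, X) = W(P,X) * W(R,X)$ and likewise for $Q$, we get $W(P \cup R, X) = exp(c + k*\alpha) * W(Q \cup R, X)$ and $h(P \cup R, X) = h(Q \cup R, X) + k$ for every stable model $X$.

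Parts~2 and~3 then proceed exactly as in Lemma~\ref{lem:lpmln-strong-equivalence-suf}. For Part~2, if $X \in \psm{P \cup R}$ then $h(P \cup R, X') \le h(P \cup R, X)$ for all $X' \in \lsm{P \cup R}$, so the uniform shift by $k$ gives $h(Q \cup R, X') \le h(Q \cup R, X)$ for all such $X'$, whence $X \in \psm{Q \cup R}$; by symmetry $\psm{P \cup R} = \psm{Q \cup R}$, and every non-probabilistic stable model has probability $0$ under both programs. For Part~3, for $X \in \psm{P \cup R}$ the common factor $exp(c + k*\alpha)$ cancels between numerator and denominator of the normalised weight, exactly as in the displayed computation~\eqref{eq:p-se-probability-equal}, giving $Pr(P \cup R, X) = Pr(Q \cup R, X)$.

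The only genuinely new ingredient compared with Lemma~\ref{lem:lpmln-strong-equivalence-suf} is the passage from ``weight condition on UE-models'' to ``weight condition on all interpretations $Y$'', which I expect to be the main (and essentially the only) obstacle; everything else is a verbatim adaptation, because narrowing $R$ to sets of weighted facts merely weakens a universal quantifier, the semi-uniform part is supplied by Lemma~\ref{lem:lpmln-sm-uniform-equiv-lue}, and the weight product $W(P \cup R, X) = W(P,X) * W(R,X)$ is insensitive to the syntactic shape of $R$.
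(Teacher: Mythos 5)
Your proposal is correct and follows exactly the route the paper intends: the paper omits this proof, stating only that it is analogous to Lemma~\ref{lem:lpmln-strong-equivalence-suf}, and your three-part adaptation (semi-uniform part via Lemma~\ref{lem:lpmln-sm-uniform-equiv-lue}, then the hard-rule shift and cancellation of the common factor $exp(c+k\alpha)$) is precisely that argument. Your key observation — that every total pair $(Y,Y)$ is a UE-model by Lemma~\ref{lem:XX-lse-model-LM} and Definition~\ref{def:lue-model}, so the weight hypothesis already constrains $W(P,Y)$ and $h(P,Y)$ for all interpretations $Y$ — is exactly how the PSE-condition is used in the paper's proof of Lemma~\ref{lem:lpmln-strong-equivalence-suf} (only total SE-models enter there), so nothing further is needed.
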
 

The proof of Lemma \ref{lem:finite-lpmln-uniform-equivalence-lue} is similar to the proof of Lemma \ref{lem:lpmln-strong-equivalence-suf}, therefore, we omit the detail for brevity. 
Lemma \ref{lem:finite-lpmln-uniform-equivalence-lue} shows a sufficient condition to characterize the p-uniform equivalence between \lpmln programs, called PUE-condition. 
To show whether the PUE-condition is necessary, we also need to find a necessary extension just as we did in proving Theorem \ref{thm:lpmln-strong-equivalence-w}.
That is, for an \lpmln program $P$ and a set $U$ of literals such that $\olit{P}\subseteq U$, we need to find a set $E$ of \lpmln programs satisfying for any interpretations $X$ and $Y$ in $2^{U^+}$, 
there exists a program $R$ of $E$ such that both $X$ and $Y$ are probabilistic stable models of $P \cup R$, 
where the programs in $E$ are sets of weighted facts. 
Unfortunately, there does not exist such an extension for p-uniformly equivalent \lpmln programs, which is discussed via Proposition \ref{prop:property-ii-not-hold} and \ref{prop:pue-case2}. 

\begin{prop}
\label{prop:property-ii-not-hold}
Let $P$ be an arbitrary \lpmln program, 
for any two total UE-models $(X,X)$ and $(Y,Y)$ of $P$ such that $X \subset Y$ and $h(P, X) < h(P, Y)$, 
there does not exist a set $R$ of weighted facts such that $h(P \cup R, X) \geq h(P \cup R, Y)$. 
\end{prop}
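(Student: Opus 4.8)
The plan is to exploit the single special feature of the extension here: $R$ is a set of \emph{weighted facts}, and for facts the satisfaction relation is monotone with respect to set inclusion. First I would record the elementary observation that if $f$ is an ASP fact $a_1 \vee \dots \vee a_k$ and $I \subseteq J$ are interpretations with $I \models f$, then $J \models f$, because $I$ already meets the head of $f$ and $J \supseteq I$. Since every hard rule of $R$ is a fact, this gives, from $X \subset Y$, the inclusion $R^h_X \subseteq R^h_Y$, i.e. $h(R, X) \leq h(R, Y)$. This is the only place where the restriction to sets of weighted facts is used: for an arbitrary \lpmln extension a hard constraint could be satisfied by $X$ but violated by the larger interpretation $Y$, and the monotonicity would fail.

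Next I would dispose of the minor bookkeeping caused by program union being set union. Replacing $R$ by $R' = R \setminus P$ changes neither $P \cup R$ nor the inclusion just established ($R'^h_X \subseteq R'^h_Y$ since $R' \subseteq R$), while making $R'$ disjoint from $P$, so that $h(P \cup R, I) = h(P, I) + h(R', I)$ for every interpretation $I$, exactly as used in the proof of Lemma \ref{lem:lpmln-strong-equivalence-suf}. Combining this decomposition with the hypothesis $h(P,X) < h(P,Y)$ and with $h(R',X) \leq h(R',Y)$ yields $h(P \cup R, X) = h(P,X) + h(R',X) < h(P,Y) + h(R',Y) = h(P \cup R, Y)$, so $h(P \cup R, X) < h(P \cup R, Y)$ for \emph{every} set $R$ of weighted facts. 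This contradicts the assumed $h(P \cup R, X) \geq h(P \cup R, Y)$, so no such $R$ exists.

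The argument is short, and I do not anticipate a genuine obstacle: the only subtle point is the monotonicity of fact satisfaction, which must be stated explicitly and then propagated through the union. I would also remark in passing that the UE-model hypothesis on $(X,X)$ and $(Y,Y)$ is not actually needed for this proposition — only $X \subset Y$ and $h(P,X) < h(P,Y)$ are used — and that by Lemma \ref{lem:XX-lse-model-LM} any pair of total interpretations with $X \subset Y$ is automatically a pair of total UE-models anyway.
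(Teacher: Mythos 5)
Your proposal is correct and follows essentially the same route as the paper's proof: decompose $h(P\cup R,\cdot)$ into the contributions of $P$ and $R$, note that a weighted fact is satisfied exactly when its head meets the interpretation so $X\subset Y$ gives $R^h_X\subseteq R^h_Y$, and conclude $h(P\cup R,X)<h(P\cup R,Y)$. Your extra care about the overlap $R\cap P$ (and the observation that the UE-model hypothesis is not actually used) are minor refinements the paper omits, but they do not change the argument.
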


\begin{proof}
For a set $R$ of weighted facts, $h(P \cup R, X) = h(P, X) + h(R, X) = |P^h_X| + |R^h_X|$ and $h(P \cup R, Y) = |P^h_Y| + |R^h_Y|$. 
By the definition of \lpmln reduct, we have 
\begin{equation}
    R^h_X = \{\alpha : r \in R ~|~ h(r) \cap X \neq \emptyset \} \text{ and } 
    R^h_Y = \{\alpha : r \in R ~|~ h(r) \cap Y \neq \emptyset \}
\end{equation}
Since $X \subset Y$, we have $R^h_X \subseteq R^h_Y$, which means $h(P \cup R, X) < h(P \cup R, Y)$. 
Therefore,  Proposition \ref{prop:property-ii-not-hold} is proven.
\end{proof}
Proposition \ref{prop:property-ii-not-hold} shows that necessary extensions of weighted facts may not exist for some \lpmln programs. 
Recall the \lpmln program $P$ in Example \ref{ex:se-model}, 
from Table \ref{tab:se-models}, 
we can observe that $h(P, \emptyset) < h(P, \{a\})$, 
and for any set $R$ of weighted facts, $h(P \cup R,  \emptyset) = 0$, while $h(P \cup R, \{a\}) \geq 1$.
That is, $h(P \cup R, \emptyset)$ is always less than $h(P \cup R, \{a\})$, 
which means the condition on the weight of $\emptyset$ is not necessary for characterizing the p-uniform equivalence of $P$. 

In addition, there are other complicated situations that prevent a non-probabilistic stable model from becoming probabilistic stable model of any extended program. 
For example, suppose interpretations $X$ and $Y$ are stable models of an \lpmln program $P$ such that $X \not\in \psm{P}$ and $Y \in \psm{P}$. 
To adjust the weight degrees of $X$ and $Y$, we can add a weighted fact $\alpha:r$ to $P$, and the fact should satisfy following conditions:
\begin{itemize}
	\item $\alpha : r \not\in P$, which means there are newly introduced atoms in $r$ in general, i.e. $lit(r) \not\subseteq \olit{P}$; 
	\item $h(r) \cap X \neq \emptyset$ and $h(r) \cap Y = \emptyset$; and 
	\item combining above two conditions, $r$ should be a disjunctive fact.
\end{itemize}
Suppose fact ``$\alpha : a \vee c.$'' satisfies all of above conditions, and $a$ belongs to $X$. 
By $P'$, we denote the extended program $P \cup \{\alpha : a \vee c.\}$.
It is easy to check that both $X$ and $Y$ are stable models of $P'$, and $h(P', X) = h(P, X) + 1$ and $h(P', Y) = h(P, Y)$. 
It seems $X$ could become a probabilistic stable model of an extended program by adding such kinds of weighted facts. 
However, there are some new stable models of $P'$ such as $Y' = Y \cup \{c\}$, 
it is easy to check that $h(P', Y') = h(P, Y) + 1$, 
which means the difference between the numbers of hard rules satisfied by $X$ and a probabilistic stable model cannot be decreased. 
Obviously, once we introduce new atoms to an extension, $X$ cannot become a probabilistic stable model of the extended program. 
A general case of above discussion is shown in Proposition \ref{prop:pue-case2}.

\begin{prop}
\label{prop:pue-case2}
For an \lpmln program $Q$ and two stable models $X$ and $Y$ of $Q$ such that $X \not\in \psm{Q}$ and $Y \in \psm{Q}$,  
if $R$ is a set of weighted facts such that for each rule $\alpha : r \in R$, $lit(r) \not\subseteq lit(\overline{Q})$, 
then at least one of $X$ and $Y$ cannot be a probabilistic stable model of $Q \cup R$. 
\end{prop}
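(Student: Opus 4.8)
The plan is to argue by contradiction: assume both $X$ and $Y$ are probabilistic stable models of $Q\cup R$, and produce a stable model of $Q\cup R$ that satisfies strictly more hard rules than $Y$, contradicting $Y\in\psm{Q\cup R}$. First I would record two easy observations. A weighted fact with empty head would be a rule $r$ with $lit(r)=\emptyset\subseteq\olit{Q}$, which the hypothesis forbids, so every hard fact of $R$ is a non-empty disjunctive fact; moreover, each hard fact of $R$ has a literal outside $\olit{Q}$ and hence cannot belong to $Q$, so $Q$ and $R$ share no hard rule and therefore $h(Q\cup R,I)=h(Q,I)+h(R,I)$ for every interpretation $I$. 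Since $X$ and $Y$ are probabilistic stable models of $Q\cup R$ they satisfy the same, maximal, number of hard rules of $Q\cup R$; and since they are stable models of $Q$ with $X\notin\psm{Q}\ni Y$, we have $h(Q,X)<h(Q,Y)$. Subtracting, $h(R,X)>h(R,Y)$, so there is a hard fact $f=\alpha:r\in R$, with $r=l_1\vee\dots\vee l_k$ and $k\geq 1$, such that $X\models r$ but $Y\not\models r$.

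Next I would use $lit(r)\not\subseteq\olit{Q}$ to pick a disjunct $l^{*}$ of $r$ whose underlying atom does not occur in $Q$; since $Y\not\models r$, also $l^{*}\notin Y$. The heart of the argument is to show that $Y^{*}=Y\cup\{l^{*}\}$ is again a stable model of $Q\cup R$, after checking that $\overline{l^{*}}\notin Y$ (which I would derive from the supportedness of the stable model $Y$, choosing $l^{*}$ among the fresh disjuncts of $r$ accordingly). The two structural facts that drive this are: the atom of $l^{*}$ does not occur in $Q$, so $Q_{Y^{*}}=Q_{Y}$ and $\lglred{Q}{Y^{*}}=\lglred{Q}{Y}$; and the rules of $R$ are facts, hence monotone and unaffected by the GL-reduct, so $\overline{R_{Y^{*}}}\supseteq\overline{R_{Y}}\cup\{\overline{r}\}$. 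Granting these, $Y^{*}\models\lglred{(Q\cup R)}{Y^{*}}$, and any model $Z$ of $\lglred{(Q\cup R)}{Y^{*}}$ with $Z\subsetneq Y^{*}$ is impossible: if $l^{*}\notin Z$ then $Z\subseteq Y$, and $Z=Y$ is ruled out because $Z$ must satisfy $\overline{r}$ whereas $Y\not\models r$, so $Z\subsetneq Y$ already contradicts that $Y$ is a stable model of $Q\cup R$; the case $l^{*}\in Z$ is reduced to the minimality of $Y$ in the same way. Hence $Y^{*}\in\lsm{Q\cup R}$, and since $Y^{*}\supseteq Y$ satisfies every hard rule that $Y$ does and in addition $Y^{*}\models r$, we get $h(Q\cup R,Y^{*})\geq h(Q\cup R,Y)+1$, contradicting $Y\in\psm{Q\cup R}$.

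The step I expect to be the main obstacle is exactly the verification that the enlarged interpretation $Y^{*}$ is still a consistent stable model of $Q\cup R$: one must control how the atoms of $r$ that are fresh for $Q$ occur in the \emph{other} facts of $R$, because — as the discussion preceding this proposition already illustrates via the new stable models of the form $Y\cup\{c\}$ — introducing fresh atoms is precisely what creates new stable models that ``catch up'' in hard count, so $\overline{R_{Y^{*}}}$ can acquire facts other than $\overline{r}$. Concretely, the delicate point is to pick $l^{*}$ so that adjoining it to $Y$ breaks neither consistency nor minimality, and to treat separately the pathological situation in which the complements of all fresh disjuncts of $r$ are forced into $Y$; here a finer argument on the support structure of $Y$, or a global comparison of the hard counts across all stable models of $Q\cup R$, is needed. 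For completeness I would also dispatch the degenerate case: if $R$ contains no hard fact then $h(Q\cup R,\cdot)=h(Q,\cdot)$, and $h(Q,X)<h(Q,Y)$ immediately prevents both $X$ and $Y$ from being probabilistic stable models of $Q\cup R$.
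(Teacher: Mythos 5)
Your strategy is essentially the paper's: compare hard-rule counts to extract a hard fact $r\in R$ with $X\models r$ and $Y\not\models r$, then enlarge $Y$ by fresh literal(s) of unsatisfied fact(s) of $R$ to obtain a stable model of $Q\cup R$ with strictly larger hard count, which rules $Y$ out of $\psm{Q\cup R}$. The paper phrases this as a direct two-case argument (if $Y\models R$ then $h(Q\cup R,Y)>h(Q\cup R,X)$ already excludes $X$; otherwise it takes a minimal $Z\supset Y$ with $Z-Y\subseteq \olit{R-R_Y}$ and $Z\models R$), while you run the contradiction version and add a single fresh disjunct $l^*$; the core construction is the same. But as a proof your proposal is incomplete, and you say so yourself: you never resolve (i) the consistency step, i.e.\ the case in which every disjunct of $r$ outside $\olit{Q}$ has its complement in $Y$, nor (ii) the minimality of $Y^*=Y\cup\{l^*\}$ when $l^*$ also occurs in facts of $R_Y$, where your reduction ``to the minimality of $Y$'' breaks down because a proper subset containing $l^*$ may satisfy a fact of $R_Y$ only through $l^*$, so deleting $l^*$ from it need not yield a countermodel below $Y$.

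These are genuine obstacles, not technicalities, because the hypothesis $lit(r)\not\subseteq\olit{Q}$ only supplies a fresh \emph{literal}, which may be the classical negation of an atom of $Q$; in that situation your construction --- and indeed the statement read literally --- fails. Take $Q=\{\alpha: a\vee b.,\ \alpha: \leftarrow b.\}$, $Y=\{a\}\in\psm{Q}$, $X=\{b\}\notin\psm{Q}$, and $R=\{\alpha: \neg a\vee b.\}$ (its single rule contains $\neg a\notin\olit{Q}$). No interpretation satisfies all three hard rules, and both $\{a\}$ and $\{b\}$ are stable models of $Q\cup R$ satisfying two of them, so both are probabilistic stable models of $Q\cup R$. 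So the pathological case you flag is exactly where one must either strengthen the hypothesis to a fresh \emph{atom} (as the informal discussion preceding the proposition suggests) or give up. For what it is worth, the paper's own proof glosses over the same points: in its second case it simply asserts that the minimal $Z$ is a stable model of $Q\cup R$ and that $R_Y\subset R_Z$ gives $h(Q\cup R,Z)>h(Q\cup R,Y)$, ignoring both possible failure of minimality and the fact that $Z-Y$ may contain literals of $Q$ and thereby lose hard rules of $Q$ (which is what happens in the example above). So your attempt is no less careful than the published argument at the delicate spots, but as it stands it does not close them either.
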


\begin{proof}
Since soft rules do not contribute to whether an interpretation is a probabilistic stable model, we assume $R$ only contains hard rules. 
For the probabilistic stable model $Y$, there are two cases. 
(1) If $Y \models R$, then $h(Q \cup R, Y) = h(Q, Y) + |R| > h(Q, X) + |R_X|$, i.e. $h(Q \cup R, Y) > h(Q \cup R, X)$. 
Therefore, $X$ cannot be a probabilistic stable model of $Q \cup R$. 
(2) If $Y \not\models R$, since $\olit{R} \not\subseteq \olit{Q}$, there must be a minimal interpretation $Z$ such that $Y \subset Z$, $(Z - Y) \subseteq \olit{R - R_Y}$, and $Z \models R$. 
It is easy to check that $Z$ is a stable model of $Q \cup R$ and $R_Y \subset R_Z$, 
therefore, $h(Q \cup R, Z) > h(Q \cup R, Y)$ and $Y$ is not a probabilistic stable model of $Q \cup R$. 
Combining above results, Proposition \ref{prop:pue-case2} is proven. 
Besides, by Proposition \ref{prop:property-ii-not-hold}, if  $Y \not\models R$, there does not exist a set $R'$ of weighted facts such that $Y \in \psm{Q \cup R \cup R'}$. 
\end{proof}
In a word, for the p-uniform equivalence, we only find a sufficient condition to characterize it, i.e. the PUE-condition, 
and there are many complicated cases such that the PUE-condition is not necessary. 
For arbitrary \lpmln programs, there may not exist a general necessary condition to characterize the p-uniform equivalence. 
But for \lpmln programs containing no hard rules, it is obvious that the PUE-condition is sufficient and necessary, which is shown as follows. 

\begin{thm}
\label{thm:pue-soft-rule}
For \lpmln programs $P$ and $Q$ that do not contain hard rules, $P$ and $Q$ are p-uniformly equivalent, iff $\lue{P} = \lue{Q}$, and there exists a real number $c$ such that for each UE-model $(X, Y) \in \lue{P}$, $W(P, (X, Y)) = c* W(Q, (X,Y))$.
\end{thm}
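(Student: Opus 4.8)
The plan is to treat the two directions separately, with the $\Leftarrow$ direction essentially a corollary of an earlier lemma and the $\Rightarrow$ direction carried by a single ``generating'' extension.

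For the \textbf{if} direction I would simply invoke Lemma~\ref{lem:finite-lpmln-uniform-equivalence-lue} (the PUE-condition). Since $P$ and $Q$ contain no hard rules, the symbol $\alpha$ never occurs in any weight degree, so any real constant $c$ satisfying $W(P,(X,Y)) = c \cdot W(Q,(X,Y))$ for all $(X,Y)\in\lue{P}$ plays exactly the role of $exp(c' + k*\alpha)$ with $k=0$ and $c'=\ln c$; and $c>0$ automatically, because $(\emptyset,\emptyset)\in\lue{P}$ by Lemma~\ref{lem:XX-lse-model-LM} and $W(P,(\emptyset,\emptyset))=W(P_\emptyset)>0$, $W(Q,(\emptyset,\emptyset))=W(Q_\emptyset)>0$. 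Hence the hypotheses of Lemma~\ref{lem:finite-lpmln-uniform-equivalence-lue} are met and $P \equiv_{u,p} Q$.

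For the \textbf{only-if} direction, suppose $P \equiv_{u,p} Q$. First, p-uniform equivalence implies semi-uniform equivalence, so Lemma~\ref{lem:lpmln-sm-uniform-equiv-lue} gives $\lue{P}=\lue{Q}$, and it only remains to extract the constant. The device is a single extension by weighted facts that turns \emph{every} relevant interpretation into a probabilistic stable model of both programs. Let $U=\olit{P}\cup\olit{Q}$ (finite), introduce for each literal $l\in U$ a fresh atom $\bar l$, and put $R_{gen}=\{\,0 : l \vee \bar l. ~\mid~ l\in U\,\}$, a set of disjunctive weighted facts. For $S\in 2^{U^+}$ set $Y_S = S \cup \{\bar l \mid l\in U\setminus S\}$. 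I would show that $Y_S\in\lsm{P\cup R_{gen}}$ for every $S\in 2^{U^+}$: the \lpmln reduct discards precisely the rules of $P$ not satisfied by $Y_S$, so $Y_S\models\lglred{P}{Y_S}$ always holds; every proper subset of $Y_S$ violates $\overline{R_{gen}}$, so $Y_S$ is a minimal model of $\lglred{P}{Y_S}\cup\overline{R_{gen}}$; and an analogous analysis shows no other stable models arise. The same argument (or semi-uniform equivalence) gives $\lsm{P\cup R_{gen}}=\lsm{Q\cup R_{gen}}=\{Y_S\mid S\in 2^{U^+}\}$, and since nothing in sight contains hard rules, all of these are probabilistic stable models.

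Then I would compute weights. As the $\bar l$ do not occur in $P$ and $Y_S$ satisfies every rule of $R_{gen}$ (which all carry weight $0$), $W(P\cup R_{gen}, Y_S)=W(P_{Y_S})=W(P_S)$, where $P_S=\{w:r\in P\mid S\models r\}$ depends only on $S=Y_S\cap U$, and likewise $W(Q\cup R_{gen}, Y_S)=W(Q_S)$. Applying $P\cup R_{gen}\equiv_p Q\cup R_{gen}$ to each $Y_S$ yields $Pr(P\cup R_{gen}, Y_S)=Pr(Q\cup R_{gen}, Y_S)$, i.e.
\[
\frac{W(P_S)}{\sum_{S'\in 2^{U^+}} W(P_{S'})} \;=\; \frac{W(Q_S)}{\sum_{S'\in 2^{U^+}} W(Q_{S'})} \qquad\text{for all } S\in 2^{U^+},
\]
so $W(P_S)/W(Q_S)$ equals the single positive real $c = \big(\sum_{S'} W(P_{S'})\big)/\big(\sum_{S'} W(Q_{S'})\big)$. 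Finally, Proposition~\ref{prop:se-model-universe} lets me descend from an arbitrary UE-model $(X,Y)\in\lue{P}$ to interpretations over $U$: $W(P,(X,Y))=W(P_Y)=W(P_{Y\cap U})=c\cdot W(Q_{Y\cap U})=c\cdot W(Q,(X,Y))$, which is the claimed weight condition.

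The main obstacle I expect is the verification that $R_{gen}$ makes every $Y_S$ a probabilistic stable model of both programs — in particular that interpretations which are \emph{not} classical models of $\overline{P}$ still become stable models (this hinges on the \lpmln reduct dropping unsatisfied rules), and that $R_{gen}$ contributes no unexpected stable models. Once this generating extension is established, the probability arithmetic forcing a constant ratio and the reduction to $U$ via Proposition~\ref{prop:se-model-universe} are routine.
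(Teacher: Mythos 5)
Your overall strategy is sound, and it is worth noting that the paper itself offers no explicit proof of this theorem: it treats the statement as an immediate consequence of Lemma~\ref{lem:finite-lpmln-uniform-equivalence-lue} (sufficiency) together with the observation that the obstructions to necessity identified in Propositions~\ref{prop:property-ii-not-hold} and \ref{prop:pue-case2} all stem from hard rules, which are absent here (so every stable model is automatically probabilistic). Your ``if'' direction is exactly this reduction to the PUE-condition with $k=0$, and is fine. For the ``only-if'' direction you supply the missing argument concretely with a single generating factual extension $R_{gen}=\{0 : l \vee \bar l \mid l\in U\}$, rather than pairwise necessary extensions in the style of Lemma~\ref{lem:pse-onlyif-part1}; this is legitimate (disjunctive weighted facts with fresh atoms are facts in the paper's sense, cf.\ Proposition~\ref{prop:pue-case2}), and the verification that each $Y_S$ is a stable model of $P\cup R_{gen}$ and of $Q\cup R_{gen}$ is correct: minimality is forced because $Y_S$ meets each pair $\{l,\bar l\}$ in exactly one element. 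The descent from arbitrary UE-models to interpretations over $U$ via Proposition~\ref{prop:se-model-universe}, and the fact that every total pair is a UE-model, correctly reduce the weight condition to $W(P_S)=c\,W(Q_S)$ for consistent $S\subseteq U$.

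There is, however, one false subclaim: $\lsm{P\cup R_{gen}}$ is \emph{not} $\{Y_S \mid S\in 2^{U^+}\}$. Because the \lpmln reduct simply drops unsatisfied rules, many additional interpretations are stable models of $P\cup R_{gen}$ --- for instance $\emptyset$ (a stable model of every \lpmln program) and, more generally, every consistent $S\subseteq U$ itself (its proper subsets violate the retained facts $l\vee\bar l$ for $l\in S$). Consequently the denominators in your displayed probability identity, written as $\sum_{S'\in 2^{U^+}}W(P_{S'})$ and $\sum_{S'}W(Q_{S'})$, are not the true normalizing constants. This does not sink the argument: since the extension contains no hard rules, all stable models are probabilistic, and p-ordinary equivalence of $P\cup R_{gen}$ and $Q\cup R_{gen}$ applied at each $Y_S$ gives $W(P_S)/N_P = W(Q_S)/N_Q$, where $N_P=\sum_{I\in \lsm{P\cup R_{gen}}}W(P\cup R_{gen},I)$ and $N_Q$ is the analogous sum (over the same set of stable models). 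The ratio $W(P_S)/W(Q_S)$ is therefore the single constant $c=N_P/N_Q>0$ for all $S$, which is all you need. So: drop the claim that no other stable models arise, and replace your explicit normalizers by the common ones; the rest of the proof stands.
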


\begin{exa}
\label{ex:p-ue}
Continue Example \ref{ex:semi-ue}, it is easy to check that programs $P$ and $Q$ have the same UE-models, and for each UE-model $(X, Y)$ such that $Y \neq \emptyset$, $W(Q, (X, Y)) = e^\alpha * W(P, (X, Y))$. 
Although $P$ and $Q$ do not satisfy the PUE-condition of Lemma \ref{lem:finite-lpmln-uniform-equivalence-lue}, $P$ and $Q$ are still p-uniformly equivalent. 
Since $\emptyset$ cannot be a probabilistic stable model of any extended programs of $P$ and $Q$ by Proposition $\ref{prop:property-ii-not-hold}$, the condition of weight on UE-model $(\emptyset, \emptyset)$ is not necessary. 
\end{exa}

\subsection{Relationships}

\begin{figure}
	\centering
	\includegraphics{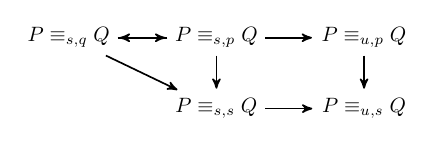}
	\caption{Relationships among Strong Equivalences for \lpmlnend}
	\label{fig:equiv-relations}
\end{figure}

The relationships among different notions of strong equivalences have been discussed separately in previous sections, which is summarized as follows. 
As shown in Figure \ref{fig:equiv-relations}, for \lpmln programs $P$ and $Q$,  
\begin{itemize}
	\item $P \equiv_{s, p} Q$ implies  $P \equiv_{u, p} Q$, and the inverse does not hold in general; 
	\item $P \equiv_{s, p} Q$ iff $P \equiv_{s, q} Q$;
	\item for each $\Delta \in \{p, q\}$, $P \equiv_{s, \Delta} Q$ implies $P \equiv_{s, s} Q$, and the inverses do not hold in general; 
	\item both $P \equiv_{s,s} Q$ and $P \equiv_{u, p} Q$ imply $P \equiv_{u, s} Q$, and the inverses do not hold in general.
\end{itemize}
For \lpmln programs containing only hard rules, they are sp-strongly equivalent, 
iff their unweighted ASP counterparts are strongly equivalent under the ASP semantics; 
and for \lpmln programs containing only soft rules, they are sp-strongly equivalent, iff they are p-strongly equivalent. 

In addition, due to the similarity between the definitions of SE-models for \lpmln and ASP, we consider the relationship between the semi-strong equivalence for \lpmln and the strong equivalence for ASP. 
For an \lpmln program $P$, it is easy to check that an SE-model $(X, Y) \in \lse{P}$ is also an SE-model of ASP program $\overline{P_Y}$, 
while an SE-model of $\overline{P_Y}$ is not an SE-model of $P$ in general. 
Since for an SE-model $(X', Y')$ of $\overline{P_Y}$ such that $Y' \neq Y$, 
$P_Y$ is not equal to $P_{Y'}$ usually, which means $(X', Y')$ may not be an SE-model of $P$. 
Therefore, \lpmln programs $P$ and $Q$ are semi-strongly equivalent does not mean their \lpmln reducts w.r.t. an interpretation are strongly equivalent under the ASP semantics. 
Recall Example \ref{ex:p-strong-equivalence}, programs $P = \{ w_1 : a \vee b. ~ w_2 : b \leftarrow a.\}$ and $Q = \{w_3 : b. ~ w_4 : a \leftarrow  not ~b. \}$ are semi-strongly equivalent. 
For an interpretation $I = \{a\}$, $\overline{P_I} = \{ a \vee b. \}$ and $\overline{Q_I} = \{ a \leftarrow not ~ b.  \}$. 
It is easy to check that $\overline{P_I}$ has two stable models, while $\overline{Q_I}$ only has one, therefore, they are not strongly equivalent.

\section{Computational Complexities}
In this section,  we discuss the computational complexities of deciding strong equivalences for \lpmlnend. 
As a by-product, we present a brief discussion on the potential application of the method presented in this section. 

\subsection{Computational Complexity Results}
Since checking SE-model and checking weight degrees are two separate processes, 
we consider the semi-strong equivalence checking firstly, 
which is shown as follows. 

\begin{thm}
\label{thm:complexity-se}
For \lpmln programs $P$ and $Q$, deciding $P \equiv_{s, s} Q$ is co-NP-complete.
\end{thm}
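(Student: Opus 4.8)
The plan is to reduce the question, via Lemma~\ref{lem:lpmln-sm-strong-equiv}, to deciding whether $\lse{P}=\lse{Q}$, and then analyse the complexity of this set-equality test. For membership in co-NP I would show that the complementary problem is in NP: a certificate for $P\not\equiv_{s,s}Q$ is a single SE-interpretation $(X,Y)$ lying in the symmetric difference $\lse{P}\triangle\lse{Q}$, and by Proposition~\ref{prop:se-model-universe} we may restrict to $X\subseteq Y\subseteq\olit{P}\cup\olit{Q}$, so the certificate has size polynomial in $|P|+|Q|$. Verifying $(X,Y)\in\lse{P}$ is polynomial: build the \lpmln reduct $P_Y$ by testing $Y\models r$ for each rule, take the unweighted GL-reduct $\lglred{P}{Y}$, and test $X\models\lglred{P}{Y}$; doing the same for $Q$ and accepting iff exactly one membership holds completes the NP procedure. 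An alternative route, flagged in the introduction because it is reused in the next subsection, is to construct in polynomial time a propositional formula that is a tautology iff $\lse{P}=\lse{Q}$, in the spirit of Lin's translation of ASP strong equivalence into classical propositional logic, and then invoke co-NP-membership of propositional validity.

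For co-NP-hardness I would reduce from a known co-NP-complete problem, namely deciding whether a propositional formula is a tautology --- equivalently, deciding strong equivalence between ASP programs (cf.~\cite{Lin2005Discover}). The intended gadget attaches one common fixed weight to every rule, so the weights play no role and only the SE-model structure matters, and encodes the instance into two \lpmln programs whose SE-models coincide exactly when the instance is positive. On pairs $(X,Y)$ for which $Y$ satisfies the underlying ASP program $\Pi$, one has $\overline{P_Y}=\Pi$, hence $(X,Y)\in\lse{P}$ iff $X\models\Pi^Y$, which is exactly the ASP SE-model condition, so this part carries over directly.

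The step I expect to be the main obstacle is the treatment of the remaining pairs. \lpmln SE-models range over \emph{every} $Y$, including interpretations that classically falsify $\Pi$, and for such $Y$ the set $\overline{P_Y}$ is only the fragment of $\Pi$ satisfied by $Y$, so $\lglred{P}{Y}$ can behave differently for two programs even when their ASP counterparts are strongly equivalent. Consequently a naive ``uniform-weight'' translation does not preserve strong equivalence, and the reduction has to be hardened --- for instance by first normalising the input programs (replacing each constraint by a fresh atom plus a single explosion rule, or by padding rule heads) so that the ASP and \lpmln reducts agree at all $Y$, or by running the reduction directly from unsatisfiability with a gadget whose only possible SE-model discrepancy sits at a single designated interpretation that lies in both SE-model sets precisely when the formula is unsatisfiable. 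Verifying that the gadget introduces no spurious discrepancies at other interpretations $Y$ is where the real care is needed; the upper bound and the reduction to $\lse{P}=\lse{Q}$ are routine.
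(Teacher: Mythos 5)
Your membership argument is sound and in fact more elementary than the paper's: the paper establishes co-NP-membership by translating $P$ and $Q$ into propositional formulas $\ptrans{U}{P}$ and $\ptrans{U}{Q}$ whose classical models are in one-to-one correspondence with SE-models, and then checks validity of their equivalence, whereas you simply guess one SE-interpretation $(X,Y)$ with $X \subseteq Y \subseteq \olit{P \cup Q}$ in the symmetric difference $\lse{P}\,\triangle\,\lse{Q}$ and verify membership on each side in polynomial time, which together with Lemma \ref{lem:lpmln-sm-strong-equiv} and the restriction of Proposition \ref{prop:se-model-universe} gives the same upper bound. The paper's detour through the translation is deliberate (the formula is reused for the SSM case in Corollary \ref{cor:complexity-pse-ssm} and for the Lin--Chen-style discovery of syntactic conditions), but for Theorem \ref{thm:complexity-se} itself your route suffices.

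The hardness half, however, contains a genuine gap: you never produce a reduction, you only diagnose why the naive one fails and list unverified repair strategies (normalising constraints, padding heads, reducing from unsatisfiability with a ``single designated interpretation''). The obstacle you identify is real --- for $Y \not\models \overline{P}$ the reduct $\overline{P_Y}$ retains only the satisfied fragment, so assigning a uniform weight to two strongly equivalent ASP programs need not yield semi-strongly equivalent \lpmln programs --- but overcoming it is precisely the substance of the paper's proof, not a routine detail. The paper reduces directly from CNF tautology: it introduces fresh atoms $a$ and $b$, encodes each clause $C_i$ by the hard rules $\alpha : a \leftarrow \neg c_{i,1}, \ldots, c_{i,n_i}, ~not~ b$ in one program and $\alpha : b \leftarrow \neg c_{i,1}, \ldots, c_{i,n_i}, ~not~ a$ in the other, and adds a shared padding program $\psi(F)$ engineered so that every SE-interpretation whose second component is a partial interpretation, or contains both or neither of $a,b$, is an SE-model of both programs; the only possible discrepancy is at a total $Y$ containing exactly one of $a,b$, where the GL-reducts collapse to the clause rules, and such a discrepancy exists iff some total interpretation falsifies some clause, i.e.\ iff $F$ is not a tautology. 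Designing $\psi(F)$ and carrying out this case analysis is exactly the ``real care'' you flag and leave open, so the co-NP-hardness direction is not established by your proposal.
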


\begin{proof}
\textbf{Hardness.}
To show co-NP-hardness, we provide a polynomial reduction of checking tautology to deciding the strong equivalence of two \lpmln programs. 
For a propositional formula in conjunctive normal form (CNF) $F$, it is well-known that checking whether $F$ is a tautology is co-NP-complete. 
In this paper, a CNF $F$ is a formula of the form \eqref{eq:cnf}, which is a conjunction of clauses. 
\begin{equation}
\label{eq:cnf}
F = \bigwedge_{i=1}^n C_i \text{, where } C_i = \{c_{i,1}, \ldots, c_{i, m_i}, \neg c_{i, m_i + 1}, \ldots, \neg c_{i, n_i}\} ~ (1 \leq i \leq n)
\end{equation}
For a CNF $F$ of the form \eqref{eq:cnf}, let $a$ and $b$ be newly introduced atoms.  
For each clause $C_i$ of $F$, 
$\psi_1(C_i)$ is an \lpmln rule of the form 
\begin{equation}
\alpha : a \leftarrow \neg c_{i, 1}, \ldots, \neg c_{i, m_i}, c_{i, m_i + 1}, \ldots, c_{i, n_i} ~not~ b.
\end{equation}
$\psi_2(C_i)$ is an \lpmln rule of the form 
\begin{equation}
\alpha : b \leftarrow \neg c_{i, 1}, \ldots, \neg c_{i, m_i}, c_{i, m_i + 1}, \ldots, c_{i, n_i} ~not~ a.
\end{equation}
and $\psi(F)$ is an \lpmln program as follows
\begin{equation}
\begin{split} \psi(F) =
& \{\alpha : a  \leftarrow not ~c, ~not~ \neg c. ~|~ c \in at(F)\} ~\cup \\
& \{\alpha : b  \leftarrow not ~c, ~not~ \neg c. ~|~ c \in at(F)\} ~\cup \\
& \{\alpha : c_1 \vee \neg c_1 \leftarrow not ~c_2, ~not~ \neg c_2. ~|~ c_1, ~c_2 \in at(F)\}
\end{split}
\end{equation}
where $at(F)$ is the set of atoms occurred in  $F$. 
Based on the above notations, we define \lpmln programs $\psi_1(F)$ and $\psi_2(F)$ as follows
\begin{eqnarray}
& \psi_1(F) = \{ \psi_1(C_i) ~|~ 1 \leq i \leq n \} \cup \psi(F) \\ 
& \psi_2(F) = \{ \psi_2(C_i) ~|~ 1 \leq i \leq n \} \cup \psi(F) 
\end{eqnarray}

Next, we show that a CNF $F$ is a tautology iff \lpmln programs $\psi_1(F)$ and $\psi_2(F)$ are semi-strongly equivalent. 
Firstly, we introduce some notions. 
Let $F$ be a CNF and $I$ an interpretation, if for any atom $a \in at(F)$, either $a \in I$ or $\neg a \in I$, we say $I$ is a total interpretation, 
otherwise, $I$ is a partial interpretation. 
For a CNF $F$ and a partial interpretation $I$, it is easy to check that the \lpmln reduct 
$\psi(F)_I$ and the GL-reduct $\lglred{\psi(F)}{I}$ are shown as follows.
\begin{equation}
\begin{split}
\psi(F)_I = &  \{\alpha : a  \leftarrow not ~c, ~not~ \neg c. ~|~ \{a, c, \neg c\} \cap I \neq \emptyset \} \\
& \cup \{\alpha : b  \leftarrow not ~c, ~not~ \neg c. ~|~ \{b, c, \neg c\} \cap I \neq \emptyset \} \\
& \cup \{\alpha : c_1 \vee \neg c_1 \leftarrow not ~ c_2, ~not~ \neg c_2. ~|~ \{c_1, \neg c_1, c_2, \neg c_2\} \cap I \neq \emptyset \}
\end{split}
\end{equation}
\begin{equation}
\lglred{\psi(F)}{I} =  \{a. ~|~ a \in I \} \cup \{b. ~|~ b \in I \} \cup 
\{c \vee \neg c. ~|~ \{c, \neg c\} \cap I \neq \emptyset \}
\end{equation}
Obviously, if $I$ is a partial interpretation, $(I, I)$ is the only SE-model of $\psi_1(F)$ and $\psi_2(F)$. 
Therefore, for the proof, we only need to consider the case that $I$ is a total interpretation. 
For a total interpretation $I$, it is easy to check that $\lglred{\psi(F)}{I} = \emptyset$, therefore, we only need to consider the rules of the form $\psi_1(C_i)$ and $\psi_2(C_i)$.

For the if direction, if $\psi_1(F) \equiv_{s, s} \psi_2(F)$, 
we have for any SE-interpretation $(X, Y)$, $X \models \lglred{\psi_1(F)}{Y}$ iff $X \models \lglred{\psi_2(F)}{Y}$, 
and we need to show $F$ is a tautology. 
We use proof by contradiction. 
Assume $F$ is not a tautology, there must be a total interpretation $X$ and a clause $C_k$ of $F$ such that $\{a, b\} \cap  X = \emptyset$ and $X \not\models C_k$. 
Let $Y = X \cup \{b\}$, 
it is easy to check that $\lglred{\psi_1(F)}{Y} = \emptyset$ and $\lglred{\psi_2(F)}{Y}  = \{\psi_2^+(C_i) ~|~ 1 \leq i \leq n \}$, 
where $\psi_2^+(C_i)$ is obtained from $\psi_2(C_i)$ by removing default literal ``$not ~ a$'' and weight ``$\alpha$''. 
Since $X \not\models C_k$ and $b \not\in X$, we have $X \models b(\psi_2^+(C_k))$ but $X \not\models h(\psi_2^+(C_k))$, i.e. $X \not\models \psi_2^+(C_k)$.
Therefore, we can derive that  $X \not\models \lglred{\psi_2(F)}{Y}$, while, 
it is obvious that $X \models  \lglred{\psi_1(F)}{Y}$, 
which contradicts with $\psi_1(F) \equiv_{s, s} \psi_2(F)$. 
Therefore, $F$ is a tautology.

For the only-if direction, if $F$ is a tautology, we have for any total interpretation $X$, 
$X \models C_i$ $(1 \leq i \leq n)$, which means $X$ does not satisfy the positive bodies of $\psi_1(C_i)$ and $\psi_2(C_i)$. 
For a total interpretation $X$, there are four cases: (1) $\{a, b\} \cap  X = \emptyset$; (2) $\{a, b\} \subset X$; (3) $a \in X$ and $b \not\in X$; and (4) $a \not\in X$ and $b \in X$. 

\textbf{Case 1.} 
If $\{a, b\} \cap X = \emptyset$, we have 
\begin{equation}
    \lglred{\psi_1(F)}{X} = \{ \psi^+_1(C_i) ~|~ 1 \leq i \leq n \}  \text{ and } 
    \lglred{\psi_2(F)}{X} = \{ \psi^+_2(C_i) ~|~ 1 \leq i \leq n \}
\end{equation}
Since $X$ does not satisfies the positive bodies of $\psi_1(C_i)$ and $\psi_2(C_i)$ for any $1 \leq i \leq n$, 
any subset $X'$ of $X$ does not satisfy the positive bodies of $\psi_1(C_i)$ and $\psi_2(C_i)$ for any $1 \leq i \leq n$ either, 
which means $X' \models \lglred{\psi_1(F)}{X}$ and $X' \models  \lglred{\psi_2(F)}{X}$. 
Therefore, for a total interpretation $X$ and any subset $X'$ of $X$, $(X', X)$ is an SE-model of $\psi_1(F)$ and $\psi_2(F)$.

\textbf{Case 2.} 
If $\{a, b\} \subset X$, we have 
\begin{equation}
    \lglred{\psi_1(F)}{X} = \emptyset  \text{ and } 
    \lglred{\psi_2(F)}{X} = \emptyset
\end{equation}
therefore, for any subset $X'$ of $X$, $(X', X)$ is an SE-model of $\psi_1(F)$ and $\psi_2(F)$. 

\textbf{Case 3.}
If $a \in X$ and $b \not\in X$, we have 
\begin{equation}
    \lglred{\psi_1(F)}{X} = \{ \psi^+_1(C_i) ~|~ 1 \leq i \leq n \}  \text{ and } 
    \lglred{\psi_2(F)}{X} = \emptyset
\end{equation}
From above discussion, it is obvious that for a total interpretation $X$ and any subset $X'$ of $X$, $(X', X)$ is an SE-model of $\psi_1(F)$ and $\psi_2(F)$.

\textbf{Case 4.}
$a \not\in X$ and $b \in X$, we have 
\begin{equation}
    \lglred{\psi_1(F)}{X} = \emptyset  \text{ and } 
    \lglred{\psi_2(F)}{X} = \{ \psi^+_2(C_i) ~|~ 1 \leq i \leq n \}
\end{equation}
From the above discussion, it is obvious that for a total interpretation $X$ and any subset $X'$ of $X$, $(X', X)$ is an SE-model of $\psi_1(F)$ and $\psi_2(F)$.

The above results prove that deciding the semi-strong equivalence for \lpmln programs is co-NP-hard.  

\textbf{Membership.}
To show the co-NP-membership, we provide a polynomial reduction of checking the semi-strong equivalence to the problem of checking tautology. 
For \lpmln programs $P$ and $Q$, $U$ is a universe of literals such that $\olit{P \cup Q} \subseteq U$. 
For a literal $u \in U$, by $\hat{u}$, we denote an atom w.r.t. $u$, i.e. for an atom $a$, $\hat{a} = a$ and $\hat{\neg a} = a'$, where $a'$ is a newly introduced atom.
By $\hat{U}$, we denote the set of atoms obtained from $U$, i.e. $\hat{U} = \{\hat{u} ~|~ u \in U\}$. 
By $a^*$, we denote a newly introduced atom w.r.t. an atom $a \in \hat{U}$.
For an ASP rule $r$ of the form \eqref{eq:asp-rule-form-abbr}, 
$\delta_1(r)$ is a propositional formula of the form 
\begin{equation}
 \bigwedge_{b \in b^+(r)} \hat{b}  \wedge  \bigwedge_{c \in b^-(r)} \neg \hat{c}^*  \rightarrow  \bigvee_{a \in h^+(r)} \hat{a} 
\end{equation}
and $\delta_2(r)$ is a propositional formula of the form 
\begin{equation}
\bigwedge_{b \in b^+(r)} \hat{b}^*  \wedge  \bigwedge_{c \in b^-(r)} \neg \hat{c}^*  \rightarrow  \bigvee_{a \in h^+(r)} \hat{a}^* 
\end{equation}
where ``$\rightarrow$", ``$\wedge$", and ``$\vee$" are logical entailment, conjunction, and disjunction in propositional logic respectively. 
And for an \lpmln program $P$, let $\Delta(P)$ be the propositional formula of the form (\ref{eq:lpmln-prop-trans}):
\begin{equation}
\label{eq:lpmln-prop-trans}
\bigwedge_{r \in \overline{P}} \left( \deltar{r} \right)
\end{equation}
And for a set $U$ of literals, let $\Gamma(U)$ be the propositional formula of the form (\ref{eq:gamma-a}):
\begin{equation}
\label{eq:gamma-a}
\bigwedge_{b \in \hat{U}} \left(b \rightarrow b^*\right) \wedge \bigwedge_{a \in U} \left( (\hat{a} \wedge \hat{\neg a} \rightarrow \bot ) \wedge (\hat{a}^* \wedge \hat{\neg a}^* \rightarrow \bot ) \right)
\end{equation}
where $\bot$ denotes ``false''. 

Next, we show that an SE-interpretation $(X, Y)$ is an SE-model of an \lpmln program $P$ iff $\phi((X, Y))$ is a model of $\ptrans{U}{P}$, where $\phi((X,Y))$ is a set of atoms constructed as follows
\begin{equation}
\label{eq:pmodel-to-lse}
\phi((X,Y)) = \hat{X} \cup \{\hat{y}^* ~|~ y \in Y\}
\end{equation}

For the if direction, suppose $Z$ is a model of $\ptrans{U}{P}$, $(X,Y)$ is constructed from $Z$ by the inverse of the map $\phi$:
\begin{eqnarray}
& X = \{a ~|~ \hat{a} \in Z\} \\
& Y = \{a ~|~ \hat{a}^* \in Z\}
\end{eqnarray}
Obviously, $X$ and $Y$ are consistent. 
We need to show that $(X,Y) \in \lse{P}$, which means $X \subseteq Y$ and $X \models \lglred{P}{Y}$. 
We use proof by contradiction. 

Assume $X \not\subseteq Y$, i.e. there is an atom $b \in \hat{X}$ such that $b^* \not\in Z$. 
It is easy to check that the formula $b \rightarrow b^*$ in formula \eqref{eq:gamma-a} cannot be satisfied by $Z$, and $Z$ is not a model of $\ptrans{U}{P}$, which contradicts the premise. Therefore, we have shown that $X \subseteq Y$. 

Assume $X \not\models \lglred{P}{Y}$, which means there is a rule $r \in \lglred{P}{Y}$ such that $b^+(r) \subseteq X$ and $h(r) \cap X = \emptyset$. 
Suppose rule $r$ is obtained from rule $r'$ by removing its negative body, 
by the definitions of \lpmln reduct and GL-reduct,  we have $b^-(r') \cap Y = \emptyset$ and $Y \models r'$. 
Since $X \subseteq Y$, we have $b^+(r) \subseteq Y$ and $h(r) \cap Y \neq \emptyset$. 
By the construction of $X$ and $Y$, it is easy to check that $Z \models \delta_2(r')$ and $Z \not\models \delta_1(r')$, which means $Z$ cannot be a model of $\ptrans{U}{P}$. Therefore, we have shown that $X \models \lglred{P}{Y}$. 

Combining the above results, we have shown that $(X, Y)$ is an SE-model of $P$.

For the only-if direction, suppose $(X,Y)$ is an SE-model of $P$, $Z = \phi((X,Y))$, we need to show that $Z$ is a model of $\ptrans{U}{P}$. 
Since $X \subseteq Y$ and both $X$ and $Y$ are consistent, by the construction of $Z$, it is easy to show that $Z \models \Gamma(U)$. 
For each rule $w:r \in P$, if $w:r \not\in P_Y$ i.e. $Y \not\models w:r$, we have $b^+(r) \subseteq Y$, $b^-(r) \cap Y = \emptyset$, and $h(r) \cap Y = \emptyset$. 
By the construction of $Z$, we have $Z \not\models \delta_2(r)$, hence, $Z \models \deltar{r}$. 
If rule $w:r \in P_Y$ i.e. $Y \models w:r$, we have $Z \models \delta_2(r)$. 
If $b^-(r) \cap Y \neq \emptyset$, it is easy to check that $Z \models \delta_1(r)$. Hence, $Z \models \deltar{r}$. 
If $b^-(r) \cap Y = \emptyset$, let $r'$ be the rule that is obtained from $r$ by removing its negative body. 
Since $X \models \lglred{P}{Y}$, we have $X \models r'$, which means $Z \models \delta_1(r)$. Therefore, $Z \models\deltar{r}$. 

Combining the above results, we have shown that $Z$ is a model of $\ptrans{U}{P}$.

Above results show that the semi-strong equivalence checking for \lpmln programs $P$ and $Q$ can be reduced to checking whether the propositional formula of the form \eqref{eq:lpmln-to-tautology} is a tautology, which is in co-NP.
\begin{equation}
    \label{eq:lpmln-to-tautology}
    (\ptrans{U}{P}) \leftrightarrow (\ptrans{U}{Q})
\end{equation}
Therefore, it proves the co-NP-membership of the semi-strong equivalence checking in \lpmlnend. 
\end{proof}
For \lpmln programs $P$ and $Q$, 
Theorem \ref{thm:complexity-se} shows that deciding $P \equiv_{s, s} Q$ is co-NP-complete. 
To decide $P \equiv_{s, p} Q$, we need to additionally check the relationships among the weights of SE-models. 
For an interpretation $X$, 
we can proceed rule by rule, and check whether each rule can be satisfied by $X$, 
therefore, computing $P_X$ and $Q_X$ is feasible in polynomial time, 
which means computing $W(P, X)$ and $W(Q, X)$ can also be done in polynomial time. 
By Lemma \ref{lem:XX-lse-model-LM}, every total SE-interpretation is an SE-model of an \lpmln program. 
Therefore, for the weights checking, we can guess two interpretations $X$ and $Y$ and check whether $W(P, X) / W(Q, X)  \neq W(P, Y) / W(Q, Y)$, 
which means deciding $P \equiv_{s, p} Q$ is in co-NP.
Since checking weights and checking semi-strong equivalence are independent, we have following results. 

\begin{cor}
\label{cor:complexity-se-pq}
For \lpmln programs $P$ and $Q$, deciding $P \equiv_{s, p} Q$ is co-NP-complete.
\end{cor}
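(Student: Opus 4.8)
The plan is to obtain Corollary~\ref{cor:complexity-se-pq} by combining the model-theoretic characterisation of Theorem~\ref{thm:lpmln-strong-equivalence-w} with the complexity bound of Theorem~\ref{thm:complexity-se}. Recall that $P \equiv_{s,p} Q$ holds iff $P \equiv_{s,s} Q$ \emph{and} there are constants $c,k$ with $W(P,(X,Y)) = exp(c + k*\alpha) * W(Q,(X,Y))$ for every $(X,Y) \in \lse{P}$; so the decision problem is a conjunction of a semi-strong equivalence test and a ``matching-weights'' test, and I would treat the two parts separately and then invoke closure of co-NP under conjunction.

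For co-NP-hardness, I would reuse the reduction $F \mapsto (\psi_1(F),\psi_2(F))$ built in the proof of Theorem~\ref{thm:complexity-se}. We already know that $F$ is a tautology iff $\psi_1(F) \equiv_{s,s} \psi_2(F)$, and since $P \equiv_{s,p} Q$ always entails $P \equiv_{s,s} Q$, the only new work is to check that when $F$ is a tautology the weight condition of Theorem~\ref{thm:lpmln-strong-equivalence-w} is met as well. Here I would use that $\psi_1(F)$ and $\psi_2(F)$ consist only of hard rules and share the sub-program $\psi(F)$, and that for every $1 \leq i \leq n$ the rules $\psi_1(C_i)$ and $\psi_2(C_i)$ have the same positive body and are both violated by an interpretation $Y$ precisely when that body is contained in $Y$ and $a,b \notin Y$. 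Hence $|\psi_1(F)_Y| = |\psi_2(F)_Y|$ for every $Y$, so $W(\psi_1(F),(X,Y)) = W(\psi_2(F),(X,Y))$ and the condition holds with $c = k = 0$. Thus $F$ is a tautology iff $\psi_1(F) \equiv_{s,p} \psi_2(F)$, and the reduction is clearly polynomial.

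For co-NP-membership I would show each test lies in co-NP. The semi-strong test is co-NP by Theorem~\ref{thm:complexity-se}. For the weight test, note first (Definition~\ref{def:lse-weight-degree}) that $W(P,(X,Y)) = W(P_Y)$ depends only on $Y$, and (Lemma~\ref{lem:XX-lse-model-LM}) that $(Y,Y) \in \lse{P}$ for every interpretation $Y$; so the weight condition is equivalent to the existence of $c,k$ with $W(P_Y) = exp(c + k*\alpha) * W(Q_Y)$ for \emph{all} interpretations $Y$. Since a hard rule contributes the symbol $\alpha$ to a weight degree and a soft rule a real number, comparing the two symbolic weight expressions at one fixed interpretation, say $Y = \emptyset$, pins $c$ and $k$ down uniquely and computes them in polynomial time; the remaining condition ``$W(P_Y) = exp(c + k*\alpha) * W(Q_Y)$ for all $Y$'' is then in co-NP, because a putative counterexample $Y$ can be guessed and the reducts $P_Y$, $Q_Y$, hence the degrees $W(P_Y)$, $W(Q_Y)$, computed rule by rule in polynomial time. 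Combining the two co-NP tests gives membership, and together with hardness, co-NP-completeness.

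I expect the genuinely non-routine point to be on the hardness side: verifying that semi-strong equivalence of $\psi_1(F)$ and $\psi_2(F)$ really does lift to p-strong equivalence, i.e.\ that the weight condition of Theorem~\ref{thm:lpmln-strong-equivalence-w} adds no constraint beyond $\psi_1(F) \equiv_{s,s} \psi_2(F)$. The observation that $\psi_1(C_i)$ and $\psi_2(C_i)$ are violated by exactly the same interpretations is what supplies this; on the membership side the parallel subtlety --- that the constants $c$, $k$ need not be existentially quantified but are forced by a single reference interpretation --- again rests on $\alpha$ behaving as a formal indeterminate in the weight degrees, so that no search over real-valued constants is required.
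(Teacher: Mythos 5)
Your proposal is correct and follows essentially the same route as the paper: co-NP-membership by combining the semi-strong test of Theorem~\ref{thm:complexity-se} with a polynomial-time weight computation and a guess-a-counterexample check of the ratio condition (the paper guesses two interpretations $X,Y$ and compares $W(P,X)/W(Q,X)$ with $W(P,Y)/W(Q,Y)$, while you fix $Y=\emptyset$ to pin down $c,k$ first --- a cosmetic difference), and hardness by reusing the reduction $F \mapsto (\psi_1(F),\psi_2(F))$. Your explicit verification that $\psi_1(C_i)$ and $\psi_2(C_i)$ are violated by exactly the same interpretations, so $W(\psi_1(F),Y)=W(\psi_2(F),Y)$ for all $Y$ and the PSE-condition holds with $c=k=0$, correctly supplies a detail the paper leaves implicit in asserting the hardness half.
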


For the p-strong equivalence under the SSM semantics, the weight checking is similar to above discussion, therefore, it is in co-NP. 
For the semi-strong equivalence under the SSM semantics, 
the proof of co-NP-hardnees of Theorem \ref{thm:complexity-se} is still available, therefore, deciding semi-strong equivalence under the SSM semantics is co-NP-hard. 
To show the co-NP-membership, we slightly modify the proof of co-NP-membership of Theorem \ref{thm:complexity-se}. 
The only difference between the two kinds of \lpmln semantics is that hard rules cannot be violated under the SSM semantics. 
Therefore, for an \lpmln program $P$, we define $\Delta'(P)$ as 
\begin{equation}
\Delta'(P) = \bigwedge_{r \in \overline{P^s}} ( \delta_2(r) \rightarrow \delta_1(r) ) \wedge  \bigwedge_{r \in \overline{P^h}} ( \delta_2(r) \wedge \delta_1(r) )
\end{equation}
which is a combination of our translation of handling weighed rules and Lin's translation of handling ASP rules \cite{Lin2002Reducing}. 
Similarly, there is a one-to-one mapping between the soft SE-models of $P$ and the models of $\Gamma(U) \wedge \Delta'(P)$, 
which shows the co-NP-membership of deciding semi-strong equivalence under the SSM semantics. 

\begin{cor}
	\label{cor:complexity-pse-ssm}
	For \lpmln programs $P$ and $Q$, both of deciding $P \equiv_{s, s} Q$ and $P \equiv_{s, p} Q$ under the SSM semantics  are co-NP-complete.
\end{cor}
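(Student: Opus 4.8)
The plan is to transfer both parts of Theorem~\ref{thm:complexity-se} (and Corollary~\ref{cor:complexity-se-pq}) to the soft stable model semantics, reusing the same reduction with a minor adjustment for hard rules. For \emph{hardness} of both $\equiv_{s,ss}$ and $\equiv_{s,sp}$ I would reuse verbatim the map $F \mapsto (\psi_1(F), \psi_2(F))$ from the proof of Theorem~\ref{thm:complexity-se}. Two observations make it go through. First, $\psi_1(F)$ and $\psi_2(F)$ consist only of hard rules, so their soft parts are empty and $W_s(\psi_i(F),(X,Y)) = e^0 = 1$ for every soft SE-model; hence $\psi_1(F)\equiv_{s,sp}\psi_2(F)$ iff $\psi_1(F)\equiv_{s,ss}\psi_2(F)$, and it suffices to treat $\equiv_{s,ss}$. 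Second, by Definition~\ref{def:soft-se-model} a soft SE-model $(X,Y)$ additionally requires $Y\models \psi_i(F)^h = \psi_i(F)$, so one must check that the SE-interpretation $(X, X\cup\{b\})$ used in the original proof to separate $\psi_1(F)$ from $\psi_2(F)$ is in fact a soft SE-model, i.e.\ that $Y = X\cup\{b\}$ satisfies every (hard) rule of $\psi_1(F)$ and of $\psi_2(F)$; this is immediate from the rule shapes (for total $X$ every body in $\psi(F)$ is blocked, and $b\in Y$, resp.\ the treatment of $a$, blocks the bodies of the $\psi_1(C_i)$, resp.\ $\psi_2(C_i)$, rules). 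With that check, $\sse{\psi_1(F)}\neq\sse{\psi_2(F)}$ exactly when $F$ is not a tautology, so by Lemma~\ref{lem:semi-pse-ssm} deciding $\equiv_{s,ss}$ (hence $\equiv_{s,sp}$) under the SSM semantics is co-NP-hard.

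For \emph{membership of $\equiv_{s,ss}$}, I would adapt the translation $\Delta(\cdot)$ of Theorem~\ref{thm:complexity-se} by encoding hard rules with $\delta_2(r)\wedge\delta_1(r)$ instead of the implication $\deltar{r}$, i.e.\ use
\[
\Delta'(P) \;=\; \bigwedge_{r\in\overline{P^s}} \left(\deltar{r}\right) \;\wedge\; \bigwedge_{r\in\overline{P^h}} \left(\delta_2(r)\wedge\delta_1(r)\right),
\]
which forces $Y\models r$ (via $\delta_2(r)$) and $X\models r^Y$ (via $\delta_1(r)$) for every hard $r$, while leaving soft rules handled as before. The key lemma to establish is that, over a universe $U\supseteq\olit{P\cup Q}$, the map $\phi((X,Y)) = \hat{X}\cup\{\hat{y}^*\mid y\in Y\}$ is a bijection between $\sse{P}$ and the models of $\Gamma(U)\wedge\Delta'(P)$; the argument is precisely the if/only-if reasoning in the membership part of Theorem~\ref{thm:complexity-se}, with the new clause $\delta_1(r)$ for hard $r$ recording $Y\models r$ exactly as $\delta_1$ recorded $X\models r'$ for the GL-reduct. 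Then $P\equiv_{s,ss}Q$ iff $(\Gamma(U)\wedge\Delta'(P))\leftrightarrow(\Gamma(U)\wedge\Delta'(Q))$ is a tautology, which is in co-NP.

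For \emph{membership of $\equiv_{s,sp}$}, I would combine Lemma~\ref{lem:semi-pse-ssm} and Theorem~\ref{thm:pse-soft-sm}: $P\not\equiv_{s,sp}Q$ iff $\sse{P}\neq\sse{Q}$, or there is no real $c$ with $W_s(P,(X,Y)) = c\cdot W_s(Q,(X,Y))$ for all $(X,Y)\in\sse{P}$. The first disjunct is in NP by the previous paragraph. For the second, $W_s(P,(X,Y)) = W(P^s_Y)$ is computable in polynomial time (scan the rules and test $Y\models r$), and by Lemma~\ref{lem:XX-lse-model-LM} every total SE-interpretation $(X,X)$ with $X\models P^h$ is a soft SE-model of $P$; so the non-existence of a constant ratio is witnessed by guessing two interpretations $X,Y$ with $X\models P^h$ and $Y\models P^h$ and checking $W_s(P,X)\cdot W_s(Q,Y)\neq W_s(P,Y)\cdot W_s(Q,X)$, a polynomial-time test. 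Hence the complement of $\equiv_{s,sp}$ is in NP, so deciding $\equiv_{s,sp}$ is in co-NP; with the hardness above, both $\equiv_{s,ss}$ and $\equiv_{s,sp}$ under the SSM semantics are co-NP-complete.

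The main obstacle is the membership argument: proving that $\phi$ is a bijection onto the models of $\Gamma(U)\wedge\Delta'(P)$, i.e.\ that $\delta_2(r)\wedge\delta_1(r)$ faithfully captures ``$Y\models r$ and $X\models r^Y$'' for hard rules while $\deltar{r}$ still captures ``$Y\models r \Rightarrow X\models r^Y$'' for soft rules, and that the two encodings do not interfere through shared atoms. A secondary care point is the hardness transfer: one must confirm that the separating pair $(X, X\cup\{b\})$ is genuinely a soft SE-model, since under the SSM semantics a discrepancy located only at an SE-interpretation whose $Y$-component violates a hard rule would not witness inequivalence.
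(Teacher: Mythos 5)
Your proposal is correct and takes essentially the same route as the paper: it reuses the hardness reduction of Theorem~\ref{thm:complexity-se} (with the additional, correctly verified, check that the separating pair $(X, X\cup\{b\})$ is indeed a \emph{soft} SE-model), uses the same modified translation $\Delta'(P)$ encoding hard rules by $\delta_2(r)\wedge\delta_1(r)$ for co-NP-membership of $\equiv_{s,ss}$, and handles the weight condition for $\equiv_{s,sp}$ by guessing two total interpretations exactly as in the discussion preceding Corollary~\ref{cor:complexity-se-pq}. The only difference is that you spell out details the paper leaves implicit, which does not change the argument.
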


For the uniform equivalence checking, it is obviously harder than strong equivalence checking, 
which can be seen from the corresponding results in ASP \cite{Eiter2007Semantical}. 
For ASP programs, it has known that deciding uniform equivalence is $\Pi_2^p$-complete. 
Here, we show the upper bound of deciding semi-uniform equivalence. 
\begin{lem}
\label{lem:complexity-ue-checking}
For an \lpmln program $P$ and an SE-interpretation $(X,Y)$, deciding whether $(X, Y) \in \lue{P}$ is co-NP-complete. 
\end{lem}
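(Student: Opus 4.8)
The plan is to prove membership in co-NP and co-NP-hardness separately.

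For membership, observe that $(X,Y)\notin\lue{P}$ holds precisely when either $(X,Y)\notin\lse{P}$, or $X\subsetneq Y$ and there is an interpretation $X'$ with $X\subsetneq X'\subsetneq Y$ and $(X',Y)\in\lse{P}$. The first alternative is decidable in polynomial time directly from Definition \ref{def:lse-model}: form the \lpmln reduct $P_Y$ by testing $Y\models r$ rule by rule, take $\overline{P_Y}$ and its GL-reduct $\lglred{P}{Y}$, and check $X\models\lglred{P}{Y}$. The second alternative is an NP test: guess $X'\subseteq Y$ and verify $X\subsetneq X'\subsetneq Y$ and $X'\models\lglred{P}{Y}$ in polynomial time, using Lemma \ref{lem:XY-not-lse-model}. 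Hence the complement of the problem is in NP, so deciding $(X,Y)\in\lue{P}$ is in co-NP; in the remaining case $X=Y$, Lemma \ref{lem:XX-lse-model-LM} already gives the answer ``yes''.

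For hardness, I would reduce from the well-known co-NP-complete problem of deciding, given a negation-free disjunctive program $\Pi$ (equivalently, a CNF) together with an interpretation $Y_0$ such that $Y_0\models\Pi$, whether $Y_0$ is a $\subseteq$-minimal model of $\Pi$. One may assume $\emptyset\not\models\Pi$ (otherwise $Y_0$ is minimal iff $Y_0=\emptyset$), so $\Pi$ contains a rule with empty positive body and $Y_0\neq\emptyset$. Choose a fresh atom $s\notin at(\Pi)$ and let $P$ be the \lpmln program consisting of the hard rule $\alpha : h(r)\leftarrow b^+(r),s$ for every $r\in\Pi$ together with the hard rule $\alpha : s\leftarrow a$ for every $a\in Y_0$; set $Y:=Y_0\cup\{s\}$ and $X:=\emptyset$. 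This transformation is clearly polynomial. Every rule of $\overline{P}$ has a nonempty positive body, so $\emptyset\models\overline{P}$, and since $\overline{P}$ contains no default negation we have $P_Y=P$ and $\lglred{P}{Y}=\overline{P}$; hence $(\emptyset,Y)\in\lse{P}$.

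The crux is then the correspondence between the SE-models lying strictly between $\emptyset$ and $Y$ and the proper submodels of $\Pi$. The rules $\alpha : s\leftarrow a$ force every interpretation $X'$ with $X'\cap Y_0\neq\emptyset$ to contain $s$, and since $\emptyset\not\models\Pi$ the singleton $\{s\}$ violates some rule of $\overline{P}$; consequently the $X'$ with $\emptyset\subsetneq X'\subsetneq Y$ and $X'\models\overline{P}$ are exactly the sets $M\cup\{s\}$ with $M\subsetneq Y_0$ and $M\models\Pi$. By Definition \ref{def:lue-model}, $(\emptyset,Y)\in\lue{P}$ iff no such $X'$ exists, i.e.\ iff $\Pi$ has no model properly contained in $Y_0$, i.e.\ iff $Y_0$ is a minimal model of $\Pi$. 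Together with the membership part this establishes co-NP-completeness. The step I expect to be most delicate is exactly this correspondence: one must ensure that \emph{every} interpretation strictly between $X$ and $Y$ --- not merely the intended ones --- matches a submodel of $\Pi$, which is why the auxiliary atom $s$ and the rules $s\leftarrow a$ are introduced, and why we restrict to instances with $\emptyset\not\models\Pi$ in order to eliminate the degenerate candidate $\{s\}$; this is the same sort of bookkeeping that appears in the classical co-NP-hardness proofs for answer-set checking of disjunctive programs.
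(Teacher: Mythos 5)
Your argument is correct, but both halves take a genuinely different route from the paper's proof. For membership, the paper follows \cite{Eiter2007Semantical} and reduces UE-model checking to a propositional entailment test of the form $\lglred{P}{Y} \cup X \cup Y_\subset \models X_=$, whereas you give a direct nondeterministic procedure for the complement: decide $(X,Y)\in\lse{P}$ in polynomial time and, when $X\subsetneq Y$, guess an intermediate $X'$ with $X\subsetneq X'\subsetneq Y$ and $X'\models\lglred{P}{Y}$; your version is more elementary and self-contained, while the paper's buys alignment with the existing ASP machinery for uniform equivalence. For hardness, the paper reuses the construction $\psi_2(F)$ from the hardness part of Theorem \ref{thm:complexity-se} and argues via tautology checking of a CNF (stated rather tersely there, since the quantification ``for any total interpretation $X$'' is not spelled out as a single checking instance), whereas you reduce from minimal-model checking for negation-free disjunctive programs: the fresh atom $s$ and the rules $s\leftarrow a$ ($a\in Y_0$) make the interpretations strictly between $\emptyset$ and $Y_0\cup\{s\}$ that satisfy $\lglred{P}{Y}$ correspond exactly to proper submodels of $\Pi$, with the degenerate candidate $\{s\}$ eliminated by the assumption $\emptyset\not\models\Pi$; I checked this correspondence and the handling of the case $\emptyset\models\Pi$, and both are sound. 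Your reduction is fully explicit and targets the ``no intermediate SE-model'' condition of Definition \ref{def:lue-model} directly, at the price of importing the (standard) co-NP-completeness of minimal-model checking as an external fact, whereas the paper keeps all its hardness arguments within the single $\psi_1/\psi_2$ construction used throughout its complexity section.
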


\begin{proof}
As we know, checking whether $(X, Y) \in \lse{P}$ is in polynomial time.  
For the UE-model checking, if $X \subset Y$, we need to check there does not exist an interpretation $X'$ such that $X \subset X' \subset Y$ and $X' \models \lglred{P}{Y}$. 
	
\textbf{Hardness.}
Recall the proof of co-NP-hardness of SE-model checking of Theorem \ref{thm:complexity-se}, 
it is easy to check that a CNF $F$ is a tautology iff for any total interpretation $X$, $(X, X \cup \{b\})$ is an SE-model of $\psi_2(F)$. 
By the definition of UE-models, we have $(X, X \cup \{b\})$ is a UE-model of $\psi_2(F)$, 
therefore, the problem of checking tautology can be reduced to the problem of checking UE-model in polynomial time. 
Above results prove the co-NP-hardness of UE-model checking in \lpmlnend.

\textbf{Membership.}
As shown in \cite{Eiter2007Semantical}, 
the UE-model checking can be reduced to the problem of checking propositional entailment of the formula $\lglred{P}{Y} \cup X \cup Y_\subset \models X_=$, where 
\begin{equation}
	Y_\subset = \{\leftarrow y. ~|~ y \in U - Y\} \cup \{ \leftarrow y_1, \ldots, y_{n}. ~|~ y_i \in Y \text{ and } 1 \leq  i \leq |Y| \}
\end{equation}
\begin{equation}
	X_= = X \cup \{ \leftarrow x. ~|~  x \in U - X \}
\end{equation}
and $U$ is the universe of literals, i.e. $\olit{P} \subseteq U$.
Since the problem of checking propositional entailment is co-NP-complete, 
above results show the co-NP-membership of UE-model checking. 
\end{proof}
For the semi-uniform equivalence checking, we consider a complementary problem. 
To show that $P$ and $Q$ are not semi-uniformly equivalent, we can guess an SE-model $(X, Y)$ such that $(X, Y)$ is a UE-model of exactly one of the programs $P$ and $Q$. 
By Lemma \ref{lem:complexity-ue-checking}, the guess for $(X, Y)$ can be verified in polynomial time with the help of an NP oracle, 
which shows the $\Pi_2^p$-membership of deciding semi-uniform equivalence. 

\begin{thm}
\label{thm:complexity-ue}
For \lpmln programs $P$ and $Q$, deciding $P \equiv_{u,s} Q$ is  in $\Pi^p_2$.
\end{thm}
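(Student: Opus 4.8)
The plan is to work with the complementary problem and show that deciding $P \not\equiv_{u,s} Q$ lies in $\Sigma_2^p = \mathrm{NP}^{\mathrm{NP}}$, which immediately yields $\Pi_2^p$-membership for $P \equiv_{u,s} Q$. By Lemma~\ref{lem:lpmln-sm-uniform-equiv-lue}, $P \equiv_{u,s} Q$ holds iff $\lue{P} = \lue{Q}$, so $P \not\equiv_{u,s} Q$ holds iff there exists an SE-interpretation $(X,Y)$ that is a UE-model of exactly one of $P$ and $Q$. First I would argue that such a discriminating witness, if it exists, can be taken of size polynomial in $|P| + |Q|$: by Proposition~\ref{prop:se-model-universe} every SE-model of an \lpmln program projects onto the literals occurring in it without leaving the SE-model set, and since the additional minimality clause in Definition~\ref{def:lue-model} only quantifies over interpretations $X'$ with $X \subset X' \subset Y$, confining $Y$ to $\olit{P \cup Q}$ automatically keeps all these $X'$ inside the same bounded universe $2^{\olit{P \cup Q}}$. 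Hence it suffices to search for the witness among SE-interpretations over $\olit{P \cup Q}$.

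The $\Sigma_2^p$ algorithm then nondeterministically guesses such a pair $(X,Y)$ and verifies that $(X,Y)$ belongs to exactly one of $\lue{P}$ and $\lue{Q}$. By Lemma~\ref{lem:complexity-ue-checking}, each of the two membership tests ``$(X,Y) \in \lue{P}$'' and ``$(X,Y) \in \lue{Q}$'' is in co-NP; therefore the compound condition ``$(X,Y) \in \lue{P}$ and $(X,Y) \notin \lue{Q}$, or $(X,Y) \in \lue{Q}$ and $(X,Y) \notin \lue{P}$'' is a fixed Boolean combination of co-NP and NP predicates and can be decided in polynomial time with access to an NP oracle. This places the complement of $P \equiv_{u,s} Q$ in $\Sigma_2^p$, and hence $P \equiv_{u,s} Q$ in $\Pi_2^p$.

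The step I expect to be the main obstacle is the first one, namely pinning down the polynomial bound on the witness size. For the SE-model part Proposition~\ref{prop:se-model-universe} does the work directly, but the UE-model definition adds a universally quantified minimality condition, so I would have to check carefully that restricting attention to SE-interpretations over $\olit{P \cup Q}$ neither destroys a genuine discriminating UE-model nor creates a spurious one — in particular handling literals that occur in only one of the two programs, and making sure that the minimality test evaluated within the bounded universe agrees with the test over all interpretations. Once this is settled, the remainder (the oracle bookkeeping for the exclusive-or of two co-NP predicates) is routine.
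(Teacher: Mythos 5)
Your proposal is correct and follows essentially the same route as the paper: pass to the complement, use Lemma~\ref{lem:lpmln-sm-uniform-equiv-lue} to reduce semi-uniform equivalence to coincidence of UE-models, guess a discriminating SE-interpretation $(X,Y)$, and verify membership in $\lue{P}$ versus $\lue{Q}$ with an NP oracle via the co-NP UE-model check of Lemma~\ref{lem:complexity-ue-checking}, giving $\Sigma_2^p$ for the complement and hence $\Pi_2^p$ overall. Your extra care about bounding the witness to the universe $\olit{P \cup Q}$ is a reasonable refinement of a point the paper leaves implicit, but it does not change the argument.
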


For the p-uniform equivalence checking, we have not found a sufficient and necessary condition for the characterization, 
therefore, it is not a good time to discuss its computational complexity.

\subsection{Discussion}
To prove the co-NP-membership of checking semi-strong equivalence, 
we present a translation from \lpmln programs to a propositional formula. 
In this subsection, we show a potential application of the translation. 

The notion of strong equivalences can be used to simplify logic programs, 
but the SE-model based strong equivalence checking is highly complex in computation. 
An available way to improve the checking is to find some 
syntactic conditions that decide the strong equivalence. 
In the field of ASP, there are some such kinds of syntactic conditions. 
For example, if the positive and negative bodies of an ASP rule $r$ have the same literals, i.e. $b^+(r) \cap b^-(r) \neq \emptyset$, the rule $r$ is strongly equivalent to the empty program $\emptyset$. 
Particularly, Lin and Chen \cite{Lin2005Discover} present a method to discover syntactic conditions that can be used to decide the strong equivalence of ASP. 
A main part of the method is a translation from ASP programs to a propositional formula, 
therefore, Lin and Chen's method can adapt to \lpmln by using the translation presented in this paper. 
Based on Lin and Chen's method and our translation, 
we believe there is a potential method to discover syntactic conditions that decide the semi-strong equivalence, 
which could be a next work of the paper.

\section{Relating to Other Logic Formalisms}
\label{sec:other-se}
Among the extensions of ASP, the strong equivalences for
ASP with weak constraints (\aspwcend) and ASP with ordered disjunction (LPOD) have been investigated \cite{Eiter2007ASPbpa,Faber2008SELPOD}. 
In this section, we investigate the relationships among the strong equivalences for \lpmlnend, \aspwcend, and LPOD.

\subsection{ASP with Weak Constraints}
A weak constraint is a kind of soft constraint of ASP that can be violated with a penalty, which is a part of the standard ASP language, 
and is introduced to represent the preferences among stable models \cite{Buccafurri2000Weakconstraint,Calimeri2012ASPcore}. 
An ASP program containing weak constraints is called an \aspwc program. 
Eiter et al.  have studied the strong equivalence between \aspwc programs \cite{Eiter2007ASPbpa}. 
In this section, we show how to characterize the strong equivalences for \aspwc by the sp-strong equivalence for \lpmlnend. 

Firstly, we review the semantics of \aspwcend. 
A weak constraint $r$ is a rule of the form 
\begin{equation}
\label{eq:weak-contraint}
:\sim l_{1}, ..., ~l_m, ~not~ l_{m+1}, ...,~not ~ l_n. ~ [penalty ~:~ level]
\end{equation}
where $l_i$s $(i \leq i \leq n)$ are literals,  $penalty$ is a real number denoting the cost of violating the constraint, 
and $level$ is a non-negative integer denoting the level of the penalization. 
In rest of the paper, we only consider the weak constraints of the form 
\begin{equation}
\label{eq:weak-constraint-simple}
:\sim l_{1}, ..., ~l_m, ~not~ l_{m+1}, ...,~not ~ l_n. ~ [penalty]
\end{equation} 
since the levels of penalization can be complied into penalties \cite{Eiter2007ASPbpa}.
For a weak constraint $r$ of the form \eqref{eq:weak-constraint-simple}, we use $pe(r)$ to denote the penalty associated with the rule. 
For an \aspwc program $P$, by $P^r$ and $P^c$, we denote the sets of plain ASP rules and weak constraints of $P$, respectively. 
For an interpretation $X$, 
by $WC(P, X)$, we denote the weak constraints of $P$ that are violated by $X$. 
An interpretation $X$ is a stable model of $P$ if $X$ is a stable model of the program $P^r$, 
and the penalty of $X$ w.r.t. $P$ is defined as 
\begin{equation}
Penalty(P, X) = \sum_{r \in WC(P, X)} pe(r)
\end{equation}
An optimal stable model of $P$ is a stable model of $P$ with the minimum penalty. 
Eiter et al. defined the strong equivalence for \aspwc programs as follows.

\begin{defi}
\label{def:se-aspwc}
Two \aspwc programs $P$ and $Q$ are strongly equivalent, if for any \aspwc program $R$, 
$P \cup R$ and $Q \cup R$ has the same stable models, 
and for any stable models $X$ and $Y$ of $P \cup R$,  $Penalty(P \cup R, X) - Penalty(P \cup R, Y) = Penalty(Q \cup R, X) - Penalty(Q \cup R, Y)$. 
\end{defi}

A characterization for the strong equivalence between \aspwc programs is shown as follows. 

\begin{lemC}[{\cite[Lemma 23]{Eiter2007ASPbpa}}]
\label{lem:se-aspwc-eiter}
Two \aspwc programs $P$ and $Q$ are strongly equivalent, iff $\ase{P^r} = \ase{Q^r}$, 
and for any interpretations $X$ and $Y$  satisfying $P^r \cup Q^r$,  $Penalty(P, X) - Penalty(P, Y) = Penalty(Q, X) - Penalty(Q, Y)$. 
\end{lemC}

Now, we show that the strong equivalence for \aspwc programs can be characterized by the sp-strong equivalence for \lpmln programs. 
For an \aspwc program $P$, its \lpmln translation is $\aspwctrans{P} = \aspwctranshard{P} \cup \aspwctranssoft{P}$, 
where $\aspwctranshard{P}$ and $\aspwctranssoft{P}$ are 
\begin{equation}
\aspwctranshard{P} = \{\alpha ~:~ r ~|~ r \in P - P^c\} 
\end{equation}
\begin{equation}
\aspwctranssoft{P} = \{w ~:~ \leftarrow body(r). ~|~  :\sim body(r). ~[w] ~ \in P^c\}
\end{equation}
For an interpretation $X$, it is easy to check that $X$ is a stable model of $P$ iff $X$ is a soft stable model of $\aspwctrans{P}$, and $X$ is an optimal stable model of $P$ iff $X$ is a most probable stable model of $\aspwctrans{P}$. 
More specifically, we have the following proposition.

\begin{prop}
\label{prop:aspwc-lpmln}
For an \aspwc program $P$ and its \lpmln translation $\aspwctrans{P}$, we have $\assm{P} = \ssm{\aspwctrans{P}}$, and for each stable model $X$, we have 
\begin{equation}
Penalty(P,X) = ln\left(W(\aspwctranssoft{P})\right) - ln\left( W_s(\aspwctrans{P}, X) \right)
\end{equation}
\end{prop}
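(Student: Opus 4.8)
The plan is to prove both assertions by unfolding the relevant definitions, treating the stable-model correspondence and the penalty/weight identity separately. For $\assm{P} = \ssm{\aspwctrans{P}}$, fix an interpretation $X$ and observe that $\aspwctrans{\cdot}$ sends the plain rules $P - P^c$ to the hard part of $\aspwctrans{P}$ and each weak constraint of $P$ to a soft rule that is an ordinary constraint. By definition $X \in \ssm{\aspwctrans{P}}$ iff $X \models (\aspwctrans{P})^h$ and $X$ is a stable model of $\overline{(\aspwctrans{P})_X}$; since $(\aspwctrans{P})^h = \{\alpha : r \mid r \in P - P^c\}$, the first condition says exactly that $X$ satisfies all plain rules of $P$. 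Under this condition the \lpmln reduct is $\overline{(\aspwctrans{P})_X} = (P - P^c) \cup C_X$, where $C_X = \{\leftarrow body(r) \mid {:\sim}\, body(r).[w] \in P^c \text{ and } X \not\models body(r)\}$ is precisely the set of translated constraints satisfied by $X$. I would then invoke the standard fact that adjoining to an ASP program a set of constraints all satisfied by $X$ changes neither whether $X$ is a model nor whether $X$ is $\subseteq$-minimal in the GL-reduct: after reduction these constraints have the form $\leftarrow b^+$ with $b^+ \not\subseteq X$, hence are satisfied by every subset of $X$. Thus $X$ is a stable model of $(P - P^c) \cup C_X$ iff it is a stable model of $P - P^c$, which by the \aspwc semantics is exactly the statement $X \in \assm{P}$.

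For the penalty identity I would compute the three quantities directly. The soft part of $\aspwctrans{P}$ is $\{w : \leftarrow body(r) \mid {:\sim}\, body(r).[w] \in P^c\}$, so $W\left((\aspwctrans{P})^s\right) = exp\left(\sum_{r \in P^c} pe(r)\right)$. A soft rule $w : \leftarrow body(r)$ lies in $(\aspwctrans{P})^s_X$ iff $X \not\models body(r)$, i.e.\ iff the corresponding weak constraint is not violated by $X$; hence $W_s(\aspwctrans{P}, X) = exp\left(\sum_{r \in P^c - WC(P,X)} pe(r)\right)$. Subtracting the two logarithms leaves $\sum_{r \in WC(P,X)} pe(r)$, which is $Penalty(P,X)$ by definition. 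By the first part every stable model of $P$ is a soft stable model of $\aspwctrans{P}$, so $W_s(\aspwctrans{P}, X)$ is defined for all $X$ in question.

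I do not anticipate any genuine obstacle: the whole argument is bookkeeping over the definitions of $\aspwctrans{\cdot}$, the \lpmln reduct, and the soft-stable-model semantics. The one step needing a little care is the constraint lemma used above --- that augmenting an ASP program with constraints whose bodies are false under $X$ preserves ``$X$ is a stable model'' --- but this is well known and here especially transparent, since the reduced constraints are already falsified by every $X' \subseteq X$. Everything else is routine verification.
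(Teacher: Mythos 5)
Your proof is correct, and it is essentially the argument the paper intends: the paper states this proposition without an explicit proof (treating it as immediate from the definitions of $\aspwctrans{\cdot}$, the \lpmln reduct, and the soft stable model semantics), and your verification fills in exactly those routine steps. The one point that genuinely needs care --- that the translated weak constraints entering the reduct $\overline{(\aspwctrans{P})_X}$ reduce to constraints $\leftarrow b^+$ with $b^+ \not\subseteq X$, hence affect neither satisfaction by $X$ nor minimality over subsets of $X$ --- is handled correctly, and the weight/penalty bookkeeping matches the definitions of $W$, $W_s$, and $Penalty$.
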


Obviously, the strong equivalence between \aspwc programs can be characterized by the sp-strong equivalence for \lpmlnend, which is shown in Theorem \ref{thm:aspwc-se}.

\begin{thm}
\label{thm:aspwc-se}
Two \aspwc programs $P$ and $Q$ are strongly equivalent, iff $\aspwctrans{P}$ and $\aspwctrans{Q}$ are p-strongly equivalent under the SSM semantics
\end{thm}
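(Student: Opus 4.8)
The plan is to prove the biconditional by lining up the two available characterizations: Lemma~\ref{lem:se-aspwc-eiter} for the strong equivalence of \aspwc programs and Theorem~\ref{thm:pse-soft-sm} for the sp-strong equivalence of \lpmln programs. Both will be reduced to the same pair of conditions: (i)~$\ase{P^r}=\ase{Q^r}$, and (ii)~the difference $Penalty(P,\cdot)-Penalty(Q,\cdot)$ is constant on the set of interpretations that satisfy $P^r\cup Q^r$. The argument then runs as a single chain of equivalences, handling both directions at once.

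The key structural fact I would establish first is $\sse{\aspwctrans{P}}=\ase{P^r}$ (and likewise $\sse{\aspwctrans{Q}}=\ase{Q^r}$). By construction $\left(\aspwctrans{P}\right)^h=\{\alpha:r \mid r\in P^r\}$, so for an SE-interpretation $(X,Y)$ we have $Y\models\left(\aspwctrans{P}\right)^h$ precisely when $Y\models P^r$; for such $Y$ it remains to compare $\lglred{\aspwctrans{P}}{Y}$ with $(P^r)^Y$. The \lpmln reduct $\left(\aspwctrans{P}\right)_Y$ retains every plain rule of $P^r$ together with exactly those translated weak constraints $\leftarrow body(r)$ that $Y$ already satisfies; after forming the Gelfond--Lifschitz reduct, such a translated weak constraint survives only if its negative body is disjoint from $Y$, in which case it has the form $\leftarrow b^+(r)$ with $b^+(r)\not\subseteq Y$. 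Since $X\subseteq Y$, any such surviving constraint is vacuously satisfied by $X$, so $X\models\lglred{\aspwctrans{P}}{Y}$ iff $X\models(P^r)^Y$. Tracking how these empty-headed soft rules behave under both the \lpmln reduct and the Gelfond--Lifschitz reduct is the step I expect to require the most care.

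Next I would convert the weight condition of Theorem~\ref{thm:pse-soft-sm} into condition~(ii). Using $W_s(\aspwctrans{P},(X,Y))=W(\left(\aspwctrans{P}\right)^s_Y)$ and Proposition~\ref{prop:aspwc-lpmln}, one obtains $W_s(\aspwctrans{P},(X,Y))=W(\left(\aspwctrans{P}\right)^s)\cdot e^{-Penalty(P,Y)}$, and symmetrically for $Q$. Taking logarithms, the existence of a real $c$ with $W_s(\aspwctrans{P},(X,Y))=c\cdot W_s(\aspwctrans{Q},(X,Y))$ for all soft SE-models is equivalent to the existence of a real constant $d$ with $Penalty(P,Y)-Penalty(Q,Y)=d$ for every $Y$ occurring as the second component of a soft SE-model of $\aspwctrans{P}$. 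By the structural fact together with $\ase{P^r}=\ase{Q^r}$, those $Y$ are exactly the interpretations satisfying $P^r\cup Q^r$, and constancy of $Penalty(P,\cdot)-Penalty(Q,\cdot)$ on that set is plainly equivalent to condition~(ii), namely $Penalty(P,X)-Penalty(P,Y)=Penalty(Q,X)-Penalty(Q,Y)$ for all $X,Y\models P^r\cup Q^r$. The constant obtained here involves no occurrence of $\alpha$, since only soft rules are involved, which matches the real-number constant in Theorem~\ref{thm:pse-soft-sm}.

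Finally I would assemble the chain. By Theorem~\ref{thm:pse-soft-sm}, $\aspwctrans{P}\equiv_{s,sp}\aspwctrans{Q}$ iff $\sse{\aspwctrans{P}}=\sse{\aspwctrans{Q}}$ and the weight condition holds; by the structural fact this is iff $\ase{P^r}=\ase{Q^r}$ and~(ii) hold; and by Lemma~\ref{lem:se-aspwc-eiter} this is exactly the strong equivalence of $P$ and $Q$ as \aspwc programs. Degenerate situations, such as $P^r$ having no classical model (so that $\ase{P^r}=\emptyset$), make both characterizations vacuously true and need no separate treatment, so the chain closes the proof in both directions.
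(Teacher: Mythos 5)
Your proposal is correct and follows essentially the same route as the paper, which proves the theorem by showing that the conditions of Lemma~\ref{lem:se-aspwc-eiter} coincide with the characterization of sp-strong equivalence in Theorem~\ref{thm:pse-soft-sm} (using Proposition~\ref{prop:aspwc-lpmln} for the penalty--weight correspondence). You merely spell out the details the paper declares ``straightforward by corresponding definitions,'' in particular the identity $\sse{\aspwctrans{P}}=\ase{P^r}$ and the conversion of the constant-ratio weight condition into the constant penalty difference.
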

Theorem \ref{thm:aspwc-se} can be proven simply by showing the conditions in Theorem \ref{thm:aspwc-se} and Lemma \ref{lem:se-aspwc-eiter} are equivalent, 
which is straightforward by corresponding definitions. 
Actually, the if part of the proof can be derived by properties of the \lpmln translation for \aspwc program directly. 
For two \aspwc programs $P$ and $Q$, since $\aspwctrans{P}$ and $\aspwctrans{Q}$ are sp-strongly equivalent, 
we have for any \lpmln program $N$, $\aspwctrans{P} \cup N$ and $\aspwctrans{Q} \cup N$ have the same soft stable models, 
and there exists a constant $c$ such that for any soft stable model $X \in \ssm{\aspwctrans{P} \cup N}$, $W_s(\aspwctrans{P} \cup N, X) = c * W_s(\aspwctrans{Q} \cup N, X)$. 
By Proposition \ref{prop:aspwc-lpmln}, we have for any \aspwc program $R$, 
$P \cup R$ and $Q \cup R$ have the same stable models, 
and for a stable model $X \in \assm{P \cup R}$, 
we have $Penalty(P \cup R, X) = ln \left(W(\aspwctranssoft{P \cup R})\right) - ln\left(W_s(\aspwctrans{P \cup R}, X)\right) = ln \left(W(\aspwctranssoft{Q \cup R})\right) - ln\left(W_s(\aspwctrans{Q \cup R}, X)\right) - ln(c) = Penalty(Q \cup R, X) - ln(c)$, 
which means for any stable models $X$ and $Y$ of $P \cup R$, we have $Penalty(P \cup R, X) - Penalty(Q \cup R, X) = Penalty(P \cup R, Y) - Penalty(Q \cup R, Y) $. 
Therefore, $P$ and $Q$ are strongly equivalent under the ASP semantics. 

\begin{exa}
\label{ex:aspwc-se}
Consider following \aspwc programs $P$
\begin{eqnarray}
& a \vee b. \\
& :\sim a. ~[1]
\end{eqnarray}
and $Q$
\begin{eqnarray}
& a \vee b. \\
& :\sim not ~a. ~[-1]
\end{eqnarray}
By the  definition of \lpmln translation $\tau^c$ for \aspwcend, $\aspwctrans{P}$ is 
\begin{eqnarray}
\alpha &:& a \vee b. \\
1 &:& \leftarrow a. 
\end{eqnarray}
and $\aspwctrans{Q}$ is 
\begin{eqnarray}
\alpha &:& a \vee b. \\
-1 &:& \leftarrow not ~a. 
\end{eqnarray}
It is easy to check that programs $P$, $Q$, $\aspwctrans{P}$, and $\aspwctrans{Q}$ have the same (soft) SE-models, which is shown in Table \ref{tab:aspwc-se}. 
From the table, it can be observed that $P$ and $Q$ are strongly equivalent, and $\aspwctrans{P}$ and $\aspwctrans{Q}$ are sp-strongly equivalent, 
which shows the relationship between the strong equivalences of \aspwc and \lpmlnend. 
\begin{table}
\centering
\caption{Computing Results in Example \ref{ex:aspwc-se}}
\label{tab:aspwc-se}
\def\arraystretch{1.3} 
\begin{tabular}{ccccc}
\hline
SE-model $(X, Y)$ & $(\{a\}, \{a\})$ & $(\{b\}, \{b\})$ & $(\{a\}, \{a,b\})$ & $(\{b\}, \{a,b\})$ \\ 
\hline
$Penalty(P, Y)$ & $1$ & $0$ & $1$ & $1$ \\ 
$Penalty(Q, Y)$ & $0$ & $-1$ & $0$ & $0$ \\ 
$W_s(\aspwctrans{P}, Y)$ & $e^0$ & $e^1$ & $e^0$ & $e^0$ \\ 
$W_s(\aspwctrans{Q}, Y)$ & $e^{-1}$ & $e^0$ & $e^{-1}$ & $e^{-1}$ \\ 
\hline
\end{tabular}
\end{table}
\end{exa}

\subsection{LPOD}
LPOD is another way to represent preferences over stable models by assigning priority to literals in the head of a rule \cite{Brewka2002LPOD,Brewka2006Preference}. 
An LPOD program $P$ consists of two parts: the regular part $P^r$ and the ordered disjunction part $P^o$, 
where $P^r$ is an ASP program consisting of rules of the form \eqref{eq:asp-rule-form}, and $P^o$ is a finite set of LPOD rules $r$ of the form \eqref{eq:ordered-disjunction-rule},
\begin{equation}
\label{eq:ordered-disjunction-rule}
h_1 \times ... \times h_n \leftarrow b^+(r), ~not~ b^-(r).
\end{equation}
where $h_i$s ($1 < i \leq n$) are literals that differ from each other.
By $o(r)$ we denote the number of literals occurred in the head of an LPOD rule $r$, i.e. $o(r) = |h(r)|$.
An LPOD rule $r$ of the form \eqref{eq:ordered-disjunction-rule} means if the body of $r$ is true, 
for any positive integers $i < j$, we prefer to believe $h_i$ rather than $h_j$, 
and if we believe $h_i$, it is not necessary to believe $h_j$. 

For an LPOD rule $r$, its \textit{i-th option} ($1 \leq i \leq o(r)$), denoted by $r^i$, is defined as 
	\begin{equation}
	\label{eq:i-th-option}
	h_i \leftarrow b^+(r), ~not~ b^-(r), ~not ~ h_1, ..., ~not ~h_{i-1}.
	\end{equation}
A \textit{split program} of an LPOD program $P$ is obtained from $P$ by replacing each rule in $P^o$  with one of its options. 
An interpretation $X$  is a \textit{candidate stable model} of $P$ if it is a stable model of a split program of $P$. 
By $\csm{P}$, we denote the set of all candidate stable models of $P$. 
The \textit{satisfaction degree} $deg(r,X)$ of an interpretation $X$ w.r.t an LPOD rule $r$ is defined as 
\begin{equation}
\label{eq:lpod-satisfaction-degree}
deg(r,X) = \begin{cases}
1, & \text{if } X \not\models b(r); \\ 
min\{k ~|~ h_k \in h(r) \cap X\}, & \text{if } X \models b(r) \text{ and } X \models h(r)\\
o(r), & otherwise.
\end{cases}
\end{equation}
And the \textit{satisfaction degree} $deg(P,X)$ of an interpretation $X$ w.r.t. an LPOD program $P$ is defined as the sum of satisfaction degrees of $X$ w.r.t. LPOD rules in $P^o$, i.e. $deg(P,X) = \sum_{r \in P^o} deg(r,X)$. 
For a candidate stable model $X$ of $P$, by $X^i(P)$ we denote the set of LPOD rules in $P^o$ that are satisfied by $X$ at degree $i$. 
Based on the notion of satisfaction degree, for two candidate stable models $X$ and $Y$ of $P$, Brewka \cite{Brewka2006Preference} introduces four preference criteria:
\begin{enumerate}
	\item \textbf{Cardinality-Preferred:} $X$ is cardinality-preferred to $Y$, denoted by $X >^c Y$, if there is a positive integer $i$ such that $|X^i(P)| > |Y^i(P)|$, and $|X^j(P)| = |Y^j(P)|$ for all $j < i$;
	
	\item \textbf{Inclusion-Preferred:} $X$ is inclusion-preferred to $Y$, denoted by $X >^i Y$, if there is a positive integer $i$ such that $Y^i(P) \subset X^i(P)$, and $X^j(P) = Y^j(P)$ for all $j < i$;
	
	\item \textbf{Pareto-Preferred:} $X$ is pareto-preferred to $Y$, denoted by $X >^p Y$, if there is a rule $r \in P_o$ such that $deg(r,X) < deg(r,Y)$, and there does not exist a rule $r \in P_o$ such that $deg(r,Y) < deg(r,X)$.
	
	\item \textbf{Penalty-Sum-Preferred:} $X$ is penalty-sum-preferred to $Y$, denoted by $X >^{ps} Y$, if $deg(P,X) < deg(P,Y)$.
\end{enumerate}
For each $pr \in \{c,i,p,ps\}$, a candidate stable model $X$ of $P$ is a $pr$-preferred stable model if there is no candidate stable model $Y$ of $P$ such that $Y >^{pr} X$. 

To investigate the strong equivalence for LPOD, we use the definition of candidate stable models for LPOD presented by Faber et al. \cite{Faber2008SELPOD}.
For an LPOD program $P$ and an interpretation $X$, the minimum split program $SP(P, X)$ of $P$ w.r.t. $X$ is defined as 
\begin{equation}
SP(P, X) = P^r \cup \{r^{deg(r, X)} ~|~ r \in P^o\}
\end{equation}
Faber et al. show that an interpretation $X$ is a candidate stable model of $P$ iff $X$ is a stable model of $SP(P, X)$. 
Based on the result, Faber et al. investigate the non-preferential strong equivalence for LPOD programs by generalizing the notion of SE-models in ASP.

\begin{defiC}[{\cite[Definition 2]{Faber2008SELPOD}}]
Two LPOD programs $P$ and $Q$ are strongly equivalent, denoted by $P \equiv^\times_{s} Q$, if for any LPOD program $R$, $\csm{P \cup R} = \csm{Q \cup R}$.
\end{defiC}

\begin{defiC}[{\cite[Definition 3]{Faber2008SELPOD}}]
For an LPOD program $P$, an SE-interpretation $(X, Y)$ is an SE-model of $P$, if there exists a split program $P'$ such that $Y \models SP(P, Y)$ and $X \models SP(P, Y)^Y$.  
By $SE^\times(P)$, we denote the set of all SE-models of $P$.
\end{defiC}

\begin{lemC}[{\cite[Theorem 6]{Faber2008SELPOD}}]
\label{lem:lpod-se-faber}
Two LPOD programs $P$ and $Q$ are strongly equivalent iff $\cse{P} = \cse{Q}$.
\end{lemC}

Now, we show that the strong equivalence for LPOD can be characterized by the semi-strong equivalence under the SSM semantics. 
Following the method in the previous subsection, our approach is outlined as follows: 
(1) find a translation from LPOD to \lpmlnend; (2) show two LPOD programs are strongly equivalent iff their \lpmln translations are semi-strongly equivalent under the SSM semantics. 

For an LPOD program $P$, an \lpmln translation $\lpodtrans{P}$ of $P$ consists of three parts, i.e. $\lpodtrans{P} = \tau^\times_1(P^r) \cup \tau^\times_2(P^o) \cup \tau^\times_3(P^o)$, 
where 
\begin{itemize}
	\item $\tau^\times_1(P) = \{\alpha ~:~ r ~|~ r \in P^r\}$; 
	\item $\tau^\times_2(P) = \{1 ~:~ r^k ~|~ r \in P^o, ~1 \leq k \leq o(r) \}$, $r^k$ is the k-th option of an LPOD rule $r$; and 
	\item $\tau^\times_3(P) = \{ \alpha ~:~ \leftarrow body(r), ~not~ h_1, ..., ~not ~h_{o(r)}  ~|~  r \in P^o\}$.
\end{itemize}
Wang et al. show that there is a one-to-one map between the candidate stable models of an LPOD program $P$ and the stable models of $\lpodtrans{P}$ \cite{Wang2018Preference}, 
more specifically, we have the following proposition. 

\begin{prop}
\label{prop:lpmln-lpod}
For an LPOD program $P$ and its \lpmln translation $\lpodtrans{P}$, we have $\csm{P} = \ssm{\lpodtrans{P}}$, and for each candidate stable model $X \in \csm{P}$, we have 
\begin{equation}
deg(r, X) = ln(W(\lpodtranssoft{\{r\}})) - ln(W_s(\lpodtrans{\{r\}}, X)) + 1, \text{ for each } r \in P^o, 
\end{equation}
\begin{equation}
deg(P, X) = ln(W(\lpodtranssoft{P})) - ln(W_s(\lpodtrans{P}, X)) + |P^o|
\end{equation}
\end{prop}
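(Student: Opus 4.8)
The plan is to prove the three assertions in sequence. The identity $\csm{P} = \ssm{\lpodtrans{P}}$ I would take from the one-to-one correspondence of Wang et al.\ \cite{Wang2018Preference} (or re-derive it by comparing the ASP GL-reduct $\lglred{(\lpodtrans{P})}{X}$ with $SP(P,X)^X$: the two programs share $(P^r)^X$ and the rules $h_{deg(r,X)}\leftarrow b^+(r)$ coming from the unique surviving option of each $r\in P^o$, and they differ only in constraints and in rules whose bodies are satisfied by no subset of $X$, so they have the same minimal models below $X$; the hard rules of $\tau^\times_3$ are exactly the ones needed to discard those $X$ that satisfy some $b(r)$ without having any $h_i\in X$). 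The degree formulas then reduce to counting, for a fixed candidate stable model $X$, how many options of each rule $r\in P^o$ survive into the \lpmln reduct.

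The core step is the following bookkeeping. Fix $r\in P^o$ and put $d = deg(r,X)$. If $X\not\models b(r)$, then $d=1$ and every option $r^k$ is satisfied by $X$ vacuously, so all $o(r)$ weighted rules $1:r^k$ lie in the reduct. If $X\models b(r)$ and $X\models h(r)$, then $d=\min\{k\mid h_k\in X\}$; for $k<d$ the body of $r^k$ holds in $X$ but its head $h_k\notin X$, so $r^k$ is \emph{not} satisfied; for $k=d$ it is satisfied through its head; and for $k>d$ the literal $h_d\in X$ falsifies the default-negated part of the body of $r^k$, so $r^k$ is satisfied vacuously --- hence exactly the options $r^d,\dots,r^{o(r)}$, that is $o(r)-d+1$ of them, lie in the reduct. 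The remaining possibility $X\models b(r)$ with $X\not\models h(r)$ does not arise, since then $X$ would violate the hard constraint of $\tau^\times_3$ associated with $r$ and so would not be a soft stable model of $\lpodtrans{P}$; this is precisely where $\csm{P}=\ssm{\lpodtrans{P}}$ enters.

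Given the count, the single-rule identity is pure arithmetic: every rule of $\lpodtrans{\{r\}}^s$ has weight $1$ and there are $o(r)$ of them, so $\ln W(\lpodtrans{\{r\}}^s)=o(r)$; meanwhile $\ln W_s(\lpodtrans{\{r\}},X)$ equals $o(r)$ in the first case above and $o(r)-d+1$ in the second. In the first case $o(r)-o(r)+1=1=deg(r,X)$, and in the second $o(r)-(o(r)-d+1)+1=d=deg(r,X)$, which is the claimed equality. For the whole-program formula, the soft part of $\lpodtrans{P}$ is the disjoint union $\bigcup_{r\in P^o}\lpodtrans{\{r\}}^s$, and the same decomposition survives after forming the \lpmln reduct with respect to $X$, so $\ln W(\lpodtrans{P}^s)=\sum_{r\in P^o}\ln W(\lpodtrans{\{r\}}^s)$ and $\ln W_s(\lpodtrans{P},X)=\sum_{r\in P^o}\ln W_s(\lpodtrans{\{r\}},X)$; summing the per-rule identity over $r\in P^o$ and using $deg(P,X)=\sum_{r\in P^o}deg(r,X)$ gives the result.

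I expect the only delicate point to be the case analysis of the second paragraph --- in particular excluding $X\models b(r)$ with $X\not\models h(r)$, and checking carefully that the regular rules $P^r$ and the hard constraints in $\tau^\times_3$ neither contribute to the reduct's soft weight nor alter which interpretations are soft stable. Once that is pinned down, everything else is counting and arithmetic.
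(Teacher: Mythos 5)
Your argument is correct and is essentially the reasoning the paper leaves implicit: the paper states this proposition without proof, deferring the correspondence $\csm{P}=\ssm{\lpodtrans{P}}$ to Wang et al.\ (2018), and your count of which options survive the \lpmln reduct (all $o(r)$ of them when $X\not\models b(r)$, exactly $r^{deg(r,X)},\dots,r^{o(r)}$ when $X\models b(r)$ and $X\models h(r)$, the remaining case being excluded by the hard constraints of $\tau^\times_3$) is precisely the computation the paper carries out inside its proof of Theorem~\ref{thm:lpod-se-lpmln}. One small caveat: your ``disjoint union'' step tacitly assumes that distinct rules of $P^o$ never generate syntactically identical options (otherwise the set-based program collapses duplicates and the whole-program weight no longer factors), an assumption the paper makes silently as well.
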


By the translation $\tau^\times$, the strong equivalence for LPOD programs can be investigated in \lpmlnend, which is shown as follow.

\begin{thm}
\label{thm:lpod-se-lpmln}
Two LPOD programs $P$ and $Q$ are strongly equivalent, iff their \lpmln translations $\lpodtrans{P}$ and $\lpodtrans{P}$ are semi-strongly equivalent under the SSM semantics. 
\end{thm}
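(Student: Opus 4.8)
The plan is to reduce the statement to an equality of SE-model sets. By Lemma~\ref{lem:lpod-se-faber}, $P \equiv^\times_{s} Q$ iff $\cse{P} = \cse{Q}$, and by Lemma~\ref{lem:semi-pse-ssm}, $\lpodtrans{P}$ and $\lpodtrans{Q}$ are semi-strongly equivalent under the SSM semantics iff $\sse{\lpodtrans{P}} = \sse{\lpodtrans{Q}}$. Since $\tau^\times$ introduces no fresh atoms (the $k$-th options and the constraints in $\tau^\times_3(P^o)$ only mention literals already occurring in the LPOD rule), the relevant universes of literals for $P$ and $\lpodtrans{P}$ coincide. Hence it suffices to prove that for every LPOD program $P$, $\cse{P} = \sse{\lpodtrans{P}}$; the theorem then follows immediately. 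Proposition~\ref{prop:lpmln-lpod} already matches the candidate/soft stable models, i.e.\ the total SE-models, but the argument below subsumes it.

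First I would show that, for every interpretation $Y$, $Y \models SP(P,Y)$ iff $Y \models \lpodtrans{P}^h$, where $\lpodtrans{P}^h = \tau^\times_1(P^r) \cup \tau^\times_3(P^o)$. On $P^r$ both conditions agree, so the work is on each rule $r \in P^o$, and I would do a three-way case analysis driven by $deg(r,Y)$: (i) $Y \not\models b(r)$, where $deg(r,Y) = 1$ and the body of the first option fails in $Y$, so $Y$ satisfies $r^1$ and the $\tau^\times_3$ constraint vacuously; (ii) $Y \models b(r)$ and $h(r) \cap Y \neq \emptyset$, where $deg(r,Y) = \min\{k \mid h_k \in Y\}$, so $Y \models r^{deg(r,Y)}$ and the $\tau^\times_3$ constraint is satisfied because some $h_i \in Y$; (iii) $Y \models b(r)$ and $h(r) \cap Y = \emptyset$, where $deg(r,Y) = o(r)$, $Y \not\models r^{o(r)}$, and the $\tau^\times_3$ constraint is violated. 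This settles the equivalence and, in particular, matches the total SE-models $(Y,Y)$ on both sides.

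Next, assuming $Y$ satisfies these equivalent conditions, I would compare the two reducts and show that $X \models SP(P,Y)^Y$ iff $X \models \lglred{\lpodtrans{P}}{Y}$ for every $X \subseteq Y$. On the \lpmlnend\ side one computes $\lpodtrans{P}_Y$ (the rules satisfied by $Y$), passes to the unweighted counterpart, and takes the GL-reduct w.r.t.\ $Y$. The $\tau^\times_1(P^r)$ part yields exactly $(P^r)^Y$, which also sits inside $SP(P,Y)^Y$. For each $r \in P^o$ with $Y \models b(r)$, among the options $\{r^k \mid 1 \le k \le o(r)\}$ only those with $b^-(r) \cap Y = \emptyset$ and $h_i \notin Y$ for all $i < k$ survive the GL-reduct, and of those only $k = deg(r,Y)$ is retained by the \lpmlnend-reduct (for $k < deg(r,Y)$ the body of $r^k$ holds in $Y$ while $h_k \notin Y$, so $r^k \notin \lpodtrans{P}_Y$); this surviving option reduces to $h_{deg(r,Y)} \leftarrow b^+(r)$, precisely what $SP(P,Y)^Y$ contributes for $r$. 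When $Y \not\models b(r)$, both reducts contribute only rules whose positive body is not contained in $Y$, hence trivially satisfied by any $X \subseteq Y$; likewise, under $Y \models \lpodtrans{P}^h$, the $\tau^\times_3$ constraints reduce to constraints with bodies not contained in $Y$. Combining these observations with the first step gives $\cse{P} = \sse{\lpodtrans{P}}$, and the theorem follows.

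I expect the main obstacle to be the reduct bookkeeping in the third step: carefully verifying that, after the composite \lpmlnend-reduct-then-GL-reduct process, precisely the option at degree $deg(r,Y)$ remains effective while all other options and all $\tau^\times_3$ constraints become inert for subsets of $Y$, so that $\lglred{\lpodtrans{P}}{Y}$ agrees with $SP(P,Y)^Y$ modulo trivially satisfied rules. A secondary point requiring care is the exact reading of the LPOD SE-model definition (which split program is intended); I would pin it down to the minimal split program $SP(P,Y)$, for which the equivalence above is clean.
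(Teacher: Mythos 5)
Your proposal is correct and follows essentially the same route as the paper's proof: both reduce the theorem, via Lemma~\ref{lem:lpod-se-faber} and Lemma~\ref{lem:semi-pse-ssm}, to showing $\cse{P} = \sse{\lpodtrans{P}}$, and then argue rule by rule, using the satisfaction degree $deg(r,Y)$, that the minimum split program reduct $SP(P,Y)^Y$ and the composite reduct $\lglred{\lpodtrans{P}}{Y}$ are satisfied by exactly the same subsets $X \subseteq Y$. Your explicit treatment of the case $Y \models b(r)$, $h(r) \cap Y = \emptyset$ (where both sides fail to be SE-models) is, if anything, slightly more complete than the paper's case split, but the underlying argument is the same.
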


\begin{proof}
We prove Theorem \ref{thm:lpod-se-lpmln} by showing that for an LPOD program $P$ and its \lpmln translation $\lpodtrans{P}$, $\cse{P} = \sse{\lpodtrans{P}}$. 
Suppose $(X, Y)$ is an SE-interpretation such that $Y \models P$, 
by the definition of $\tau^\times$, it is easy to check that $Y \models \tau_1^\times(P) \cup \tau_3^\times(P)$ i.e. $Y \models \lpodtrans{P}^h$. 
For an LPOD rule $r$, there are two cases: (1) $Y \not\models b(r)$; (2) $Y \models b(r)$ and $Y \models h(r)$. 

\textbf{Case 1.}
If $Y \not\models b(r)$, the satisfaction degree is $deg(r, I) = 1$, 
therefore, the minimum split program of $\{r\}$ w.r.t. $Y$ is $SP(\{r\}, Y) = \{r^1\}$ 
and the \lpmln reduct of $\lpodtrans{\{r\}}$ w.r.t. $Y$  is  $\left(\lpodtrans{\{r\}}\right)_Y = \{ 1 ~:~ r^i ~|~ 1 \leq 
i \leq o(r)\} $. 
If $Y \cap b^-(r) \neq \emptyset$, then we have $SP(\{r\}, Y)^Y = \emptyset$, and $\lglred{\left(\lpodtrans{\{r\}}\right)}{Y} = \emptyset$, 
which means for any proper subset $X$ of $Y$, $X \models SP(\{r\}, Y)^Y$ and $X \models \lglred{\left(\lpodtrans{\{r\}}\right)}{Y}$. 
If $b^-(r) \cap Y = \emptyset$, then we have $b^+(r) \not\subseteq Y$, $SP(\{r\}, Y)^Y = \{ h(r^1) \leftarrow b^+(r).  \}$.
For the \lpmln reduct $\left(\lpodtrans{\{r\}}\right)_Y$, there exists an integer $k$ such that $\lglred{\left(\lpodtrans{\{r\}}\right)}{Y} = \{ h(r^i) \leftarrow b^+(r). ~|~ 1 \leq i \leq k \}$. 
Note that for an LPOD rule of the form \eqref{eq:ordered-disjunction-rule}, if $h(r) \cap Y \neq \emptyset$, then $ k = min\{ j ~|~ h_j \in h(r) \cap Y\} $, otherwise, $k = o(r)$. 
It is easy to check that for any proper subset $X$ of $Y$, $X \models SP(\{r\}, Y)^Y$ and $X \models \lglred{\left(\lpodtrans{\{r\}}\right)}{Y}$. 

\textbf{Case 2.}
If $Y \models b(r)$ and $Y \models h(r)$, suppose satisfaction degree is $deg(r, I) = k$, 
therefore, the minimum split program of $\{r\}$ w.r.t. $Y$ is $SP(\{r\}, Y) = \{r^k\}$ 
and the \lpmln reduct of $\lpodtrans{\{r\}}$ w.r.t. $Y$  is  $\left(\lpodtrans{\{r\}}\right)_Y = \{ 1 ~:~ r^i ~|~ k \leq 
i \leq o(r)\} $. 
Since $Y \models b(r)$, we have $SP(\{r\}, Y)^Y = \{ h(r^k) \leftarrow b^+(r).  \}$ and $\lglred{\left(\lpodtrans{\{r\}}\right)}{Y} = \{ h(r^k) \leftarrow b^+(r). \}$. 
Therefore, for any proper subset $X$ of $Y$, $X \models SP(\{r\}, Y)^Y$ iff $X \models \lglred{\left(\lpodtrans{\{r\}}\right)}{Y}$. 

Combining above results, we have shown that an SE-interpretation $(X, Y)$ is an SE-model of LPOD program $P$ iff $(X, Y)$ is a soft SE-model of the translation $\lpodtrans{P}$, 
therefore, Theorem \ref{thm:lpod-se-lpmln} is proven. 
\end{proof}
\begin{exa}
\label{ex:lpod-se}
Consider LPOD programs $P$ 
\begin{eqnarray}
& a \times b. \\
& b \times a. \\
& c. 
\end{eqnarray}
and $Q$ 
\begin{eqnarray}
& a \times b. \\
& b \times c. \\
& c. 
\end{eqnarray}
By the definition of \lpmln translation $\tau^\times$ for LPOD, 
$\lpodtrans{P}$ is 
\begin{eqnarray}
 \alpha &:& \leftarrow not ~ a, ~not~ b. \\
 1 &:& a.  \\
 1 &:& b \leftarrow not~ a. \\
 1 &:& b.  \\
 1 &:& a \leftarrow not~ b. \\
 \alpha &:& c. 
\end{eqnarray}
and $\lpodtrans{Q}$ is 
\begin{eqnarray}
\alpha &:& \leftarrow not ~ a, ~not~ b. \\
1 &:& a.  \\
1 &:& b \leftarrow not~ a. \\
\alpha &:& \leftarrow not ~ b, ~not~ c. \\
1 &:& b.  \\
1 &:& c \leftarrow not~ b. \\
\alpha &:& c. 
\end{eqnarray}
It is easy to check that the programs $P$, $Q$, $\lpodtrans{P}$, and $\lpodtrans{Q}$ have the same (soft) SE-models, and the satisfaction / weight degrees of each SE-models are shown in Table \ref{tab:lpod-se}. 
Therefore, $P$ and $Q$ are strongly equivalent, and $\lpodtrans{P}$ and $\lpodtrans{Q}$ are semi-strongly equivalent. 
\begin{table}
	\centering
	\caption{Computing Results in Example \ref{ex:lpod-se}}
	\label{tab:lpod-se}
  \def\arraystretch{1.3} 
	\begin{tabular}{ccccc}
		\hline
		SE-model (X, Y) & $(\{a, c\}, \{a, c\})$ & $(\{b, c\}, \{b, c\})$ & $(\{a, b, c\}, \{a, b, c\})$ \\ 
		\hline
		$deg(P, Y)$ & $3$ & $3$ & $2$ \\ 
		$deg(P, Y)$ & $3$ & $3$ & $2$ \\ 
		$W_s(\lpodtrans{P}, Y)$ & $e^{3}$ & $e^{3}$ & $e^{4}$ \\ 
		$W_s(\lpodtrans{Q}, Y)$ & $e^{3}$ & $e^{3}$ & $e^{4}$ \\ 
		\hline
	\end{tabular}
\end{table}
\end{exa}

Besides non-preferential strong equivalence, Faber et al. also investigate the strong equivalence under the cardinality, inclusion, and pareto preference criteria, 
which cannot be investigated in \lpmln without introducing new notions. 
But, by Proposition \ref{prop:lpmln-lpod}, the strong equivalence under the penalty-sum criterion can be characterized in \lpmlnend, 
which has not been discussed by Faber et al. 

\begin{defi}
Two LPOD programs $P$ and $Q$ are ps-strongly equivalent, denoted by $P \equiv_{s, ps}^\times Q$, 
if for any LPOD program $R$, $\csm{P \cup R} = \csm{Q \cup R}$, and for any candidate stable model $X$ and $Y$, 
$deg(P \cup R, X) - deg(P \cup R, Y) = deg(Q \cup R, X) - deg(Q \cup R, Y)$. 
\end{defi}

\begin{thm}
\label{thm:lpod-psse-lpmln}
Two LPOD programs $P$ and $Q$ are strongly equivalent, iff their \lpmln translations $\lpodtrans{P}$ and $\lpodtrans{P}$ are sp-strongly equivalent. 
\end{thm}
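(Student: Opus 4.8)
The plan is to mirror the strategy of Theorem~\ref{thm:aspwc-se}: reduce the sp-strong equivalence of the \lpmln translations to a condition on soft SE-models via Theorem~\ref{thm:pse-soft-sm}, reduce the ps-strong equivalence of $P$ and $Q$ to non-preferential strong equivalence plus the degree-difference clause, and connect the two via Proposition~\ref{prop:lpmln-lpod}. Two structural observations drive everything. First, $\tau^\times$ is additive, $\lpodtrans{P \cup R} = \lpodtrans{P} \cup \lpodtrans{R}$, since none of $\tau^\times_1,\tau^\times_2,\tau^\times_3$ introduces atoms shared across rules of distinct programs. Second, the weight of a soft SE-model depends only on its total component, $W_s(\lpodtrans{P},(X,Y)) = W(\lpodtrans{P}^s_Y)$, which simply counts the options of $P^o$ satisfied by $Y$; combining this with the case analysis already carried out in the proof of Theorem~\ref{thm:lpod-se-lpmln} (for $Y \models \lpodtrans{P}^h$, every $r \in P^o$ with $Y \models b(r)$ also has $Y \models h(r)$, so $Y$ violates exactly $deg(r,Y)-1$ of the options $r^1,\dots,r^{o(r)}$) yields $\ln W_s(\lpodtrans{P},(X,Y)) = c_P - deg(P,Y)$ with $c_P = \sum_{r \in P^o}(o(r)+1)$. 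This is the ``degree form'' of Proposition~\ref{prop:lpmln-lpod}, now valid for every $Y$ with $(Y,Y) \in \sse{\lpodtrans{P}}$, not just candidate stable models.

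For the only-if direction I would start from $P \equiv^\times_{s,ps} Q$. Since $\equiv^\times_{s,ps}$ implies $\equiv^\times_{s}$, Lemma~\ref{lem:lpod-se-faber} gives $\cse{P} = \cse{Q}$, and the proof of Theorem~\ref{thm:lpod-se-lpmln} (which establishes $\cse{P} = \sse{\lpodtrans{P}}$) then gives $\sse{\lpodtrans{P}} = \sse{\lpodtrans{Q}}$, the first half of the Theorem~\ref{thm:pse-soft-sm} condition. For the weight half, by the degree form it suffices to show $deg(P,Y) - deg(Q,Y)$ is constant over all $Y$ with $(Y,Y) \in \sse{\lpodtrans{P}}$. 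Given two such $Y_1,Y_2$, I would build an LPOD program $R$ with empty ordered part whose regular part uses a fresh disjunctive choice ``$p \vee q$'' together with propagation rules $\{y \leftarrow p \mid y \in Y_1\} \cup \{y \leftarrow q \mid y \in Y_2\}$ and constraints forbidding, in the $p$-branch, any universe atom outside $Y_1$, and in the $q$-branch any universe atom outside $Y_2$; one checks that $Y_1 \cup \{p\}$ and $Y_2 \cup \{q\}$ are exactly the candidate stable models of $P \cup R$, hence of $Q \cup R$. Because $R^o = \emptyset$ we have $deg(P \cup R,\cdot) = deg(P,\cdot)$, and because $p,q \notin at(P)$ we have $deg(P, Y_i \cup \{p\}) = deg(P,Y_i)$; applying the degree-difference clause of $\equiv^\times_{s,ps}$ to $Y_1 \cup \{p\}$ and $Y_2 \cup \{q\}$ and cancelling the (zero) $R$-contributions gives $deg(P,Y_1) - deg(P,Y_2) = deg(Q,Y_1) - deg(Q,Y_2)$, as required.

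For the if direction I would start from $\lpodtrans{P} \equiv_{s,sp} \lpodtrans{Q}$ and take an arbitrary LPOD program $R$. By additivity, instantiating the sp-strong equivalence with $N = \lpodtrans{R}$ gives $\ssm{\lpodtrans{P \cup R}} = \ssm{\lpodtrans{Q \cup R}}$ and equality of the corresponding soft probability degrees. Proposition~\ref{prop:lpmln-lpod} turns the first equality into $\csm{P \cup R} = \csm{Q \cup R}$; and since for each candidate stable model $X$ the soft probability degree is $W_s(\lpodtrans{P \cup R},X)$ over a common normalising sum, equality of soft probability degrees forces $W_s(\lpodtrans{P \cup R},X) = c \cdot W_s(\lpodtrans{Q \cup R},X)$ for a single constant $c$. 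Taking logarithms and applying the degree form shows $deg(P \cup R,X) - deg(Q \cup R,X)$ is independent of $X$, which is exactly the degree-difference clause of ps-strong equivalence; hence $P \equiv^\times_{s,ps} Q$.

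The main obstacle is the construction in the only-if direction: realising two arbitrary SE-feasible interpretations as the candidate stable models of one common LPOD extension while keeping its ordered part empty, and verifying that the branch-splitting extension has exactly those two candidate stable models with no spurious ones arising from the disjunctive fact, the propagation rules, or the interaction with $P^r$ and the minimum split programs $SP(P,\cdot)$. An alternative that sidesteps this would be to first isolate an LPOD analogue of Lemma~\ref{lem:se-aspwc-eiter} --- an SE-model plus degree-difference characterisation of $\equiv^\times_{s,ps}$ --- and then merely match conditions, exactly as the proof of Theorem~\ref{thm:aspwc-se} does for \aspwcend.
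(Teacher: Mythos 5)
Your proof is correct and follows the same route the paper indicates: the paper omits the argument, saying only that it is straightforward from Theorem~\ref{thm:lpod-se-lpmln} and Proposition~\ref{prop:lpmln-lpod}, and your two directions are exactly the worked-out version of that reduction, via Theorem~\ref{thm:pse-soft-sm}, the modularity of $\tau^\times$, and the observation that $\ln W_s$ of a total soft SE-model differs from $-deg$ by a program-dependent constant. Your only substantive addition is the branch-splitting extension in the only-if direction, which supplies the step the paper's one-line justification glosses over (realizing two arbitrary interpretations satisfying the hard part as candidate stable models of a common LPOD extension with empty ordered part so that the degree-difference clause can be applied), and that construction checks out.
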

The proof of Theorem \ref{thm:lpod-psse-lpmln} is straightforward based on Theorem \ref{thm:lpod-se-lpmln} and Proposition \ref{prop:lpmln-lpod}, 
therefore, we omit the detail for brevity.
Theorem \ref{thm:lpod-psse-lpmln} shows that the ps-strong equivalence for LPOD can be characterized by the sp-strong equivalence in \lpmlnend. 
Recall LPOD programs $P$ and $Q$ in Example \ref{ex:lpod-se}, 
it is easy to check that $P$ and $Q$ are ps-strongly equivalent by Theorem \ref{thm:lpod-psse-lpmln}.

Now, we have shown that the strong equivalences for LPOD programs can be characterized by translating them into \lpmln programs. 
Actually, we can prove the only-if part of Theorem \ref{thm:lpod-se-lpmln} directly, just as the method used in proving Lemma \ref{lem:lpmln-sm-strong-equiv} and Lemma \ref{lem:semi-pse-ssm}. 
By the definition of $\tau^\times$, regular ASP rules in an LPOD program $P$ are turned into hard rules by directly assigning a weight ``$\alpha$'', 
therefore, the direct proof of only-if part of Theorem \ref{thm:lpod-se-lpmln} is exactly the same as the corresponding proof of Lemma \ref{lem:semi-pse-ssm}.

\subsection{Discussion}

In the last few years, researchers have investigated the relationships among \lpmln and several logic formalisms such as ASP, MLN, LPOD, P-log \cite{Baral2009Probabilistic}, ProbLog \cite{DeRaedt2007problog} etc. 
On the one hand, these results provide a translation based approach to implementing \lpmln solvers, 
which is used in several implementations such as LPMLN2ASP, LPMLN2MLN \cite{Lee2017ComputingLpmln}, and LPMLN-Models \cite{Wu2018LPMLNModels}. 
On the other hand, these relationships help us to investigate new properties of a logic formalism by using existing results in other logic formalisms, 
such as the investigations of splitting set theorems \cite{Wang2018Splitting} and weight learning of \lpmln \cite{Lee2018WeightLearning}. 

Under the circumstance, 
one may ask whether the strong equivalence of other formalisms can be investigated in \lpmlnend. 
From the discussion of \aspwc and LPOD, we can observe that the key of the question is the translation from a logic program to an \lpmln program. 
There are some common properties of the translations $\tau^c$ and $\tau^\times$ used for \aspwc and LPOD.  
Firstly, a translation should be \textit{semantics-preserving}, i.e. original program and the translated program have the same inference results. 
Obviously, this is a necessary property for a translation, 
and both $\tau^c$ and $\tau^\times$ are semantics-preserving. 
Secondly, a translation should be \textit{modular}. 
A translation $\tau^*$ is modular, if for logic programs $P$ and $Q$, we have $\tau^*(P \cup Q) = \tau^*(P) \cup \tau^*(Q)$. 
By investigating the strong equivalences for \aspwc and LPOD, we have shown that for a logic program $P$, if there is a modular \lpmln translation $\tau'$ such that 
the semantics of $P$ can be characterized by $\tau'(P)$, 
then we can at least find a sufficient condition for characterizing the strong equivalence of the program $P$. 
In other words, if we cannot find a modular translation for the program, 
then its strong equivalence may not be investigated via using the results obtained in this paper. 

In addition, the \lpmln translations $\tau^c$ and  $\tau^\times$ for \aspwc and LPOD have other good properties.
For example, for a logic program $P$ and a translation $\tau^*(P)$, $\tau^*$ is called \textit{fixed}, if $\tau^*(P)$ does not introduce new atoms, i.e. $lit(P) = lit(\tau^*(Q))$. 
It is easy to check that both the translations $\tau^c$ and  $\tau^\times$ are fixed. 
But it is still unknown whether the property affects the translation based investigation of strong equivalences.  
For P-log programs, there exists a translation $\tau^p$ from P-log to \lpmln \cite{Lee2017lpmln}, 
and there also exists a translation $\tau^m$ from \lpmln to P-log \cite{Balai2016realtionship}. 
Both of $\tau^p$ and $\tau^m$ are semantics-preserving and modular, but they are not fixed. 
Intuitively, since there is a one-to-one map between P-log programs and \lpmln programs, 
it is very likely the strong equivalence for P-log can be characterized by the strong equivalence for \lpmlnend. 

In a word, our results provide a method to investigate the strong equivalence for some logic formalisms by translating them into \lpmln programs. 
And for the method, 
some properties of translations are important, such as the semantics-preserving and modular properties. 
But it is unclear whether newly introduced literals in the translation affect the translation based investigation of strong equivalences.

\section{Simplifying \texorpdfstring{\lpmln }{LPMLN }Programs}
\label{sec:simplify}
Program simplification is an important technology for improving the implementations of logic programs. 
In this section, we investigate the simplification of \lpmln programs via introducing the notions of semi-valid and valid rules, 
which are two kinds of redundant \lpmln rules based on the semi-strong and p-strong equivalences. 
Firstly, we present an algorithm to simplify and solve \lpmln programs by eliminating these redundant \lpmln rules.  
Then, to decide the redundant rules efficiently, we present some syntactic conditions that characterize the semi-valid and valid \lpmln rules. 

According to whether an \lpmln rule is semi-strongly or p-strongly equivalent to the empty program $\emptyset$, the notions of semi-valid and valid \lpmln rules are defined as follows. 
\begin{defi}
\label{def:valid-rule}
An \lpmln rule $w:r$ is called semi-valid, if $w:r$ is semi-strongly equivalent to $\emptyset$; 
the rule is called valid, if $w:r$ is p-strongly equivalent to $\emptyset$.
\end{defi}

Obviously, a valid \lpmln rule can be eliminated from any \lpmln programs, 
while a semi-valid \lpmln rule cannot. 
By the definition, eliminating a semi-valid \lpmln rule does not change the stable models of original programs, 
but changes the probability distributions of the stable models. 
Furthermore, it may change the probabilistic stable models of original programs, 
which can be observed from Example \ref{ex:sm-valid-rules}.

\begin{exa}
\label{ex:sm-valid-rules}
Consider three \lpmln programs $P = \{ \alpha : a \leftarrow a.  \}$, $Q = \{ \alpha : \leftarrow a.  \}$, and $R = \{1 : a. \}$. 
It is easy to check that rules in $P$ and $Q$ are valid and semi-valid, respectively.  
Table \ref{tab:sm-valid-rules} shows the stable models and their probability degrees of \lpmln programs $P \cup R$, $Q \cup R$, and $R$. 
It is easy to check that eliminating the rule of $P$ from program $P \cup R$ does not affect the inference results of $P \cup R$. 
By contrast, eliminating the rule of $Q$ from $Q \cup R$ changes both of the MAP and MPD inference results of $Q \cup R$.

\begin{table}
\centering
\caption{Computing Results in Example \ref{ex:sm-valid-rules}}
\label{tab:sm-valid-rules}
\def\arraystretch{1.3} 
\begin{tabular}{cccc}
\hline
Stable Model $S$ & $Pr(P \cup R, S)$ & $Pr(Q \cup R, S)$ & $Pr(R, S)$  \\ 
\hline
$\emptyset$ & $0.27$ & $1$ & $0.27$ \\ 
$\{a\}$ & $0.73$ & $0$ & $0.73$ \\ 
\hline 
\end{tabular}
\end{table}
\end{exa}

Algorithm \ref{alg:solver} provides a framework to simplify and solve \lpmln programs based on the notions of semi-valid and valid \lpmln rules. 
Firstly, simplify an \lpmln program $P$ by removing all semi-valid and valid rules (line 2 - 8). 
Then, compute the stable models of the simplified \lpmln program via using some existing \lpmln solvers, such as LPMLN2ASP, LPMLN2MLN \cite{Lee2017ComputingLpmln}, and LPMLN-Models \cite{Wu2018LPMLNModels} ect. 
Finally, compute the probability degrees of the stable models w.r.t. the simplified program and all semi-valid rules (line 9 - 12). 
The correctness of the algorithm can be proven by corresponding definitions.

\begin{algorithm}
\caption{Simplify and Solve \lpmln Programs}
\KwIn{an \lpmln program $P$}
\KwOut{stable models of $P$ and their probability degrees}
\label{alg:solver}
$Q = \emptyset$, $P' = P$ \;
\ForEach{$w:r \in P$}{ 
	\uIf{$w:r$ is valid}{
		$P' = P' -\{w:r\}$ \;
	}
	\uElseIf{$w:r$ is semi-valid}{
		$Q = Q \cup \{w:r\}$ \;
		$P' = P' -\{w:r\}$ \;
	}
}

$\lsm{P} = call\text{-}lpmln\text{-}solver(P')$\;

\ForEach{$X \in \lsm{P}$}{
	$W'(P, X) = exp\left(\sum_{w:r \in P' \cup Q \text{ and } X \models w:r} ~w \right)$ \;
}

Compute probability degrees for each stable model $X$ by Equation \eqref{eq:probability-sm} and $W'(P, X)$\;
\Return{$\lsm{P}$ and corresponding probability degrees}
\end{algorithm}

\begin{table}
\centering
\caption{Syntactic Conditions for Valid and Semi-valid \lpmln Rules}
\label{tab:syntactic-conditions}
\def\arraystretch{1.3} 
\begin{tabular}{ccc}
\hline
Name & Definition & Strong Equivalence \\
\hline
TAUT & $h(r) \cap b^+(r) \neq \emptyset$ & p, semi\\
CONTRA & $b^+(r) \cap b^-(r) \neq \emptyset$ & p, semi\\
CONSTR1 & $h(r) = \emptyset$ & semi \\
CONSTR2 & $h(r) \subseteq b^-(r)$ &  semi \\
CONSTR3 & $~~~~h(r) = \emptyset$, $b^+(r) = \emptyset$, and $b^-(r) = \emptyset~~~~$ & p, semi \\
\hline
\end{tabular}
\end{table}

In Algorithm \ref{alg:solver}, a crucial problem is to decide whether an \lpmln rule is valid or semi-valid. 
Theoretically, it can be done by checking the SE-models of each rule. 
However, the model-theoretical approach is highly complex in computation. 
Therefore, we investigate the syntactic conditions for the problem. 
Table \ref{tab:syntactic-conditions} shows five syntactic conditions for a rule $r$, 
where TAUT and CONTRA have been introduced to investigate the program simplification of ASP \cite{Osorio2001Equivalence,Eiter2004SimplifyingLP}, 
CONSTR1 means the rule $r$ is a constraint, and CONSTR3 is a special case of CONSTR1.
Rules satisfying CONSTR2 is usually used to eliminate constraints in ASP. 
For example, rule ``$\leftarrow a. $'' is equivalent to rule ``$p \leftarrow a, ~not ~p.$'', if the atom $p$ does not occur in other rules. 
Based on these conditions, the following theorems provide the characterizations for semi-valid and valid \lpmln rules.

\begin{thm}
\label{thm:semi-valid}
An \lpmln rule $w:r$ is semi-valid, iff the rule satisfies one of TAUT, CONTRA, CONSTR1, CONSTR2, and CONSTR3. 
\end{thm}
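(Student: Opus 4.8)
The plan is to reduce semi-validity to a statement purely about SE-models via Lemma~\ref{lem:lpmln-sm-strong-equiv}, and then carry out a short case analysis. Since the GL-reduct of the empty program is empty, $\lse{\emptyset}$ consists of \emph{all} SE-interpretations $(X,Y)$ with $X\subseteq Y$; hence $w:r$ is semi-valid iff every SE-interpretation is an SE-model of the one-rule program $\{w:r\}$. First I would unfold the transformation $\lglred{\{w:r\}}{Y}$ step by step: it is the empty program unless $Y\models r$ \emph{and} $b^-(r)\cap Y=\emptyset$, in which case it is the single negation-free rule $h(r)\leftarrow b^+(r)$. Consequently $(X,Y)$ fails to be an SE-model of $\{w:r\}$ precisely when the following ``bad pair'' conditions all hold: $Y\models r$, $b^-(r)\cap Y=\emptyset$, $b^+(r)\subseteq X$, and $h(r)\cap X=\emptyset$. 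So the theorem amounts to showing that such a bad pair exists iff $r$ violates all four of TAUT, CONTRA, CONSTR1, CONSTR2.

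For the ``if'' direction I would show that each listed condition separately rules out a bad pair. If TAUT holds, a witness of $h(r)\cap b^+(r)$ lies in $X$ (by $b^+(r)\subseteq X$), contradicting $h(r)\cap X=\emptyset$. If CONTRA holds, a witness of $b^+(r)\cap b^-(r)$ lies in $Y$ (as $b^+(r)\subseteq X\subseteq Y$), contradicting $b^-(r)\cap Y=\emptyset$. If CONSTR1 or CONSTR2 holds, then $b^+(r)\subseteq Y$ together with $b^-(r)\cap Y=\emptyset$ gives $Y\models b(r)$, so $Y\models r$ forces some $l\in h(r)\cap Y$; this is impossible when $h(r)=\emptyset$, and when $h(r)\subseteq b^-(r)$ it puts $l\in b^-(r)\cap Y$, again a contradiction. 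Since CONSTR3 is a special case of CONSTR1, this shows every rule matching a table entry is semi-valid.

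For the ``only if'' direction I would argue the contrapositive by an explicit construction. Assume $r$ meets none of the four conditions: $h(r)\cap b^+(r)=\emptyset$, $b^+(r)\cap b^-(r)=\emptyset$, $h(r)\neq\emptyset$, and $h(r)\not\subseteq b^-(r)$. Pick $l\in h(r)\setminus b^-(r)$ and set $X=b^+(r)$, $Y=b^+(r)\cup\{l\}$. Then $X\subsetneq Y$ (properness uses $l\notin b^+(r)$, which is exactly the failure of TAUT), $b^+(r)\subseteq X$ holds trivially, $h(r)\cap X=h(r)\cap b^+(r)=\emptyset$, $b^-(r)\cap Y=\emptyset$ (from the failure of CONTRA together with $l\notin b^-(r)$), and $Y\models b(r)$ with $l\in h(r)\cap Y$, so $Y\models r$. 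Hence $(X,Y)$ is a bad pair and $w:r$ is not semi-valid, completing the argument.

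The main obstacle I anticipate is book-keeping around ill-formed rules in the only-if construction: $X$ and $Y$ must be genuine interpretations (consistent sets of literals), yet if $b^+(r)$ already contains a complementary pair, or if $\neg l\in b^+(r)$ for every eligible $l$, the naive choice above is inconsistent. Such rules are semi-valid for the degenerate reason that their positive body can never be met, so I would either invoke the standing convention that rule bodies are consistent or dispatch these cases separately at the outset; the substantive content is the clean construction above. A secondary, routine hurdle is getting the reduct chain $P\mapsto P_Y\mapsto\overline{P_Y}\mapsto(\overline{P_Y})^Y$ exactly right in the opening case analysis, since every later step depends on that unfolding.
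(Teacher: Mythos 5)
Your proposal is correct and follows essentially the same route as the paper: reduce semi-validity to SE-model comparison with $\emptyset$ via Lemma~\ref{lem:lpmln-sm-strong-equiv}, observe that every SE-interpretation is an SE-model of the empty program, unfold the reduct $\lglred{\{w:r\}}{Y}$ to characterize exactly when $(X,Y)$ fails to be an SE-model, and exhibit such a failing pair when all four syntactic conditions are violated. The only differences are organizational: you give a single uniform witness $(b^+(r),\,b^+(r)\cup\{l\})$ with $l\in h(r)\setminus b^-(r)$ where the paper splits the only-if direction into four cases on the emptiness of $b^+(r)$ and $b^-(r)$, you verify all four ``if'' conditions explicitly where the paper details only TAUT, and you flag the consistency caveat for degenerate rules (e.g.\ inconsistent positive bodies) that the paper's constructions silently assume away.
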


\begin{proof}
The if direction of Theorem \ref{thm:semi-valid} is straightforward. 
For each of TAUT, CONTRA, CONSTR1, CONSTR2, and CONSTR3, it can be verified by Lemma \ref{lem:lpmln-sm-strong-equiv} and the definition of SE-models. 
Here, we only present the proof of TAUT, the proofs of other conditions are similar. 

For the condition TAUT, let $P = \{w:r\}$, to prove the condition, we need to show that $\lse{P} = \lse{\emptyset}$.
It is easy to observe that for any SE-interpretation $(X,Y)$, $(X,Y) \in \lse{\emptyset}$, therefore, we only need to show that for any SE-interpretation $(X,Y)$, $(X,Y) \in \lse{P}$. We use proof by contradiction. 
Assume that there is an SE-interpretation $(X,Y)$ such that $(X,Y) \not\in \lse{P}$, 
by the definition, we have $X \not\models \lglred{P}{Y}$. 
For the program $P = \{w:r\}$, since $h(r) \cap b^+(r) \neq \emptyset$, we have $w:r$ can be satisfied by $Y$, which means $\overline{P_Y} = \overline{P}$. 
For the GL-reduct $\lglred{P}{Y}$, there are two cases: (1) $\lglred{P}{Y} = \emptyset$ and (2) $\lglred{P}{Y} \neq \emptyset$. 

\textbf{Case 1.} If $\lglred{P}{Y} = \emptyset$, we have $Y \cap b^-(r) \neq \emptyset$. 
It is obvious that $X \models \lglred{P}{Y}$, which contradicts with the assumption.

\textbf{Case 2.} If $\lglred{P}{Y} \neq \emptyset$, we have $Y \cap b^-(r) = \emptyset$. 
Since $X \not\models \lglred{P}{Y}$, we have $b^+(r) \subseteq X$ and $h(r) \cap X = \emptyset$, which means $h(r) \cap b^+(r) = \emptyset$. 
It contradicts with the condition TAUT.  

Combining above results, the condition TAUT is proven. 

To prove the only-if direction, we need to show that if a rule $w:r$  satisfies none of TAUT, CONTRA, CONSTR1, CONSTR2, and CONSTR3, 
then an \lpmln program $P = \{w:r\}$ is not semi-strongly equivalent to $\emptyset$, which means 
$h(r) \cap b^+(r) = \emptyset$, $b^+(r) \cap b^-(r) = \emptyset$, $h(r) \neq \emptyset$, and $h(r) \not\subseteq b^-(r)$.
There are four cases according to whether the body of $r$ is empty, 
we need to show that for each case, there exists an SE-interpretation that is not an SE-model of $P$.

\textbf{Case 1.}
If $b^+(r) = \emptyset$ and $b^-(r) = \emptyset$, $r$ is a fact. 
Obviously, $(\emptyset, h(r))$ is not an SE-model of $P$, 
therefore, $P$ is not semi-strongly equivalent to $\emptyset$.

\textbf{Case 2.}
If $b^+(r) = \emptyset$ and $b^-(r) \neq \emptyset$, 
it is easy to check that $(\emptyset, h(r) - b^-(r))$ is not an SE-model of $P$, 
therefore, $P$ is not semi-strongly equivalent to $\emptyset$.

\textbf{Case 3.}
If $b^+(r) \neq \emptyset$ and $b^-(r) = \emptyset$, 
it is easy to check that $(b^+(r), h(r) \cup b^+(r))$  is not an SE-model of $P$, 
therefore, $P$ is not semi-strongly equivalent to $\emptyset$.

\textbf{Case 4.}
If $b^+(r) \neq \emptyset$ and $b^-(r) \neq \emptyset$, 
it is easy to check that $(b^+(r), (h(r) - b^-(r)) \cup b^+(r))$  is not an SE-model of $P$, 
therefore, $P$ is not semi-strongly equivalent to $\emptyset$.

Combining above results, the only-if direction is proven. 
\end{proof}
\begin{thm}
\label{thm:valid}
An \lpmln rule $w:r$ is valid, iff one of the following conditions is satisfied 
\begin{itemize}
	\item[-] rule $w:r$ satisfies one of CONSTR1 and CONSTR2, and $w=0$; or
	\item[-] rule $w:r$ satisfies one of TAUT, CONTRA, and CONSTR3. 
\end{itemize}
\end{thm}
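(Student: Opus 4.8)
The plan is to reduce the statement to the two characterizations already in hand: Theorem~\ref{thm:lpmln-strong-equivalence-w} for p-strong equivalence and Theorem~\ref{thm:semi-valid} for semi-validity. By definition, $w:r$ is valid iff $\{w:r\} \equiv_{s,p} \emptyset$. Instantiating Theorem~\ref{thm:lpmln-strong-equivalence-w} with $P=\{w:r\}$ and $Q=\emptyset$ (note $W(\emptyset,(X,Y))=exp(0)$), this holds iff (i) $\{w:r\}$ is semi-strongly equivalent to $\emptyset$, i.e.\ $w:r$ is semi-valid, and (ii) there are constants $c,k$ such that $W(\{w:r\},(X,Y)) = exp(c+k*\alpha)$ for every $(X,Y)\in\lse{\{w:r\}}$. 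By Theorem~\ref{thm:semi-valid}, condition (i) is equivalent to $r$ satisfying one of TAUT, CONTRA, CONSTR1, CONSTR2; so the remaining task is to single out, among semi-valid rules, those that also satisfy (ii).

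To analyse (ii), I would use that $W(\{w:r\},(X,Y)) = W(\{w:r\}_Y)$ depends only on $Y$ and equals $exp(w)$ when $Y\models r$ and $exp(0)$ when $Y\not\models r$; moreover, by Lemma~\ref{lem:XX-lse-model-LM} every pair $(Y,Y)$ is an SE-model, so $Y$ effectively ranges over all interpretations. Hence (ii) amounts to the symbolic value $W(\{w:r\}_Y)$ being constant in $Y$, which splits into two regimes. In the first regime the value is already constant: either $Y\models r$ for every $Y$, or $r$ is satisfied by no interpretation so that $\{w:r\}_Y=\emptyset$ throughout; then (ii) holds with no restriction on $w$. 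In the second regime there are interpretations $Y_1\models r$ and $Y_2\not\models r$, so $exp(w)=exp(c+k*\alpha)=exp(0)$ is forced, i.e.\ $w=0$; conversely $w=0$ makes $W(\{w:r\}_Y)=exp(0)$ for all $Y$, so (ii) holds. Thus, relative to semi-validity, (ii) holds iff we are in the first regime or $w=0$.

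It then remains to match these regimes with the syntactic conditions. TAUT ($h(r)\cap b^+(r)\neq\emptyset$) makes the body entail the head and CONTRA ($b^+(r)\cap b^-(r)\neq\emptyset$) makes the body unsatisfiable, so in both cases $Y\models r$ for all $Y$ --- first regime, $w$ unrestricted. CONSTR3 makes $r$ the bare constraint, satisfied by no interpretation --- again first regime. For CONSTR1 ($h(r)=\emptyset$) or CONSTR2 ($h(r)\subseteq b^-(r)$), a short check on the reduct shows that whenever $Y\models b(r)$ we have $Y\not\models h(r)$, so $r$ is satisfied exactly by the $Y$ with $Y\not\models b(r)$; unless the rule already falls under CONTRA or CONSTR3, one produces both a $Y_1$ violating the body and a $Y_2$ satisfying it, landing in the second regime, so $w=0$ is needed (and, together with CONSTR1 or CONSTR2, sufficient). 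Collecting the cases yields exactly the two bullets of the theorem.

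The step I expect to be the main obstacle is the only-if direction for CONSTR1/CONSTR2: constructing explicit interpretations $Y_1\models r$ and $Y_2\not\models r$ while being careful about the degenerate boundary cases --- unsatisfiable body, a head that can never be met, or empty positive/negative bodies --- since these are precisely the situations in which $r$ also satisfies CONTRA or CONSTR3, and one must verify that the case split is exhaustive and that the $w=0$ requirement only appears outside those degenerate cases. Here the interpretations already exhibited in the proof of Theorem~\ref{thm:semi-valid} (for instance $\bigl(b^+(r),(h(r)-b^-(r))\cup b^+(r)\bigr)$) can be recycled to supply the needed witnesses.
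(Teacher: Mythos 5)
Your proposal is correct and follows essentially the same route as the paper: validity is reduced via Theorem~\ref{thm:lpmln-strong-equivalence-w} (with $Q=\emptyset$) to semi-validity (Theorem~\ref{thm:semi-valid}) plus constancy of $W(\{w:r\},(X,Y))$ over SE-models, and then each syntactic condition is checked according to whether $Y\models r$ for all $Y$, for no $Y$, or for some but not all $Y$ (forcing $w=0$). The obstacle you flag is resolved in the paper exactly as you anticipate, by exhibiting total SE-models $(X_1,Y_1)$ and $(X_2,Y_2)$ with $Y_1\models r$ and $Y_2\not\models r$ (e.g.\ $Y_2$ built from $b^+(r)$ as in the proof of Theorem~\ref{thm:semi-valid}) in the CONSTR1/CONSTR2, $w\neq 0$ case.
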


\begin{proof}
For the if direction, the proof is divided into three cases, in each case, we prove some conditions in Theorem \ref{thm:valid}. 

\textbf{Case 1.}
If $w:r$ is a rule of the form $CONSTR1$ or $CONSTR2$ and $w=0$. 
By Theorem \ref{thm:semi-valid}, $w:r$ is semi-valid. 
Since $w=0$, we have for any SE-models $(X, Y)$ of $\{w:r\}$, $W(\{w:r\}, (X, Y)) = e^0 = 1$, therefore, $w:r$ is valid. 

\textbf{Case 2.}
If $w:r$ is a rule of the form $TAUT$ or $CONSTRA$, it is easy to check that $w:r$ is semi-valid, and for any SE-model $(X, Y)$, $Y \models w:r$. 
Therefore, we have for any SE-model $(X, Y)$, $W(\{w:r\}, (X,Y)) = e^w$, which means $w:r$ is valid. 

\textbf{Case 3.}
If $w:r$ is a rule of the form  $CONSTR3$, it is easy to check that $w:r$ is semi-valid, and for any SE-model $(X, Y)$, $Y \not\models w:r$. 
Therefore, we have for any SE-model $(X, Y)$, $W(\{w:r\}, (X,Y)) = e^0 = 1$, which means $w:r$ is valid. 

For the only-if direction, we use proof by contradiction. 
Assume $w:r$ is a valid \lpmln rule satisfying none of conditions in Theorem \ref{thm:valid}, i.e. (1) $w:r$ does not satisfy all of $TAUT$, $CONSTRA$, $CONSTR1$, $CONSTR2$, and $CONSTR3$; (2) $w:r$ satisfies $CONSTR1$ or $CONSTR2$, and $w\neq 0$. 

\textbf{Case 1.}
If $w:r$ does not satisfy all of $TAUT$, $CONSTRA$, $CONSTR1$, $CONSTR2$, and $CONSTR3$, by Theorem \ref{thm:semi-valid}, $w:r$ is not semi-valid, therefore, $w:r$ is not valid, which contradicts with the assumption. 

\textbf{Case 2.}
If $w:r$ satisfies $CONSTR1$ or $CONSTR2$, and $w \neq 0$, 
by the definition of SE-models, there exist SE-models $(X_1, Y_1)$ and $(X_2, Y_2)$ such that $Y_1 \models w:r$ and $Y_2 \not\models w:r$, such as $Y_1 = h(r)$ and $Y_2 = b^+(r) - b^-(r)$. 
Therefore, we have $W(\{w:r\}, (X_1, Y_1)) = e^w$ and $W(\{w:r\}, (X_2, Y_2)) = e^0$, but $W(\{w:r\}, (X_1, Y_1)) = W(\{w:r\}, (X_2, Y_2)) = e^0$, 
which means $w:r$ is not valid and contradicts with the assumption. 

Combining above results, Theorem \ref{thm:valid} is proven. 
\end{proof}

Theorem \ref{thm:semi-valid} and Theorem \ref{thm:valid} can be used to check the validity of an \lpmln rule efficiently, 
which makes Algorithm \ref{alg:solver} an alternative approach to enhance \lpmln solvers.  
In addition, Theorem \ref{thm:semi-valid} and Theorem \ref{thm:valid} also contribute to the field of knowledge acquiring. 
On the one hand, although it is impossible that rules of the form TAUT, CONTRA, and CONSTR3 are constructed by a skillful knowledge engineer, 
these rules may be obtained by rule learning. 
Therefore, we can use TAUT, CONTRA, and CONSTR3 as heuristic information to improve the results of rule learning. 
On the other hand, it is worth noting that conditions CONSTR1, CONSTR2, and CONSTR3 mean the only effect of constraints in \lpmln is to change the probability distribution of inference results, 
which can be observed in Example \ref{ex:flattening-extension}. 
Therefore, for the problem modeling in \lpmlnend, 
we can encode objects and relations by \lpmln facts and rules, 
and adjust the certainty degrees of inference results by \lpmln constraints.

\section{Conclusion and Future Work}
\label{sec:conclusion}
In this paper, we investigate the notions of strong equivalences for \lpmln programs and study several properties of the notions from four aspects. 
First of all, we present the notion of p-ordinary equivalence for \lpmlnend, 
that is, two p-ordinarily equivalent \lpmln programs have the same stable models and the same probability distribution of their stable models, 
which means the programs have the same MAP and MPD inference results. 
Based on the p-ordinary equivalence, we present the notion of p-strong equivalence, 
that is, two p-strongly equivalent \lpmln programs are p-ordinary equivalent under any extensions. 
Then, we present a sufficient and necessary condition for characterizing the p-strong equivalence, i.e. the PSE-condition, 
which can be regarded as a generalization of the SE-model approach in ASP. 
Due to hard rules can be violated in the original \lpmln semantics, 
the necessity of the PSE-condition is quite difficult to prove. 
To this end, we introduce the notions of necessary extensions and flattening extensions, 
and show that the PSE-condition is necessary, since we can construct a necessary extension by selecting proper flattening extensions. 
Since the conditions of p-strong equivalence are somewhat strict, 
we study the notion of q-strong equivalence. 
Unfortunately, our results show that the q-strong equivalence is identical to the p-strong equivalence. 
Besides, we present a formal comparison between the notions and characterizations of strong equivalence present in this paper and Lee and Luo's work. 
It shows that the semi-strong equivalence and the structural equivalence are equivalent to each other, 
and the p-strong equivalence and the LL-strong equivalence are also equivalent to each other.

After the characterization, we further study the properties of the p-strong equivalence from four aspects. 
Firstly, we present two relaxed notions of the p-strong equivalence, i.e. the sp-strong and p-uniform equivalence, 
and discuss their characterizations, 
which are useful in many real-world scenarios such as decision making and knowledge graph based applications etc. 
Secondly, we analyze the computational complexities of deciding strong equivalences. 
Our results show that deciding all of the semi-strong, the p-strong, and the sp-strong equivalence is co-NP-complete, 
and deciding the semi-uniform equivalence is in $\Pi_2^p$. 
As a by-product, Lin and Chen's work \cite{Lin2005Discover} implies that our method of proving the complexity results could be used to discover syntactic conditions deciding semi-strong equivalence. 
Thirdly, we investigate the relationships among the strong equivalences for \lpmln and two important extensions of ASP:  \aspwc and LPOD.
Our results show that the strong equivalences for \aspwc and LPOD can be studied by translating them into \lpmlnend, 
which provides a viable way to study the strong equivalences for other logic formalisms such as ProbLog and P-log etc. 
Finally, we investigate the program simplification based on the p-strong equivalence. 
Specifically, we present a characterization for two kinds of redundant \lpmln rules, which can be used to improve \lpmln solvers.

For the future, we plan to continue the unsolved problems in this paper, 
i.e. the characterization of p-uniform equivalence and its computational complexity. 
And we will further investigate the strong equivalences for other logic formalisms by translating them into \lpmln programs. 
In addition, the investigations of approximate strong equivalences and the syntactic conditions deciding strong equivalences are also valuable topics of the field.

\section*{Acknowledgments}
We are grateful to the anonymous referees for their useful comments on the earlier version of this paper. 
The work was supported by the Pre-research Key Laboratory Fund for Equipment (Grant No. 6142101190304)
and the National Key Research and Development Plan of China (Grant No. 2017YFB1002801).

\bibliographystyle{alpha}
\bibliography{lpmln_equivalence_lmcs}

\end{document}